\def\MT@register@subst@font{\MT@exp@one@n\MT@in@clist\font@name\MT@font@list
 \ifMT@inlist@\else\xdef\MT@font@list{\MT@font@list\font@name,}\fi}
\theoremstyle{plain}
\newtheorem{thm}{Theorem}[section]
\newtheorem{cor}[thm]{Corollary}
\newtheorem{lem}[thm]{Lemma}
\theoremstyle{definition}
\theoremstyle{remark}
\newtheorem{remark}[thm]{Remark}
\numberwithin{equation}{section}
\def\VEV#1{\left\langle #1\right\rangle}
\newcommand*{\thalf}{\tfrac{1}{2}}
\newcommand*{\cL}{\mathcal{L}}
\newcommand*{\cM}{\mathcal{M}}
\newcommand*{\Honehat}{\widehat{\Hone}}
\newcommand*{\Lpz}{L_{P_0}}
\newcommand*{\cMt}{\widetilde{\mathcal{M}}}
\newcommand*{\half}{\frac{1}{2}}
\newcommand*{\spin}{\mathcal{S}}
\newcommand*{\spinbasic}{\hat{\mathcal{S}}}
\newcommand*{\qbasic}{\hat{q}}
\newcommand*{\sympmetric}{\begin{pmatrix} 0 & I \\ -I & 0 \end{pmatrix}}
\newcommand*{\Nb}{\bar{N}}
\newcommand*{\gen}{\bm{1_{2}}}
\newcommand*{\cV}{\mathcal{V}}
\newcommand*{\sfA}{\mathsf{A}}
\newcommand*{\ba}{\mathbf{a}}
\newcommand*{\bb}{\mathbf{b}}
\newcommand*{\bbZ}{\mathbb{Z}}
\newcommand*{\bbC}{\mathbb{C}}
\newcommand*{\bbR}{\mathbb{R}}
\newcommand*{\metric}{\bm{g}}
\newcommand*{\dbar}{\bar{\partial}}
\newcommand*{\linebundle}{\mathscr{L}}
\newcommand*{\zbar}{\bar{z}}
\newcommand*{\hb}{\bar{h}}
\newcommand*{\Zsc}{Z_{\text{sc}}}
\newcommand*{\onep}{\mathbf{1}^{\perp}}
\newcommand*{\detz}{\det_{\onep}\Delta_{0}}
\newcommand*{\BEG}[1]{[BEG, #1]}
\newcommand*{\Hone}{H_1(\Sigma,\bbZ)}
\newcommand*{\Hoz}{H^{1,0}(\Sigma)}
\newcommand*{\Hzo}{H^{0,1}(\Sigma)}
\newcommand*{\Lho}{\Lambda^{1/2,1}(\Sigma)}	
\newcommand*{\Lhz}{\Lambda^{1/2,0}(\Sigma)}	
\newcommand*{\Lzh}{\Lambda^{0,1/2}(\Sigma)}	
\newcommand*{\Loz}{\Lambda^{1,0}(\Sigma)}
\newcommand*{\Lzz}{\Lambda^{0,0}(\Sigma)}
\newcommand*{\Ltil}{\tilde{\mathcal{L}}} 
\newcommand{\zanalog}{\mathfrak{z}_{\kappa}}
\newcommand{\zanalogp}{\mathfrak{z}_{\kappa'}}
\newcommand*{\Dtil}{\widetilde{D}}
\newcommand{\Azo}{A^{0,1}}
\newcommand{\Bzo}{B^{0,1}} 
\newcommand*{\nutil}{\tilde{\nu}}
\newcommand*{\Jacz}{J_{0}(\Sigma)}
\newcommand*{\Jacgo}{J_{g-1}(\Sigma)} 
\newcommand*{\cD}{\mathcal{D}}
\newcommand*{\cDt}{\widetilde{\mathcal{D}}}
\newcommand*{\thchar}[2]{\vartheta\left[\begin{smallmatrix} #1 \\ #2
\end{smallmatrix} \right]}
\DeclareMathOperator{\detp}{\det\nolimits^{\prime}}
\DeclareMathOperator{\Sym}{Sym}
\DeclareMathOperator{\DET}{DET} 
\DeclareMathOperator{\Diff}{Diff}
\DeclareMathOperator{\Index}{index} 
\DeclareMathOperator{\Met}{Met}
\DeclareMathOperator{\Sp}{Sp}
\DeclareMathOperator{\cclass}{c}
\DeclareMathOperator{\coker}{coker}
\DeclareMathOperator{\hol}{hol} 
\DeclareMathOperator{\pont}{p} 
\DeclareMathOperator{\real}{re} 
\DeclareMathOperator{\vol}{vol}
\DeclareMathOperator{\Teich}{Teich}
\DeclareMathOperator{\Teichspin}{Teich^{1/2}}
\DeclareMathOperator{\Teichodd}{Teich^{1/2}_{\text{odd}}}
\DeclareMathOperator{\MCG}{MCG}
\DeclareMathOperator{\Torelli}{Torelli}
\DeclareMathOperator{\Torellibig}{\widetilde{Torelli}}
\DeclareMathOperator{\Torelliodd}{Torelli^{1/2}_{\text{odd}}}
\DeclareMathOperator{\Mododd}{\mathcal{M}^{1/2}_{\text{odd}}}
\DeclareMathOperator{\Modspin}{\mathcal{M}^{1/2}}
\DeclareMathOperator{\Modonetwo}{\mathcal{M}_{1,2}}
\DeclareMathOperator{\Mod}{\mathcal{M}}
\DeclareMathOperator{\Id}{Id}
\DeclareMathOperator{\MSO}{MSO}
\DeclareMathOperator{\MU}{MU}
\DeclareMathOperator{\MSpin}{MSpin}
\DeclareMathOperator{\MString}{MString}
\begin{document}
\title{Beyond the String Genus\tnoteref{t1}} 

\author[oa]{Orlando Alvarez\corref{cor1}}
\ead{oalvarez@miami.edu}

\author[ims]{I.M.~Singer}
\ead{ims@math.mit.edu}

\address[oa]{Department  of Physics, University of Miami, P.O. Box 248046,
Coral Gables, FL 33124 USA}

\address[ims]{Department of Mathematics, Massachusetts Institute of Technology,
77 Massachusetts Avenue, Room 2-387, Cambridge, MA 02139 USA}

\tnotetext[t1]{The work of OA was supported by the National Science
    Foundation under grants PHY-0554821 and PHY-0854366.  The work of IMS
    was supported by two DARPA grants through the Air Force Office of
    Scientific Research (AFOSR): grant numbers FA9550-07-1-0555 and
    HR0011-10-1-0054.}
\cortext[cor1]{Corresponding author}

\begin{abstract}
    In an earlier work we used a path integral analysis to propose a
    higher genus generalization of the elliptic genus.  We found a
    cobordism invariant parametrized by Teichmuller space.  Here we
    simplify the formula and study the behavior of our invariant under
    the action of the mapping class group of the Riemann surface.  We
    find that our invariant is a modular function with multiplier just
    as in genus one.
\end{abstract}
\maketitle

\section{Introduction}
\label{sec:intro}

In reference \cite{Alvarez:2001zd} (BEG) we constructed a cobordism
invariant using a supersymmetric sigma model with target space $M$, a
string manifold, \emph{i.e.}, $\tfrac{1}{2} \pont_{1}(M)=0$.  The case
of genus one
\cite{Alvarez:1987de,Alvarez:1987wg,Pilch:1987en,Witten:1987bf} gives
the string genus\footnote{We use the string genus, \emph{a.k.a} the
Witten genus, instead of the elliptic genus used by topologists that
corresponds to a more complicated field theory.} \emph{\`a la}
\cite{Alvarez:1987de,Alvarez:1987wg}.  The cobordism invariant, the
semiclassical approximation $\Zsc(M)$ of the partition function, was a
function on odd spin Teichmuller space $\Teichodd(\Sigma)$.
Teichmuller space $\Teich(\Sigma) = \Met(\Sigma)/\Diff_{0}(\Sigma)$ is
the space of metrics of constant curvature\footnote{The sectional
curvature $k$ is related to the Riemann tensor by $R_{abcd}^{\Sigma} =
k(g_{ac}g_{bd} - g_{ad}g_{bc})$.  The Ricci scalar is given by
$R^{\Sigma} =2k$.  By curvature $-1$ we mean $k=-1$.} $-1$ on a
Riemann surface $\Sigma$ (genus $g>1$) divided by the connected
diffeomorphisms.  Odd spin Teichmuller space $\Teichodd(\Sigma)$ is
the covering space of odd spin structures over the space of metrics,
all divided by $\Diff_{0}(\Sigma)$, a normal subgroup of
$\Diff(\Sigma)$.

The original purpose of the present paper was to refine our previous
results so that our cobordism invariant would be a function over spin
moduli space as opposed to spin Teichmuller space.  We attempted do so
by dividing our previous result by the action of the mapping class
group $\MCG(\Sigma) = \Diff(\Sigma)/\Diff_{0}(\Sigma)$.

We remind the reader of the exact sequence,
\begin{equation}
   1 \to \Torelli(\Sigma) \to \MCG(\Sigma) \to \Sp(2g, \bbZ) \to 
   1,
   \label{eq:MCG-seq}
\end{equation}
where $\Torelli(\Sigma)$ is the normal subgroup of the mapping class
group that is constant on $H_{1}(\Sigma,\bbZ)$ and $\Sp(2g, \bbZ)$ is
the symplectic group.  This suggested that we first examine the action
of $\Torelli(\Sigma)$ on our invariant and afterwards the action of
the symplectic group on what remains. 

\bigskip\sffamily
One of our main results is that $\Zsc$ is a \emph{modular} function on 
$\Teichodd(\Sigma)$ with multiplier (Section~\ref{sec:results}).
\normalfont

\bigskip
We outline  the contents of the paper.

In Section~\ref{sec:torelli} we analyze the effect of the Torelli
group on $\Zsc$; we are left with the effect of $\Sp(2g,\bbZ) =
\MCG(\Sigma)/\Torelli(\Sigma)$ on $\Zsc$. 

In Section~\ref{sec:RS-torsion} we generalize Ray-Singer torsion to
the spinor case.  We do so not only for its intrinsic interest but also
because it has \eqref{eq:det-univ} as a corollary.

In Section~\ref{sec:bosonic-det} we focus on equation \eqref{eq:def-Zsc-1} 
shown below
\begin{equation*}
    \Zsc = \left( \frac{\vol \Sigma}{\detz} \right)^{n}
    \left(\frac{\detp\partial_{\delta}}{\Nb_{\delta}^{2}} \right)^{n}
    \int_{M} \prod_{k=1}^{n} \frac{\zanalog\left(x_{k}
    z(\hb_{\delta}^{2})\right)}{\vartheta[\kappa] \left(x_{k}
    z(\hb_{\delta}^{2})\right)}.
\end{equation*}
In the above $\detz$ is the determinant of the laplacian on the space
of functions of the surface $\Sigma$ orthogonal to the constants.  An
odd spin structure $\delta$ is a square root $\bar{K}^{1/2}_{\delta}$
of the anti-canonical line bundle $\bar{K}$ of $\Sigma$.  The operator
$\partial_{\delta}$ maps $\Lambda^{0,1/2}(\Sigma)$ into
$\Lambda^{1,1/2}(\Sigma)$ and is of index zero.  Note that
$\partial_{\delta}$ gives a family of elliptic operators parametrized
by $\Met(\Sigma)$ and ultimately parametrized by $\Teichodd(\Sigma)$.
It has generically a one dimensional kernel\footnote{When
$\dim\ker\partial_{\delta}>1$, $\detp\partial_{\delta}=0$ and
therefore $\Zsc=0$.  The places where this occurs is a subvariety of
spin Teichmuller space with complex codimension $1$.} generated by
$\hb_{\delta}$.  Following Quillen~\cite{Quillen:det-line-bundle}
$\detp\partial_{\delta}$ is a section of the hermitian determinant
line bundle of the $\partial_{\delta}$ family and $\lvert N_{\delta}
\rvert = \lVert \hb_{\delta} \rVert$.  Riemann surface theory gives us
an explicit expression for the square of the spinor:
\begin{equation*}
    h_{\delta}^{2} = \sum_{k}\frac{\partial
    \vartheta[\delta](0)}{\partial z^{k}} \omega_{k}\,.
\end{equation*}
$(\omega_{1}, \dotsc, \omega_{g})$ is a symplectic standard basis for
the abelian differentials and $\vartheta[\delta](\cdot)$ is the
Riemann theta function with characteristic $\delta$.

We now describe the integral in $\Zsc$.  $\kappa$ is the vector of
Riemann constants and $\vartheta[\kappa](0;\Omega)=0$.  The theta
divisor $\Theta_{\kappa}$ near the origin of the Jacobian $\Jacz$ is
the zero set of $\zanalog$, see
\ref{sec:app-kappa} for the details.  The $x_{k}$ are the
formal eigenvalues of the curvature $2$-form on $M$ whose symmetric
polynomials express the Pontryagin classes of $M$.  Thus the integral
is a cobordism invariant depending on the metric of the
Riemann surface.

In Section~\ref{sec:symplectic}, we review the properties of
$\Sp(2g,\bbZ)$ and compute the transformation properties of various
quantities that appear in $\Zsc$.  Using results in
\cite{Fay:theta,Mumford:1} we find the transformation laws for the
integral term in $\Zsc$ and for $h_{\delta}^{2}$.

In Section~\ref{sec:quarfs} we give an exposition of quadratic
refinements of cup product and its one-to-one correspondence with spin
structures \cite{Atiyah:spin}.  The action of the symplectic group on
spin structures is derived by knowing its action on quarfs. We also 
introduce the subgroup $\Gamma_{1,2} \subset \Sp(2g,\bbZ)$.

In Section~\ref{sec:spinor-det} we show that the symplectic action
lifts to an action on $\Teichodd(\Sigma)$, the covering space of odd
spin structures.  There is a preferred even spin structure
$\spinbasic$ and we find equation \eqref{eq:detp-section}
\begin{equation*}
    \frac{\det\nolimits' \dbar_{\delta}}{N_{\delta}^{2}} =
    \frac{1}{\pi}\; \frac{\det
    \dbar(\spinbasic)}{\vartheta(0)}\,,
\end{equation*}
where $\delta = \Omega a + b$ and $\Omega$ is the Riemann period
matrix.  
The methods of this section give a generalization of 
the bosonization theorem to odd spin structures.

In Section~\ref{sec:geom-symp-det} we study the determinant for the
Dirac laplacian and we note that $f \in \Diff(\Sigma)$ (representing
an element $\Lambda \in \Sp(2g,\bbZ)$) induces an isometry between two
$f$-related determinant line bundles.  As a consequence, a phase
factor $e^{i\xi(0,\Lambda)}$ appears in our computations.  When
$\Lambda \in \Gamma_{1,2}$ we can compute $e^{i\xi(0,\Lambda)}$.  This
section contains our main results, the simplest of which  is that
$\Zsc$ is a modular function with multiplier when $\Lambda \in
\Gamma_{1,2}$.

In \ref{sec:conventions} we discuss the conjugate linear isomorphism
between $\Lhz$ and $\Lho$.

In \ref{sec:app-kappa} we review some properties of the determinant 
line bundle for $\dbar$-operators and discuss $\zanalog$.

In \ref{sec:genus-1} we relate our abstract modular transformation 
result to explicit formulas in genus one.

Finally we have a small nomenclature of recurring symbols.

Section~\ref{sec:bosonic-det} and \ref{sec:app-kappa} has some overlap
with material in BEG. We rely heavily on the content of these sections
in this paper and therefore include it here along with an expanded
discussion of some topics in BEG.

\subsection{Some Questions and Speculations}
\label{sec:speculations}

We note that when the Riemann surface degenerates our invariant
factorizes.  How does $\Zsc$ fit with the modular geometry of Friedan
and Shenker \cite{Friedan:1986ua} that describes the behavior of
string amplitudes when the Riemann surface degenerates?

Our main results give a genus, a map from a subring of the string
cobordism ring to a subring of the functions on $\Teichodd(\Sigma)$.
For $g=1$ this leads to the string genus. Does our new genus give any 
new information about the string cobordism ring?

M.J.~Hopkins suggests that we consider the Cayley plane, a string
manifold of dimension 16, whose string genus vanishes.  We would like
to know if our genus $\Zsc$ is non-zero for the Cayley plane; it
might give us some information about string cobordism theory.  This
requires computing the function $\zanalog$, an open problem
which may be solvable for a hyperelliptic surface of genus $2$.

\section{The Action of $\Diff_{0}(\Sigma)$ and Torelli}
\label{sec:torelli}

\subsection{The $\Diff_{0}(\Sigma)$ action}
\label{sec:diff0}

The action of $\Diff_{0}(\Sigma)$ on the partition function of a
quantum field theory is well understood.  The seminal work of
Alvarez-Gaum\'{e} and Witten on gravitational anomalies
\cite{AlvarezGaume:1984p1466} initiated the subject.  Gravitational
anomalies are related to $1$-cocycles in the group cohomology of
$\Diff_{0}(\Sigma)$.  For simplicity we assume the partition function
$Z$ only depends on $\Met(\Sigma)$, \emph{i.e.}, $Z: \Met(\Sigma) \to
\bbC$.  If $f \in \Diff_{0}(\Sigma)$ and $\metric \in \Met(\Sigma)$
then the action on a metric is $f: \metric \mapsto (f^{-1})^{*}
\metric$.  The partition function behaves as $Z((f^{-1})^{*} \metric)
= \lambda(f,\metric) Z(\metric)$ where $\lambda: \Diff_{0}(\Sigma)
\times \Met(\Sigma) \to \bbC^{\times}$ is a $1$-cocycle in the group
cohomology of $\Diff_{0}(\Sigma)$:
\begin{equation}
    \lambda( f_{1} \circ f_{0}, \metric) = \lambda(f_{1}, 
    (f_{0}^{-1})^{*}\metric) \; \lambda(f_{0},\metric)
    \text{ where } f_{0},f_{1} \in \Diff_{0}(\Sigma)\,.
    \label{eq:def-cocycle}
\end{equation}
Because of the cocycle condition, the partition function may be
interpreted as a section of a line bundle over $\Teich(\Sigma) =
\Met(\Sigma)/\Diff_{0}(\Sigma)$.  Note that the discussion above is
valid whether or not the metrics have constant curvature.

We make a brief but very important remark before we proceed with
details in the ensuing subsections.  Assume $h_{0},h_{1} \in
\Diff(\Sigma)$ represent the same element in the mapping class group
then there exists $f \in \Diff_{0}(\Sigma)$ such that $h_{1} = f \circ
h_{0}$.  Next we observe that $Z((h_{1}^{-1})^{*}\metric) = Z(
(f^{-1})^{*} \circ (h_{0}^{-1})^{*} \metric) =
\lambda(f,(h_{0}^{-1})^{*} \metric)\; Z((h_{0}^{-1})^{*} \metric)$.
Thus when we work in $\Teich(\Sigma)$ and  want to
understand the action of the mapping class group on the
\emph{partition section} it does not matter which diffeomorphism we
choose as a representative for an element in the mapping class group.

\subsection{The Torelli Action}
\label{sec:torelli-action}

From the definition in the introduction of $\Teichodd(\Sigma)$, the
action of $\Torelli(\Sigma)$ on $\Teich(\Sigma)$ lifts to an action of
$\Torelli(\Sigma)$ on $\Teichodd(\Sigma)$ that we will denote by
$\Torelliodd(\Sigma)$.  We now discuss the action of
$\Torelliodd(\Sigma)$ on the determinant line bundle $\linebundle$ of
\BEG{Theorem~9.1}, where $\linebundle
=((\DET(\dbar_{\delta}))^{*})^{n}$, $\dim M =2n$,  $\dbar_{\delta}:
\Lhz \to \Lho$, and  $\Lhz = \sqrt{K}^{\delta}$ is the square root of the
canonical bundle corresponding the odd spin structure $\delta$.

\begin{lem}\label{lem:foo}
    $\Torelliodd(\Sigma)$ acting on $\Teichodd(\Sigma)$ leaves
    $\linebundle$ invariant.
\end{lem}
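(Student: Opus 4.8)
The plan is to combine the defining property of the Torelli group---that it acts trivially on $H_{1}(\Sigma,\bbZ)$---with the functoriality of the Quillen determinant line bundle. The first step is to observe that $\Torelli(\Sigma)$ fixes every spin structure. Under the correspondence between spin structures and quadratic refinements of the intersection form (Section~\ref{sec:quarfs}), the $\MCG(\Sigma)$-action on spin structures is induced by its action on $H_{1}(\Sigma,\bbZ/2)$; since $\Torelli(\Sigma)$ is the kernel of $\MCG(\Sigma) \to \Sp(2g,\bbZ)$ in \eqref{eq:MCG-seq}, it acts trivially on $H_{1}(\Sigma,\bbZ/2)$ and hence fixes each quarf, and so each spin structure. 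In particular the odd spin structure $\delta$ is preserved, so the lifted action $\Torelliodd(\Sigma)$ never moves between sheets of $\Teichodd(\Sigma)$: it stays within the sheet labelled by $\delta$.

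Next I would produce the comparison of fibers. Choose any $f \in \Diff(\Sigma)$ representing a given Torelli class; by the remark at the end of Section~\ref{sec:diff0} the choice of representative is immaterial over Teichmuller space. Writing $J$ for the complex structure of $\metric$ and $J' = (f^{-1})^{*}J$ for that of $(f^{-1})^{*}\metric$, the diffeomorphism $f$ is biholomorphic between these two complex structures, and because $f$ fixes $\delta$ it lifts to a holomorphic isomorphism $\tilde f$ of the corresponding spinor bundles $\sqrt{K}^{\delta} = \Lhz$, unique up to an overall sign. This lift intertwines $\dbar_{\delta}$ at $\metric$ with $\dbar_{\delta}$ at $(f^{-1})^{*}\metric$, so it induces isomorphisms of their kernels and cokernels and therefore a canonical isomorphism of the determinant lines $\DET(\dbar_{\delta})$ at the two points. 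Dualizing and taking the $n$-th tensor power gives the corresponding isomorphism of the fibers of $\linebundle = ((\DET(\dbar_{\delta}))^{*})^{n}$.

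It then remains to glue these fiberwise maps into a genuine action. I would check that they obey the cocycle relation \eqref{eq:def-cocycle} inherited from the $\Diff_{0}(\Sigma)$ analysis---composition of lifts covers composition of diffeomorphisms---so that they assemble into an action of $\Torelliodd(\Sigma)$ on the total space of $\linebundle$ covering the base action. This is precisely the statement that $\linebundle$ is invariant, and it exhibits $\linebundle$ as descending to a line bundle on the quotient.

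The one delicate point---and the step I expect to need the most care---is the sign ambiguity in the spinor lift $\tilde f$. The two choices of $\tilde f$ differ by $-1$, which acts by $-1$ on the generically one-dimensional kernel $\langle \hb_{\delta}\rangle$ of $\dbar_{\delta}$ and likewise on the cokernel; in $\DET(\dbar_{\delta}) = (\Lambda^{\max}\ker\dbar_{\delta})^{*}\otimes\Lambda^{\max}\coker\dbar_{\delta}$ these two signs should cancel, so that the induced fiber isomorphism is independent of the lift and the gluing is canonical. I would verify this cancellation---and its persistence across the loci where $\dim\ker\dbar_{\delta}$ jumps---carefully, since it is what makes the $\Torelliodd(\Sigma)$-action on $\linebundle$ well defined rather than merely projective.
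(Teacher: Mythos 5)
Your argument is correct, and its core coincides with the paper's one-line justification --- the lifted diffeomorphism carries $\ker\dbar_{\delta}$ at $\metric$ to $\ker\dbar_{\delta}$ at the transformed metric, hence identifies the determinant lines --- but you develop it along a different axis than the paper does. You expand the functorial half: Torelli fixes every quarf and hence every spin structure, the diffeomorphism lifts (up to sign) to the spinor bundles, and the sign ambiguity cancels in $\DET(\dbar_{\delta})=(\Lambda^{\max}\ker)^{*}\otimes\Lambda^{\max}\coker$ precisely because the index is zero; that last cancellation is a genuine point worth recording and the paper does not spell it out. What the paper does instead, and what your route omits, is the concrete half: it identifies $\DET(\dbar_{\delta})$ with the dual of the line in $\Hoz$ spanned by $h_{\delta}^{2}=\sum_{k}\bigl(\partial\vartheta[\delta](0)/\partial z^{k}\bigr)\omega_{k}$, and then observes that a Torelli mapping preserves the symplectic homology basis, hence the normalized abelian differentials $\omega_{j}$, the period matrix $\Omega$, and the characteristic $\delta$, so that this line --- in the fixed trivialization of $\Hoz$ by the homology basis --- is \emph{literally} unchanged. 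This is the form of invariance the paper actually uses downstream (Theorems~\ref{thm:transf-integral} and \ref{thm:inv-section}). Note also that your abstract argument applies verbatim to any spin-structure-preserving diffeomorphism, e.g.\ representatives of $\Gamma_{1,2}$, for which the determinant bundle is likewise $f$-related by a canonical isometry and yet the section picks up the nontrivial factors $\varepsilon(\Lambda)$, $e^{i\xi}$, $\det(C\Omega+D)^{1/2}$ of Sections~\ref{sec:symplectic} and \ref{sec:geom-symp-det}; so your proof establishes invariance of the bundle but does not by itself isolate what is special about Torelli, namely the invariance of the trivializing period data. Adding the paper's observation that $\Omega_{ij}$ and the $\omega_{j}$ are Torelli-invariant would close that gap.
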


We study $\dbar_{\delta}$ in the generic case where it has a one
dimensional kernel.  The determinant line bundle 
$\DET(\dbar_{\delta})$ is isomorphic to the dual lie bundle of a line
subbundle in $\Hoz \subset \Loz$.  Observe that $\Lho$ is the linear
algebraic dual space of $\Lhz$ by using wedge product and integration
over $\Sigma$; we also use the standard inner product on $\Sigma$ to
get an inner product on $\Lhz$.  An elementary computation shows that
if $h_{\delta}$ is in $\ker\dbar_{\delta}$ then its linear algebraic
dual using the inner product is in $\ker \dbar_{\delta}^{*}$.  Thus
the determinant line bundle is the dual line bundle of the line in
$\Hoz$ determined by $h_{\delta}^{2}$.  Moreover,
$\Torelliodd(\Sigma)$ acting on $\Teichodd(\Sigma)$ leaves the
determinant line bundle $\DET(\dbar_{\delta})$ fixed because it sends
the one dimensional kernel of $\dbar_{\delta}$ to the one dimensional
kernel of the transformed $\dbar_{\delta}$.

If $f \in \Diff(\Sigma)$ then $f$
induces a transformation in homology $f_{*}: H_{1}(\Sigma,\bbZ) \to
H_{1}(\Sigma,\bbZ)$.  Define a normal subgroup $\Torellibig(\Sigma) 
\vartriangleleft \Diff(\Sigma)$ 
by
\begin{equation*}
    \Torellibig(\Sigma) = \left\{ f \in \Diff(\Sigma) \;\mid\; 
    f_{*} = \Id \right\}\,.
\end{equation*}
It is  a normal subgroup because if $g \in \Diff(\Sigma)$ then
$(gf g^{-1})_{*} = g_{*}f_{*}g^{-1}_{*} = g_{*}\Id g^{-1}_{*} =\Id$.
Note that $\Torelli(\Sigma) =
\Torellibig(\Sigma)/\Diff_{0}(\Sigma)$ and that $\Sp(2g, \bbZ) =
\Diff(\Sigma)/\Torellibig(\Sigma)$.

Observe that once a standard symplectic basis $(\ba_{i},\bb_{j})$ of
cycles for $H_{1}(\Sigma,\bbZ)$ is chosen then the abelian
differentials $\omega_{i}$ are uniquely determined and depend only on
the homology classes $([\ba_{i}], [\bb_{j}])$.

If $f \in \Diff(\Sigma)$, let $f: (\Sigma,g) \to (\Sigma,\tilde{g})$
where $g$ and $\tilde{g}$ are the respective metrics.  Let $c_{\alpha}
= (\ba_{i},\bb_{j})$ be a choice for the standard cycles on
$(\Sigma,g)$.  A mapping $f$ is of Torelli type if it preserves the
homology.  For such an $f$ there is an induced transformation on
cycles $z$ that gives $f_{*}z = z + \partial \tilde{q}_{\alpha}$.
Hence the Riemann period matrix is invariant:
\begin{equation*}
    \widetilde{\Omega}_{ij} = \int_{\tilde{\bb}_{i}} \tilde{\omega}_{j}
    = \int_{f_{*}\bb_{i}} \tilde{\omega}_{j} = 
    \int_{\bb_{i}}f^{*}\tilde{\omega}_{j} = \Omega_{ij},
\end{equation*}
and moreover $\delta_{ij}= \int_{\ba_{i}}\omega_{j} = 
\int_{\tilde{\ba}_{i}} \tilde{\omega}_{j}$.

\section{Ray-Singer Torsion Revisited}
\label{sec:RS-torsion}

Fix a fiducial metric $\bm{\hat{g}} \in
\Met_{\text{all}}(\Sigma)$, the space of all metrics on $\Sigma$.  The
metric $\hat{\bm{g}}$ determines a complex structure within the
$3(g-1)$ complex dimensional space of complex structures.  We restrict
our discussion to surfaces with genus $g>1$.

For Riemann surfaces, the complex Ray-Singer torsion
theorem~\cite[Theorem~2.1]{Ray-Singer:complex} is a consequence of the
conformal anomaly.  Let $F_{\chi}$ be the flat holomorphic line bundle
associated with the character $\chi: \pi_{1}(\Sigma) \to S^{1}$.
$F_{\chi}$ comes equipped with a hermitian metric that depends only on
the complex structure.  The sections of $K^{n}\otimes \bar{K}^{m}$ are
the ``$(n,m)$-forms'' and are denoted by $T_{n,m}$.  The hermitian
metric on $\Sigma$ allows us to identify $(n,m)$-forms with
$(n-m,0)$-forms.  Let $\dbar_{n}: T_{n,0} \otimes F_{\chi} \to T_{n,1}
\otimes F_{\chi}$ be the basic operator, $\dbar_{n}^{*} : T_{n,1}
\otimes F_{\chi}\to T_{n,0}\otimes F_{\chi}$ be its hermitian adjoint
and let $\Delta_{n}^{(-)} = 2\dbar_{n}^{*}\dbar_{n}$ be the
corresponding laplacian.  Let $\{\phi_{a}\}$ be a basis for $\ker
\dbar_{n}$, the holomorphic sections of $T_{n,0} \otimes F_{\chi}$.
The basis can be chosen to be independent of conformally scaling the
fiducial metric by $e^{2\sigma}$, \emph{i.e.}, it only depends on the
complex structure.  Because of the hermitian metrics on $K$ and
$F_{\chi}$, $\ker \dbar_{n}^{*} \subset T_{n,1} \otimes
F_{\chi} \approx T_{n-1,0} \otimes F_{\chi}$ and may be identified
with the holomorphic sections of $\dbar_{1-n}: T_{1-n,0} \otimes
F_{\chi}^{-1} \to T_{1-n,1} \otimes F_{\chi}^{-1}$.  We also have the
dual space identification
\begin{equation}
    T_{1-n,0} \otimes F_{\chi}^{-1} \approx \left( T_{n-1,0} \otimes
    F_{\chi} \right)^{*}\,.
    \label{eq:dual-spaces}
\end{equation}
Let $\{\psi_{\alpha}\}$ be the holomorphic sections of $T_{1-n,0}
\otimes F_{\chi}^{-1}$ (chosen to be independent of the conformal
factor $\sigma$).  The conformal anomaly implies that under an
infinitesimal conformal change of the metric $\bm{\hat{g}} \to
e^{2(\delta\sigma)}\;\bm{\hat{g}}$ we have~\cite{Alvarez:1982zi}
\begin{equation}
    \delta_{\sigma}\; \log\left(\frac{\detp \Delta_{n}^{(-)}}{\det
    \langle \psi_{\alpha},\psi_{\beta}\rangle \; \det \langle
    \phi_{a}, \phi_{b} \rangle} \right)_{\chi} = -
    \frac{1+6n(n-1)}{6\pi} \int_{\Sigma}d^{2}z\;
    \sqrt{\bm{\hat{g}}}\;\hat{R}\;(\delta\sigma)\,.
    \label{eq:conf-anomaly}
\end{equation}
The term inside the parentheses on the left hand side of the
equation is the Quillen metric of the determinant line bundle
$\DET(\dbar_{n})$.  The determinant term associated with $\ker
\dbar_{n}^{*}$ appears in the denominator because of the dual space
identification given in eq.~\eqref{eq:dual-spaces}.  This formula is
valid for $2n \in \bbZ$.  Note that the right hand side is independent
of $F_{\chi}$.

The Ray-Singer torsion results correspond\footnote{The $T_{0} = T_{1}$
result of Ray and Singer is related to the two laplacians we can
define.} to the case $n=0$.  Let $\chi$ and $\chi'$ be two non-trivial
characters.  For both characters, $\ker\dbar_{0} = \{0\}$ and
$\dim\ker\dbar_{0}^{*} = g-1$.  Also the metric on $T_{1,0} \otimes
F_{\chi}^{-1}$ is independent of the conformal factor and therefore
the term $\det \langle \psi_{\alpha},\psi_{\beta}\rangle$ in the left
hand side of \eqref{eq:conf-anomaly} does not change under a conformal
variation.  Putting all this information together gives
\begin{equation}
    \delta_{\sigma}\; \left( \log\det \Delta_{0}^{(-)}(\chi) - \log 
    \Delta_{0}^{(-)}(\chi') \right) =0\,.
    \label{eq:R-S}
\end{equation}
This is the Ray-Singer result for complex analytic torsion on
Riemann surfaces.  It says that the ratio $(\det
\Delta_{0}^{(-)}(\chi))/(\det \Delta_{0}^{(-)}(\chi'))$ only depends
on the complex structure. 

An immediate consequence of \eqref{eq:conf-anomaly} is
\begin{thm}[Generalized Ray-Singer Torsion on Riemann Surfaces]
    \label{thm:R-S}
    Consider a collection $\{(n_{r},\chi_{r},k_{r})\}_{r=1}^{N}$ where
    $2n_{r} \in \bbZ$, $\chi_{r}: \pi_{1}(\Sigma) \to S^{1}$ is a
    character, and $k_{r} \in \bbZ$.  If this collection satisfies
    \begin{equation*}
        \sum_{r=1}^{N} k_{r} \left[ 1 + 6 n_{r}(n_{r}-1)\right] =0
    \end{equation*}
    then
    \begin{equation*}
	\sum_{r=1}^{N} k_{r} \log\left(\frac{\detp \Delta^{(-)}}{\det
	\langle \psi_{\alpha},\psi_{\beta}\rangle \; \det \langle
	\phi_{a}, \phi_{b} \rangle} \right)_{n_{r},\chi_{r}}
    \end{equation*}
    only depends on the complex structure and is independent of the
    choice of hermitian metric on $\Sigma$.
\end{thm}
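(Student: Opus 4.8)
The plan is to show that the weighted combination is invariant under conformal rescalings of the metric. On a Riemann surface with a fixed complex structure, a compatible hermitian metric is precisely the data of a conformal factor $e^{2\sigma}$ applied to the fiducial metric $\bm{\hat{g}}$, and the space of such factors is convex, hence path-connected. Consequently the two assertions in the statement---that the combination depends only on the complex structure, and that it is independent of the choice of hermitian metric---are one and the same, and it suffices to prove that the infinitesimal variation $\delta_{\sigma}$ of the combination vanishes for every $\delta\sigma$.

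First I would apply $\delta_{\sigma}$ to the sum and use linearity together with the conformal anomaly formula \eqref{eq:conf-anomaly} on each term. The decisive feature of \eqref{eq:conf-anomaly} is that its right-hand side depends on the pair $(n_r,\chi_r)$ \emph{only} through the single numerical factor $1+6n_r(n_r-1)$; in particular it is entirely independent of the character $\chi_r$, the integral $\int_{\Sigma}d^{2}z\,\sqrt{\bm{\hat{g}}}\,\hat{R}\,(\delta\sigma)$ being universal. This is exactly what lets the weighted sum over different characters and different weights $n_r$ collapse. Using that the bases $\{\phi_a\}$ and $\{\psi_\alpha\}$ are chosen to depend only on the complex structure, so that each term varies exactly as recorded in \eqref{eq:conf-anomaly}, I obtain
\begin{equation*}
\delta_{\sigma} \sum_{r=1}^{N} k_{r} \log\left(\frac{\detp\Delta^{(-)}}{\det\langle\psi_{\alpha},\psi_{\beta}\rangle\,\det\langle\phi_{a},\phi_{b}\rangle}\right)_{n_{r},\chi_{r}} = -\frac{1}{6\pi}\left(\sum_{r=1}^{N} k_{r}\bigl[1+6n_{r}(n_{r}-1)\bigr]\right)\int_{\Sigma}d^{2}z\,\sqrt{\bm{\hat{g}}}\,\hat{R}\,(\delta\sigma).
\end{equation*}
By hypothesis the bracketed sum vanishes, so the entire variation is zero for every $\delta\sigma$. (As a consistency check, the original Ray-Singer case \eqref{eq:R-S} is recovered by taking $N=2$, $n_1=n_2=0$, $k_1=-k_2=1$, for which the bracket is $1-1=0$.)

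To finish, I would upgrade this infinitesimal statement to a global one: given two hermitian metrics in the same conformal class, connect them by the straight-line family $\bm{\hat{g}}_t=e^{2t\sigma}\bm{\hat{g}}_0$ and integrate the vanishing infinitesimal variation over $t\in[0,1]$, concluding that the combination takes equal values at the two endpoints. The one point requiring care is the implicit regularity behind \eqref{eq:conf-anomaly}: the Quillen metric and the variation formula are valid only where $\dim\ker\dbar_{n_r}$ and $\dim\ker\dbar_{n_r}^{*}$ stay constant. I expect verifying that the path $\bm{\hat{g}}_t$ never crosses a jump in these dimensions to be the only substantive issue, and it is resolved by noting that within a fixed complex structure these dimensions are fixed by Riemann-Roch and hence unchanged under conformal scaling; everything else is the direct term-by-term application of \eqref{eq:conf-anomaly}.
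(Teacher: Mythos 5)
Your proof is correct and follows the same route as the paper, which simply records the theorem as an immediate consequence of the conformal anomaly formula \eqref{eq:conf-anomaly}: the variation of each term is $-\tfrac{1}{6\pi}\bigl[1+6n_{r}(n_{r}-1)\bigr]\int_{\Sigma}\sqrt{\bm{\hat{g}}}\,\hat{R}\,(\delta\sigma)$, independent of $\chi_{r}$, so the weighted sum vanishes by hypothesis. Your added remarks on integrating along the conformal path and on the constancy of the kernel dimensions are sensible elaborations of what the paper leaves implicit.
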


The two best known examples of this theorem in string theory are the
$26$ dimensional bosonic string~\cite{Polyakov:1981rd} with $\{
(0,1,26/2), (-1,1,-1)\}$, and the $10$ dimensional
superstring~\cite{Polyakov:1981re} with collection
\begin{equation*}
    \bigl\{ (0,1,10/2), (1/2,\chi,-10/2), (-1,1,-1),
    (-1/2,\chi^{-1},1) \bigr\}
\end{equation*}
associated with $10$ bosons, $10$ Majorana fermions, diffeomorphisms
(vector fields), super-diffeomorphisms (square root of vector fields).
In the above $\chi$ can be any character corresponding
to a spin structure.

\subsection{Ray-Singer Torsion for Spinors}
\label{sec:one-half}

We can be very explicit in general genus in the case of spinors.  Pick
a reference point $P_{0} \in \Sigma$ and in the standard fashion
identify $\Jacz$ with $\Jacgo$ as discussed in
\ref{sec:app-kappa}.  The character $\chi$ corresponds to a
point $u \in \Jacz$.  There is a special spin structure $\spinbasic$ such
that the determinant of the laplacian acting on $\spinbasic \otimes
F_{\chi}$ is given by \eqref{eq:det-DD} where $u \in \Jacz$ is the
point corresponding to $F_{\chi}$.  This example shows
that $(\det \Delta(u))/(\det \Delta(u'))$ only depends on the complex
structure in agreement with the generalized Ray-Singer theorem.

We introduce the notation
\begin{equation}
    Q(u) = \frac{\detp D(u)^{*}D(u)}{\det 
    \langle \psi_{\alpha},\psi_{\beta}\rangle \; 
    \det \langle \phi_{a}, \phi_{b} \rangle}
    \label{eq:Q-metric}
\end{equation}
for convenience and define  subvarieties $\cV_{0}, \cV_{1},
\cV_{2}, \dotsc$ of $\Jacz$ where $\cV_{k}$ is the set of points $u\in
\Jacz$ where $\dim \ker D(u) = k$.  Note that the theta divisor is
given by $\Theta = \cup_{k=1}^{\infty} \cV_{k}$.  Because the Dirac
operator has index zero, the matrices $\langle
\psi_{\alpha},\psi_{\beta}\rangle$ and $\langle \phi_{a}, \phi_{b}
\rangle$ are the same size.  We have seen that if $u,u'
\in \cV_{0}$ then $Q(u)/Q(u')$ is independent of the choice of
hermitian metric on $\Sigma$.  Similarly,  if $u \in
\cV_{k}$ and $v \in \cV_{l}$ then $Q(u)/Q(v)$ will be independent of
the choice of hermitian metric on $\Sigma$.  

In general, $\phi_{a}$ is a holomorphic section of $\spinbasic \otimes
F_{\chi}$ and $\psi_{\alpha}$ is a holomorphic section of $\spinbasic
\otimes F_{\chi}^{-1}$.  When $F_{\chi}$
corresponds to a semi-characteristic so that $\spinbasic \otimes
F_{\chi}$ is a spin structure,  $F_{\chi}^{2}$ is the
trivial bundle.  Thus $F_{\chi} \approx F_{\chi}^{-1}$ and
we can identify the determinants in the denominator of
\eqref{eq:Q-metric}.  In the generic case where all the odd spin
structure are in $\cV_{1}$ we have the explicit computations
\eqref{eq:det-prime} (an example involving $\cV_{0}$ and $\cV_{1}$),
and \eqref{eq:det-univ} (an example only involving $\cV_{1}$).

\section{The Bosonic Determinant}
\label{sec:bosonic-det}

We rewrite our key formula \BEG{(4.12)} by changing
the orientation of the surface $z \to \zbar$.  Now the
semiclassical partition ``function'' (section) becomes
\begin{equation}
    \Zsc = (\vol \Sigma)^{n}
    \left(\frac{\detp\partial_{\delta}}{\Nb_{\delta}^{2}} \right)^{n}
    \int_{M} \det\left[D \left(\partial_{0} \otimes
    I_{2n}\right)\right]^{-1/2}\,,
    \label{eq:def-Zsc}
\end{equation}
here $\dim M = 2n$, $D = *i(\dbar \otimes I_{2n} + \sfA^{0,1})$ acting on
$\Loz \otimes X^{*}(TM)$ with $X:\Sigma \to M$  a constant map.  We
integrate over the space of constant maps $M$.  We assume the odd spin
structure $\delta$ is generic, so $\ker \partial_{\delta}$ is
$1$-dimensional and  chose a non-zero element in the kernel
$\hb_{\delta} \in \Lzh$ and also $\Nb_{\delta} \in \bbC$ such
that $\hb_{\delta}/\Nb_{\delta}$ has norm $1$.  Hence
$\hb_{\delta}/\Nb_{\delta}$ is an element of norm $1$ in $\Lzh$.  The
term $\sfA^{0,1}$ in the definition of $D$ is $\hb_{\delta}^{2}
\otimes \mathcal{R}/2\pi$ with $\mathcal{R}$ the curvature $2$-form 
of $M$ pulled back via the constant map $X$.

$\Zsc$ depends on a metric $\metric$ on $\Sigma$ and an odd spin
structure $\delta$.  In the Section~\ref{sec:quarfs} we review how a
choice of symplectic basis $b$ for $H_{1}(\Sigma,\bbZ)$ fixes an even
spin structure $\sqrt{K}^{b}$.  Adding an appropriate element $w \in
H^{1}(\Sigma, \bbZ_{2})$ gives an odd spin structure so $\Zsc =
\Zsc(\metric,b,w)$.  If $f \in \Diff(\Sigma)$ then $f$ induces a map
$(\metric;b,w) \mapsto \left( (f^{-1})^{*}\metric; f_{*}b,
(f^{-1})^{*}w \right)$.

In BEG we showed that
\begin{equation}
    \Zsc = \left( \frac{\vol \Sigma}{\detz} \right)^{n}
    \left(\frac{\detp\partial_{\delta}}{\Nb_{\delta}^{2}}
    \right)^{n} \int_{M} \prod_{k=1}^{n} 
    \frac{\zanalog\left(x_{k}
    z(\hb_{\delta}^{2})\right)}{\vartheta[\kappa]\left(x_{k}
    z(\hb_{\delta}^{2})\right)}\,.
    \label{eq:def-Zsc-1}
\end{equation}
See the Introduction for the definitions of the terms except for
$\zanalog$ that can be found in \ref{sec:app-kappa}.  Riemann
surface theory \cite{Mumford:1} gives us an explicit expression for
the square of the spinor:
\begin{equation}
    h_{\delta}^{2} = \sum_{k}\frac{\partial
    \vartheta[\delta](0)}{\partial z^{k}} \omega_{k}\,.
    \label{eq:def-hdelta2}
\end{equation}

Formula \eqref{eq:def-Zsc} contains a specific $1$-form with curvature
zero. The flatness arises because the $1$-form is
the product of the pullback of the curvature on the target space $M$
by the constant map, and the square of the anti-holomorphic
spinor.

It is useful to consider the family of operators $D = i * (\dbar +
A^{0,1})$ with $\Azo$ a flat connection.  Here $D: \Loz \to
\Lzz$ is parametrized by $\Jacz$.  The determinant line bundle
$\cL = \DET D \to \Jacz$ has a Quillen
hermitian metric with connection $\nu$ and curvature $d\nu$ given by
the standard translationally invariant polarization form on $\Jacz$. 
It also has a unique holomorphic cross section (up to scale), see 
\BEG{Appendix C}.

The computation of the bosonic determinant in \eqref{eq:def-Zsc} 
involves three steps.
\begin{enumerate}
    
    \item In Section~\ref{sec:trivi-det} we will lift $\cL$ and its
    holomorphic cross section to the covering space $\Hzo$ of $\Jacz$,
    then trivialize the lift hence making the cross section a function
    which we will identify as a $\vartheta$-function.
    
    \item In \ref{sec:app-kappa}, we use elliptic analysis
    to study $\cL$ and its unique holomorphic section.
    
    \item Combining the two previous items leads to a formula for the
    aforementioned determinant after studying a PDE as discussed in
    \BEG{Section 6}.
\end{enumerate}

\subsection{Trivializing the Determinant Line Bundle}
\label{sec:trivi-det}

Remember\footnote{The content in this section is required for the flow
of this paper; there is overlap with work in BEG.} that the
jacobian is defined by $\Jacz = \Hzo/L_{\Omega}$.  We identify $\Hzo$
with $\bbC^{g}$ by choosing a basis of $\Hzo$ given by formula
\eqref{eq:z-coord} with $z_{j}$ the coordinates of $\bbC^{g}$.  With
this convention the lattice $L_{\Omega} \subset \Hzo$ is given by
\eqref{eq:B-coord}.
\begin{align}
    \Azo & = 2\pi i \sum
    z_j(\Omega-\bar{\Omega})^{-1}_{jk}\bar{\omega}_k\;,
    \label{eq:z-coord}\\
    \Bzo_{nm} & = 2\pi i \sum (m + \Omega n)_j
    (\Omega-\bar{\Omega})^{-1}_{jk} \bar{\omega}_k, \text{ where }
    m,n \in \bbZ^{g}\;.
    \label{eq:B-coord}
\end{align}
With these conventions, the quasiperiodicity of the theta function 
are associated with $z \to z + m + \Omega n$.

The jacobian torus defined above is the dual torus to the one normally
used by algebraic geometers.  If $(\alpha_{j}, \beta_{k})$ is a
symplectic basis for $H^{1}(\Sigma,\bbR)$ in terms of harmonic
$1$-forms then the abelian differentials are given by $\omega_{j} =
\alpha_{j} + \sum_{k} \Omega_{jk}\beta_{k}$.  The algebro-geometric
jacobian is $\Hoz$ modulo the integer lattice $H^{1}(\Sigma,\bbZ)$,
see \ref{sec:app-kappa}.  We know that the linear algebraic
dual of $\Hzo$ is $\Hoz$.  There is a hermitian inner product on
$\Hoz$ and therefore there is a conjugate linear isomorphism with
$\Hzo$ that identifies the basis vectors $\omega_{i}$ with
$\sum_{k}(\Omega - \bar{\Omega})_{jk}\bar{\omega}_{k}$ up to an
overall normalization.

Let $\pi: \Hzo \to \Jacz$ be the standard projection and define the
pull back line bundle $\Ltil = \pi^{*} \mathcal{L} \to
\Hzo$ with connection $\nutil = \pi^{*}\nu$. On $\Hzo$ the
$1$-form
\begin{equation}
    \rho = \frac{i}{2\pi} \int_{\Sigma} \overline{\Azo} \wedge 
    d_{\Hzo}\Azo
    \label{eq:def-rho}
\end{equation}
has the property that $0 = d(\nutil-\rho) = \dbar(\nutil -\rho)$.  In
\BEG{Section~6} we used the flat connection $\nutil -\rho$ to
trivialize $\Ltil$.  We briefly review a slight modification of that
discussion here.  Let $\Ltil_{0}$ be the fiber over $0 \in \Hzo$.  To
identify the fiber $\Ltil_{0}$ with $\bbC$ we choose an arbitrary
non-zero point $\hat{\sigma}_{0} \in \Ltil_{0}$.  Given two points
$A_{0}, A_{1} \in \Hzo$, an integral $\int_{A_{0}}^{A_{1}} \cdots$ is
always taken along the straight segment\footnote{Many of our integrals
involve a flat connection so the choice of integration path is
irrelevant.  It will matter to us when we project the curve down to
the $\Jacz$ and try to interpret results geometrically.} joining the
two points.  We defined the \emph{flat trivialization} $\varphi: \Ltil
\to \Hzo \times \bbC$ by using the flat connection $\nutil -\rho$ to
give us a holomorphic trivialization.  More explicitly, let $\sigma$
be a point in the fiber of $\Ltil$ over $A$ then parallel transport
$\sigma$ along the straight segment from $A$ to $0$ to obtain a point
$\sigma_{0}\in \Ltil_{0}$.  The trivialization map is $\varphi: \sigma
\mapsto (A,\sigma_{0}/\hat{\sigma}_{0})$.  Abusing notation slightly,
we write
\begin{equation}
	\sigma(A) = (\varphi\sigma)(A) \;
	\exp\!\left(-\int_0^A(\nutil-\rho)\right) \hat{\sigma}_{0}\;.
	\label{eq:s-par}
\end{equation}
To make things more standard we defined a slightly different 
trivialization that we called the \emph{standard trivialization} 
$\Phi$ by 
multiplying the above by a non-vanishing holomorphic function on 
$\bbC^{g}$. The standard trivialization is defined by
\begin{equation}
	(\Phi\sigma)(A) = \exp\!\left(-\pi i \sum_{j,k} z_j
	(\Omega-\bar{\Omega})^{-1}_{jk}
	z_k\right)(\varphi\sigma)(A)\;.
	\label{eq:std-cocycle}
\end{equation}
If $s$ is any section of $\cL$ and if $\tilde{s} = \pi^{*}s$ is the
pull back section to $\Ltil$ then $\tilde{s}(A+B) = \tilde{s}(A)$ for
any lattice vector $B$.  Consequently the pull back of any section of
$\cL$ in the trivialized bundle (trivialized by the standard
trivialization) is represented by a function $\Phi\tilde{s}$ with
quasi-periodicity properties
\begin{equation}
    (\Phi\tilde{s})(A+B)  =  \chi(B_{nm})
      \times  e^{-\pi i \sum_{j,k} n_j \Omega_{jk} n_k}
     e^{-2\pi i \sum n_j z_j}\; (\Phi\tilde{s})(A)\;.
    \label{eq:theta-cocycle}
\end{equation}
The above is the standard transformation law for a theta function 
with lattice character
\begin{align}
    \chi(B_{nm}) &= e^{-\pi i \sum m_j n_j}e^{\int_0^B \nu}\\
    & =  e^{-\pi i \sum m_j n_j} \,\hol(\gamma^0_{nm})^{-1} \;,
    \label{eq:def-chi}
\end{align}
where $\hol(\gamma^0_{nm})$ is the holonomy of the Quillen connection.
The closed curve $\gamma^0_{nm}$ is the projection into the jacobian
of the straight segment from $0$ to $B_{nm}$ in $\Hzo$.  The bundle
$\cL \to \Jacz$ has a unique holomorphic cross section (up to scale).
If we write
\begin{equation}
    \chi(B_{nm}) = e^{-2\pi i n\cdot b}e^{2\pi i m\cdot a}
    \label{eq:def-char}
\end{equation}
then (\ref{eq:theta-cocycle}) is the transformation law for the
function $\vartheta\left[\begin{smallmatrix} a \\ b \end{smallmatrix}
\right](z)$.  In \ref{sec:app-kappa} we show that the
characteristic $\left[\begin{smallmatrix} a \\ b \end{smallmatrix}
\right]$ associated with the holomorphic section of $\mathcal{L} =
\DET(D) \to \Jacz$ is $\kappa= \Omega a + b$ where
$\kappa$ is the vector of Riemann constants.

The line bundle $\mathcal{L} = \DET(D) \to \Jacz$ has a unique
holomorphic section $\theta_{\kappa}$ up to scale.  The results above
state that the pullback section $\pi^{*}\theta_{\kappa}$ on
$\Ltil=\pi^{*}\cL$ is related to the $\vartheta$-function on the
trivialized bundle $\Hzo \times \bbC$ by
\begin{equation}
    (\pi^{*}\theta_{\kappa})(A) = \exp\!\left(+\pi i \sum_{j,k} z_j
	(\Omega-\bar{\Omega})^{-1}_{jk}
	z_k\right) \vartheta[\kappa](z)\;
	\exp\!\left(-\int_0^A(\nutil-\rho)\right) 
	\tilde{\sigma}_{0}\,.
    \label{eq:theta-hzo}
\end{equation}
It is explicit from the above that the lift of the divisor
$\Theta_{\kappa}$ of $\theta_{\kappa}$ to $\Hzo$ is the same as the
zero set of $\vartheta[\kappa](\cdot)$.

\section{Symplectic Action}
\label{sec:symplectic}

\subsection{Facts About the Symplectic Group}
\label{sec:symplectic-facts}

Let $\Lambda = \bigl( \begin{smallmatrix} A & B \\ C & D
\end{smallmatrix} \bigr) \in \Sp(2g, \mathbb{F})$ for some field 
$\mathbb{F}$. Then
\begin{equation}
    \Lambda^{t} \; \sympmetric \Lambda = \sympmetric,
    \label{eq:def-symp-matrix}
\end{equation}
so that
\begin{equation*}
    D^{t} A - B^{t}C =I, \quad A^{t}C-C^{t}A=0, \quad B^{t}D - 
    D^{t}B=0\,.
\end{equation*}
Thus $A^{t}C$ and $B^{t}D$ are symmetric matrices and
\begin{equation}
    \begin{pmatrix}    
	A & B \\
        C & D
    \end{pmatrix}^{-1} =
    \begin{pmatrix}    
	D^{t}& -B^{t} \\
        -C^{t} & A^{t}
    \end{pmatrix}
    \label{eq:left-inv}
\end{equation}
is a left inverse hence
\begin{equation*}
    \begin{pmatrix}
        I & 0 \\
	0 & I 
    \end{pmatrix}
    =
    \begin{pmatrix}    
	A & B \\
	C & D
    \end{pmatrix}
    \begin{pmatrix}    
	D^{t}& -B^{t} \\
	-C^{t} & A^{t}
    \end{pmatrix}
    = \begin{pmatrix}
        AD^{t} -BC^{t} & -AB^{t} + BA^{t} \\
	CD^{t} - DC^{t} & -CB^{t} + DA^{t}
    \end{pmatrix},
\end{equation*}
implying
\begin{subequations}\label{eq:symp-constraints}
\begin{align}
	D^{t} A - B^{t}C &=I, & A^{t}C-C^{t}A & =0, & B^{t}D - 
	D^{t}B &=0\,, 
	\label{eq:symp-constraints-0} \\
	AD^{t} - BC^{t} &= I, & AB^{t} - BA^{t} &=0, & CD^{t} - DC^{t} 
	&= 0 \,.
	\label{eq:symp-constraints-1}
\end{align}
\end{subequations}
Equations \eqref{eq:symp-constraints} completely characterize a 
symplectic matrix.

Equation \eqref{eq:symp-constraints-0}
implies \eqref{eq:symp-constraints-1}:  multiply the first of
\eqref{eq:symp-constraints} on the left by $A$
\begin{align*}
    0 = -A + AD^{t}A - AB^{t}C &= (-I + AD^{t} - BC^{t})A + 
    B \underbrace{C^{t}A}_{A^{t}C} - 
    AB^{t}C \\
    &= (-I + AD^{t} - BC^{t})A + (BA^{t} -AB^{t})C\,.
\end{align*}
Multiply the first of 
\eqref{eq:symp-constraints} on the right by $D^{t}$
\begin{align*}
    0 = -D^{t} + D^{t}AD^{t} - B^{t}CD^{t} &= D^{t}(-I + AD^{t} -BC^{t}) 
    + \underbrace{D^{t}B}_{B^{t}D}C^{t} - B^{t}CD^{t} \\
    & =  D^{t}(-I + AD^{t} -BC^{t}) + B^{t}(DC^{t} - CD^{t})\,.
\end{align*}
These equations may be written in matrix form (after transposing the 
second group) as
\begin{equation*}
    \begin{pmatrix}    
	-I + AD^{t} - BC^{t} & BA^{t} -AB^{t} \\
        CD^{t} - DC^{t} & -I + DA^{t} -CB^{t}
    \end{pmatrix}
    \begin{pmatrix}
        A & B \\
	C & D
    \end{pmatrix}
    =0\,.
\end{equation*}
Since the matrix on the right is invertible the one on left must be
the zero matrix giving \eqref{eq:symp-constraints-1}.

\subsection{Transformation Laws}
\label{sec:general}

Pick a symplectic basis $(\ba_{j},\bb_{k})$ for $H_{1}(\Sigma,\bbZ)$
and represent the dual basis via harmonic differentials
$(\alpha_{j},\beta_{k})$ .  The standard normalized holomorphic
abelian differentials are given by $\omega_{j} = \alpha_{j} + \sum_{k}
\Omega_{jk}\beta_{k}$ where $\Omega$ is the period matrix.   The
action of $\Lambda = \bigl( \begin{smallmatrix} A & B \\ C & D
\end{smallmatrix} \bigr) \in \Sp(2g, \bbZ)$ gives new 
bases\footnote{We use the notation that a primed quantity is the 
symplectic transform of the unprimed quantity.}
$(\ba',\bb')$ and $(\alpha', \beta')$:
\begin{equation}
    \begin{pmatrix}
        \ba'  \\
        \bb'
    \end{pmatrix}
    = \begin{pmatrix}
        D & C  \\
        B & A  
    \end{pmatrix}
    \begin{pmatrix}
        \ba  \\
        \bb  
    \end{pmatrix}
    \quad\text{and}\quad
    \begin{pmatrix}
	\alpha  &
	\beta
    \end{pmatrix}
	= 
    \begin{pmatrix}
	\alpha' & \beta'
    \end{pmatrix}
    \begin{pmatrix}
	D & C  \\
	B & A  
    \end{pmatrix}.
    \label{eq:symp-ab}
\end{equation}
Notice that
\begin{equation}
    \begin{pmatrix}
        D & C  \\
        B & A  
    \end{pmatrix}^{-1}
    = \begin{pmatrix}
        A^{t} & -C^{t}  \\
        -B^{t} & D^{t}  
    \end{pmatrix} .
    \label{eq:symp-inverse}
\end{equation}
Under the symplectic action
\begin{equation}
    \omega = \omega' (C\Omega + D) \quad\text{and}\quad \Omega' =
    (A\Omega + B) (C \Omega + D)^{-1}\,.
    \label{eq:symp-omega}
\end{equation}
where $\omega$ is the row vector $(\omega_{1},\dotsc, \omega_{g})$.
The covering space of the jacobian torus $\Jacz$ is $\bbC^{g}$.  In
terms of standard coordinates on $\bbC^{g}$, the jacobian torus is
given by the identifications $z \sim z + m + n\Omega$ where we write
the coordinates as a row vector.  Under the action of $\Lambda$, the
transformed torus is described by $z' \sim z' + m' + n'\Omega'$ so
that
\begin{equation}
    z = z' (C\Omega + D)\,.
    \label{eq:symp-z}
\end{equation}
Note the $\Sp(2g, \bbZ)$ invariance:
\begin{equation}
    \sum_{j} \omega_{j} \; \frac{\partial}{\partial z_{j}} = \sum_{j}
    \omega'_{j} \; \frac{\partial}{\partial z'_{j}}\,.
    \label{eq:symp-der}
\end{equation}

According to Fay~\cite{Fay:theta,Alvarez-Gaume:1986es}
a theta function with generic characteristics
transforms under symplectic transformations by
\begin{equation}
    \thchar{a'}{b'}(z'; \Omega') = \varepsilon(\Lambda) e^{-i\pi 
    \phi(a,b,\Lambda)} e^{\sum Q_{ij}z_{i}z_{j}} \; \det(C\Omega +D)^{1/2}
    \; \thchar{a}{b}(z;\Omega)\,,
    \label{eq:symp-theta}
\end{equation}
where $\varepsilon: \Sp(2g, \bbZ) \to \bbZ_{8}$ is a phase
independent of $z$ and $\Omega$,
\begin{equation}
    \phi(a,b,\Lambda) = aD^{t}B a + b C^{t}Ab - \left[ 2 a B^{t}C b + (aD^{t} 
    - bC^{t}) (AB^{t})_{d} \right],
    \label{eq:symp-phi}
\end{equation}
and
\begin{equation}
    \begin{pmatrix}
        a'  \\
        b'  
    \end{pmatrix}
    = \begin{pmatrix}
        D & -C  \\
        -B & A  
    \end{pmatrix}
    \begin{pmatrix}
        a  \\
        b  
    \end{pmatrix}
    + \frac{1}{2}
    \begin{pmatrix}
        (CD^{t})_{d}  \\
        (AB^{t})_{d}  
    \end{pmatrix}.
    \label{eq:symp-char}
\end{equation}
In the above $(AB^{t})_{d}$ means the diagonal entries of the matrix
product as a column vector.  We do not need an explicit form for the
symmetric matrix $Q_{ij}$ because the condition
$\tfrac{1}{2}\pont_1(M)=0$ eliminates that term in our computations.
Later we show that transformation law \eqref{eq:symp-char} is
equivalent to the transformation law for quarfs \eqref{eq:w-transf}.

Let $\delta$ be the odd theta characteristic for a spin structure with
holomorphic spinor $h_{\delta}$.  As noted in the introduction
\begin{equation}
    h_{\delta}^{2} = \sum_{k} \frac{\partial 
    \vartheta[\delta]}{\partial z_{k}}(0;\Omega)\; \omega_{k}\,.
    \label{eq:hdelta}
\end{equation}
$\vartheta[\delta](0;\Omega)=0$,
\eqref{eq:symp-theta} and \eqref{eq:symp-der} implies
\begin{equation}
    (h'_{\delta'})^{2} = \varepsilon(\Lambda) e^{-i\pi 
    \phi(\delta, \Lambda)}\; \det(C\Omega + D)^{1/2} \; h_{\delta}^{2}\,.
    \label{eq:symp-hdelta}
\end{equation}

Using
\begin{equation*}
    \int_{\Sigma} \bar{\omega}_{k} \wedge \omega_{j} = (\Omega - 
    \bar{\Omega})_{kj}
\end{equation*}
and \eqref{eq:z-coord}
implies that if $\Azo$ is a $1$-form in $\Hzo$ then
\begin{equation}
    2\pi i\, z_{k}(\Azo) = \int_{\Sigma} \Azo \wedge \omega_{k}\,.
    \label{eq:coord-A}
\end{equation}
If $H(\Omega)$ is the Hodge matrix associated with period matrix
$\Omega$, then
\begin{equation}
    \left\lvert \det(C\Omega + D) \right\rvert^{2} \det H(\Omega') = 
    \det H(\Omega)\,.
    \label{eq:symp-hodge}
\end{equation}

In our applications the object that enters is not the holomorphic
section of $\sqrt{K}^{\delta}$ but the anti-holomorphic section of
$\sqrt{\overline{K}}^{\delta}$ which we denote by $\hb_{\delta}$.  The
standard coordinate for $\hb_{\delta}^{2} \in \Hzo$ is 
\begin{equation}
    2\pi i\, z_{k}(\hb_{\delta}^{2})  = \int_{\Sigma} \hb_{\delta}^{2} \wedge
    \omega_{k} = \sum_{j} \overline{ \frac{\partial 
    \vartheta[\kappa]}{\partial z_{j}}(0;\Omega)} \;
    (\Omega-\bar{\Omega})_{jk}\,.
    \label{eq:coord-h2}
\end{equation}
Using the complex conjugate of \eqref{eq:symp-hdelta} gives
\begin{align*}
    2\pi i \, z'\left((\hb'_{\delta'})^{2}\right) (C\Omega + D) & =
    \left( \int_{\Sigma} (\hb'_{\delta'})^{2} \wedge \omega' \right)
    (C\Omega +D)\,, \\
     & = \overline{\varepsilon(\Lambda)} e^{+i\pi \phi(\delta)}\; 
     \det(C\bar{\Omega} +
     D)^{1/2} \int_{\Sigma} (\hb_{\delta})^{2} \wedge \omega\,, \\
     & = \overline{\varepsilon(\Lambda)} \; e^{+i\pi \phi(\delta)}\;
     \det(C \bar{\Omega} + D)^{1/2} \; 2\pi i \, 
     z(\hb_{\delta}^{2})\,.
\end{align*}
Thus the standard coordinates for
$(\hb'_{\delta'})^{2}$ and the standard coordinates for $\hb_{\delta}^{2}$
are related by
\begin{equation}
    z'((\hb'_{\delta'})^{2})(C\Omega + D) = \overline{\varepsilon(\Lambda)} \;
    e^{+i\pi \phi(\delta)}\; \det(C \bar{\Omega} + D)^{1/2} \;
    z(\hb_{\delta}^{2})\,.
    \label{eq:symp-hp-h}
\end{equation}
This differs by a scale from the standard relationship 
\eqref{eq:symp-z} for the coordinates between corresponding 
points in $\Hzo$ under the action of $\Sp(2g, \bbZ)$.

To work out the transformation properties of
$\zanalog/\vartheta[\kappa](\cdot)$ we use: (1) the relation
\eqref{eq:z-theta}, (2) Remark~\ref{rem:z-tranf}, (3) $M$ is a string
manifold, (4) equation~\eqref{eq:def-Zsc-1} has a factor that involves
$\int_{M}$.  Putting all these observations together gives
\begin{thm}
    \label{thm:transf-integral}
    Let $M$ be a string manifold  with
    $\dim M=2n$, if $\Lambda \in \Sp(2g,\bbZ)$ is represented by $f
    \in \Diff(\Sigma)$, then under the action of $f$ we have
\begin{align}
    \int_{M} \prod_{r=1}^{n} \frac{\zanalogp\left(x_{r}\; z'\left(
    (\hb'_{\delta'})^{2} \right); \Omega'
    \right)}{\vartheta[\kappa']\left(x_{r}\; z'\left(
    (\hb'_{\delta'})^{2} \right); \Omega' \right)} & =
    {\varepsilon(\Lambda)}^{-n} \; e^{+i\pi n \phi(\delta)}\; \det(C
    \bar{\Omega} + D)^{n/2} \nonumber \\
     & \quad \times \int_{M} \prod_{r=1}^{n}
     \frac{\zanalog\left(x_{r}\; z\left( \hb_{\delta}^{2}
     \right); \Omega \right)}{\vartheta[\kappa]\left(x_{r}\; z\left(
     \hb_{\delta}^{2} \right); \Omega \right)}\;.
    \label{eq:transf-theta}
\end{align}
\end{thm}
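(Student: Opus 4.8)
The plan is to track the three distinct ways the integrand of \eqref{eq:def-Zsc-1} changes under $\Lambda$: the joint change of period matrix and characteristic $(\Omega,\kappa)\to(\Omega',\kappa')$, the rescaling of the argument induced by the transformation of the antiholomorphic spinor square, and finally the homogeneity of the fibre integral $\int_{M}$. First I would read off from \eqref{eq:symp-hp-h} the scalar
\[
    \mu = \overline{\varepsilon(\Lambda)}\,e^{+i\pi\phi(\delta)}\,\det(C\bar{\Omega}+D)^{1/2},
\]
and rewrite that equation as $z'((\hb'_{\delta'})^{2}) = \mu\,w'$, where $w = z(\hb_{\delta}^{2})$ and $w' = w\,(C\Omega+D)^{-1}$ is the plain symplectic transform of $w$ prescribed by \eqref{eq:symp-z}. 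Thus the transformed argument in the left-hand integrand is $x_{r}\,z'((\hb'_{\delta'})^{2}) = x_{r}\mu\,w'$, which under \eqref{eq:symp-z} corresponds to the unprimed argument $x_{r}\mu\,w$, i.e. the old argument $x_{r}w$ rescaled by the single scalar $\mu$.

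Next I would substitute $z' = x_{r}\mu w'$ into the theta transformation law \eqref{eq:symp-theta}, using $\kappa' = \Omega'a'+b'$ obtained from \eqref{eq:symp-char}, to express $\vartheta[\kappa'](x_{r}\mu w';\Omega')$ as $\varepsilon(\Lambda)\,e^{-i\pi\phi(a,b,\Lambda)}\det(C\Omega+D)^{1/2}\,e^{\sum Q_{ij}z_{i}z_{j}}\,\vartheta[\kappa](x_{r}\mu w;\Omega)$, where the Gaussian is evaluated at $z=x_{r}\mu w$. The companion step is to produce the matching law for $\zanalogp(x_{r}\mu w';\Omega')$ from the relation \eqref{eq:z-theta} together with Remark~\ref{rem:z-tranf}: because $\zanalog$ is the defining function of $\Theta_{\kappa}$ near the origin of $\Jacz$, it inherits the same $\varepsilon(\Lambda)$, $\phi(a,b,\Lambda)$ and $\det(C\Omega+D)^{1/2}$ automorphy attached to the characteristic of $\kappa$, so that all of these cancel in the ratio $\zanalogp/\vartheta[\kappa']$. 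The only potential survivor is the quadratic Gaussian $e^{\pm\sum Q_{ij}z_{i}z_{j}}$; evaluated at $z=x_{r}\mu w$ it equals $e^{\pm x_{r}^{2}\mu^{2}\tilde{Q}}$ with $\tilde{Q}=\sum Q_{ij}w_{i}w_{j}$ a scalar, and on forming the product over $r=1,\dots,n$ the exponent becomes $\pm\mu^{2}\tilde{Q}\sum_{r}x_{r}^{2}=\pm\mu^{2}\tilde{Q}\,\pont_{1}(M)$, which vanishes because $M$ is string, $\thalf\pont_{1}(M)=0$. This is exactly where hypothesis (3) is used, and it eliminates the $Q$-term completely.

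At this stage the left-hand integrand has become $\prod_{r}\zanalog(x_{r}\mu w;\Omega)/\vartheta[\kappa](x_{r}\mu w;\Omega)$, namely the right-hand integrand with every Chern root $x_{r}$ replaced by $\mu x_{r}$. The last step invokes hypothesis (4): the fibre integral $\int_{M}$ selects the top, degree-$2n$ part of a symmetric power series in the degree-two classes $x_{r}$, and that part is homogeneous of degree $n$ in the $x_{r}$; hence the substitution $x_{r}\mapsto\mu x_{r}$ pulls out precisely the factor $\mu^{n}$. Since $\varepsilon(\Lambda)\in\bbZ_{8}$ forces $\overline{\varepsilon(\Lambda)}=\varepsilon(\Lambda)^{-1}$, we have $\mu^{n}=\varepsilon(\Lambda)^{-n}e^{+i\pi n\phi(\delta)}\det(C\bar{\Omega}+D)^{n/2}$, which is exactly the prefactor asserted in \eqref{eq:transf-theta}.

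I expect the main obstacle to be the companion transformation law for $\zanalog$ in the second paragraph. Unlike $\vartheta[\kappa]$, which is a genuine theta function with the textbook automorphy factor, $\zanalog$ is defined only locally, as the transverse coordinate cutting out the theta divisor near the origin of $\Jacz$, so establishing that it carries the identical $\varepsilon(\Lambda)\,e^{-i\pi\phi(a,b,\Lambda)}\det(C\Omega+D)^{1/2}$ factor, and correctly isolating whatever Gaussian it does or does not contribute relative to $\vartheta[\kappa]$, is the delicate point; this is precisely what the relation \eqref{eq:z-theta} and Remark~\ref{rem:z-tranf} are there to supply.
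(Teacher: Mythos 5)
Your proposal is correct and follows essentially the same route as the paper: the paper's own ``proof'' is just the sentence preceding the theorem listing the four ingredients --- the relation \eqref{eq:z-theta}, Remark~\ref{rem:z-tranf}, the string condition $\thalf\pont_{1}(M)=0$ killing the quadratic Gaussian, and the degree-$n$ homogeneity enforced by $\int_{M}$ --- and you have expanded exactly those, correctly isolating the scalar $\mu$ from \eqref{eq:symp-hp-h} and extracting $\mu^{n}=\varepsilon(\Lambda)^{-n}e^{+i\pi n\phi(\delta)}\det(C\bar{\Omega}+D)^{n/2}$. The only cosmetic difference is that you transform $\vartheta[\kappa']$ via \eqref{eq:symp-theta} and argue $\zanalogp$ inherits the same automorphy, whereas the paper works directly with the invariance of the combination $e^{-\pi i\sum z(\Omega-\bar{\Omega})^{-1}z}\,\zanalog/\vartheta[\kappa]$; both leave only a factor quadratic in $z$ that the string condition eliminates.
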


For a generic odd spin structure $\delta$, the determinant line
bundle $\DET(\partial_{\delta})$ is the line bundle dual to the
complex line bundle generated by $\hb_{\delta}^{2}$ in $\Hzo$.
Equation~\eqref{eq:symp-hdelta} tells us how the line transforms under
symplectic transformations thus we see that \eqref{eq:transf-theta} is
the transformation law for a section of the determinant line bundle
$\DET(\partial_{\delta})^{n}$ in the trivialization given by
$(1/\hb_{\delta}^{2})^{n}$ hence
\begin{thm}
    \label{thm:inv-section}
    The function
    \begin{equation}
	 \int_{M} \prod_{r=1}^{n} \frac{\zanalog\left(x_{r}\;
	 z\left( \hb_{\delta}^{2} \right); \Omega
	 \right)}{\vartheta[\kappa]\left(x_{r}\; z\left(
	 \hb_{\delta}^{2} \right); \Omega \right)} : 
	 \Teichodd(\Sigma) \to \bbC
	\label{eq:inv-section}
    \end{equation}
    represents a section of the line bundle
    $\DET(\partial_{\delta})^{n} \to \Mododd(\Sigma)$.
\end{thm}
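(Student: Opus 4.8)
The plan is to reduce the assertion to an equivariance statement and then read off the result from the transformation laws already established. A function on $\Teichodd(\Sigma)$ represents a section of a bundle pulled back from $\Mododd(\Sigma)$ precisely when the corresponding section over $\Teichodd(\Sigma)$ is invariant under the $\MCG(\Sigma)$-action covering the deck action. Using the exact sequence \eqref{eq:MCG-seq}, I would first dispose of the Torelli part: by Lemma~\ref{lem:foo} the group $\Torelliodd(\Sigma)$ leaves $\DET(\partial_{\delta})$ invariant, and since the abelian differentials $\omega_{k}$, the period matrix $\Omega$, and hence $\hb_{\delta}^{2}$ depend only on the homology classes of the chosen cycles, every ingredient of the integrand in \eqref{eq:inv-section} is fixed by $\Torelliodd(\Sigma)$. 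It therefore suffices to check invariance under $\Sp(2g,\bbZ) = \MCG(\Sigma)/\Torelli(\Sigma)$.

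Next I would pin down the trivialization in which the claim is phrased. As recalled just before the statement, for generic $\delta$ the line bundle $\DET(\partial_{\delta})$ is dual to the complex line spanned by $\hb_{\delta}^{2}$ in $\Hzo$; consequently $(1/\hb_{\delta}^{2})^{n}$ is a local trivializing section of $\DET(\partial_{\delta})^{n}$. Writing $F$ for the integral in \eqref{eq:inv-section}, the candidate section is $s = F\cdot(1/\hb_{\delta}^{2})^{n}$, and $F$ is exactly its scalar coefficient in this trivialization. Descent to $\Mododd(\Sigma)$ then amounts to showing that $s$ is unchanged when $(\delta,\Omega)$ is replaced by its symplectic transform $(\delta',\Omega')$.

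The computation is the bookkeeping of two transformation laws against each other. Theorem~\ref{thm:transf-integral} gives $F' = \varepsilon(\Lambda)^{-n}\, e^{+i\pi n\phi(\delta)}\, \det(C\bar{\Omega}+D)^{n/2}\, F$. For the trivializing factor I would conjugate \eqref{eq:symp-hdelta}; since $\phi$ is real this produces $(\hb'_{\delta'})^{2} = \overline{\varepsilon(\Lambda)}\, e^{+i\pi\phi(\delta)}\, \det(C\bar{\Omega}+D)^{1/2}\, \hb_{\delta}^{2}$, so the dual trivialization $(1/\hb_{\delta}^{2})^{n}$ scales by the inverse $n$th power $\overline{\varepsilon(\Lambda)}^{-n}\, e^{-i\pi n\phi(\delta)}\, \det(C\bar{\Omega}+D)^{-n/2}$. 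Multiplying the two factors, the exponentials $e^{\pm i\pi n\phi(\delta)}$ and the determinants $\det(C\bar{\Omega}+D)^{\pm n/2}$ cancel in pairs; and because $\varepsilon$ takes values in $\bbZ_{8}$, so that $\lvert\varepsilon(\Lambda)\rvert=1$ and $\overline{\varepsilon(\Lambda)}=\varepsilon(\Lambda)^{-1}$, the prefactor $\varepsilon(\Lambda)^{-n}\overline{\varepsilon(\Lambda)}^{-n}$ equals $1$. Hence the transformation factors of $F$ and of $(1/\hb_{\delta}^{2})^{n}$ are mutually inverse, $s$ is $\Sp(2g,\bbZ)$-invariant, and together with the Torelli step it descends to $\Mododd(\Sigma)$.

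The genuinely delicate point is not this arithmetic but getting the duality and the conjugation exactly right: one must use that $\DET(\partial_{\delta})$ is dual to the $\hb_{\delta}^{2}$-line (so its transformation factor inverts) and that it is the \emph{antiholomorphic} spinor $\hb_{\delta}$ that enters \eqref{eq:inv-section} (so that $\det(C\bar{\Omega}+D)$, and not $\det(C\Omega+D)$, appears in both factors and they can match). All the hard modular and analytic content has already been absorbed into Theorem~\ref{thm:transf-integral} and \eqref{eq:symp-hdelta}, so once the trivialization is correctly identified the theorem follows with no further estimates.
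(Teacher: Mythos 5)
Your argument is correct and is essentially the paper's own: the paper likewise observes that $\DET(\partial_{\delta})$ is dual to the line spanned by $\hb_{\delta}^{2}$ and that the transformation law \eqref{eq:transf-theta} of the integral is precisely that of a section of $\DET(\partial_{\delta})^{n}$ in the trivialization $(1/\hb_{\delta}^{2})^{n}$, the factors $e^{i\pi n\phi(\delta)}$ and $\det(C\bar{\Omega}+D)^{n/2}$ cancelling against the conjugate of \eqref{eq:symp-hdelta} and $\lvert\varepsilon(\Lambda)\rvert=1$ disposing of the remaining phase. You merely make explicit the Torelli reduction (Lemma~\ref{lem:foo} and Section~\ref{sec:torelli}) and the arithmetic that the paper leaves as a one-line remark.
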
 

A drawback of the two theorems above is that the integrals are not
holomorphic.  They will be reformulated shortly to behave in a more
holomorphic form analogous to the genus 1 case.  To do this we have to
understand the transformation of fermion determinants under the
geometric symplectic action.  Much of our intuition draws from the
genus $1$ case and \ref{sec:genus-1} is devoted to it and the
insights it gives for higher genus.  We hope this review of the genus
one case will be illuminating.

\section{Quadratic Refinements and Spin Structures}
\label{sec:quarfs}

\subsection{Introduction}
\label{sec:quarf-intro}

A choice of symplectic basis for $H_{1}(\Sigma,\bbZ)$ gives a period
matrix $\Omega$, normalized holomorphic abelian differentials, and a
spin structure $\sqrt{K}$; see \ref{sec:app-kappa}.  The group
$\Sp(2g,\bbZ)$ operates on $H_{1}(\Sigma,\bbZ)$ and
$H^{1}(\Sigma,\bbZ_{2})$ and so induces an operation on spin
structures.  To see this action explicitly on $\Teichspin(\Sigma)$ we
introduce the space of quadratic refinements of cup product on
$H^{1}(\Sigma,\bbZ_{2}) \to H^{2}(\Sigma,\bbZ_{2}) \approx \bbZ_{2}$
which are called \emph{quarfs}.  Though these objects are well known,
we give an expository account of them.  Following
Atiyah~\cite{Atiyah:spin}, we exhibit a map from spin structures to
quarfs.  Both spaces are principal homogeneous spaces of
$H^{1}(\Sigma,\bbZ_{2})$ and the map commutes with the action of
$H^{1}(\Sigma,\bbZ_{2})$.  Quarfs will help us compute how
$\Sp(2g,\bbZ)$ acts on $\Teichspin(\Sigma)$.

\subsection{Quarf Primer}
\label{sec:quarf-primer}

Let $V$ be a $\bbZ_{2}$ vector space and let $b$ be a nondegenerate
bilinear form on $V$.  A quadratic refinement (quarf) of $b$ is a
function $q: V \to \bbZ_{2}$ with the property that
\begin{equation}
    q(v + w) - q(v) -q(w) +q(0) = b(v,w) \,.
    \label{eq:def-quarf}
\end{equation}
The definition implies that $b$ is a symmetric bilinear form and
$b(v,v)=0$.

Let $T_{z}$ be translation by $z \in V$ and let $q_{z} = q \circ
T_{z}$.  $q_{z}$ is a quarf:
\begin{align*}
    q_{z}(v + w) + q_{z}(v) +q_{z}(w) +q_{z}(0) & = q(v+w+z) +q(v+z) 
    + q(w+z) +q(z)\,,  \\
     & = \underbrace{q(v + w+z) + q(v) + q(w+z) + q(0)}_{b(v,w+z)} \\
     & \qquad\qquad + q(v) + \underbrace{q(w+z)}_{\text{sum cancel}} + q(0) \\
     &\quad +q(v+z) + \underbrace{q(w+z)}_{\text{sum cancel}} +q(z) \,, \\
     & = b(v,w+z) + b(v,z) \,,\\
     &= b(v,w)\,.
\end{align*}
Also if $\tilde{q}_{r}(v) = q(v) + r$ with $r \in \bbZ_{2}$ then
$\tilde{q}_{r}$ is a quarf because $4r=0$.

Let $Q_{0}$ be the quarfs with $q(0)=0$ and let $Q_{1}$ be the quarfs
with $q(0)=1$.  Let $q_{1},q_{2} \in Q_{0}$ or let $q_{1},q_{2} \in
Q_{1}$ then $\lambda = q_{2} - q_{1}$ satisfies (1) $\lambda(0)=0$ and
(2) $\lambda(v+w) = \lambda(v) + \lambda(w)$.  
Over the field $\bbZ_{2}$, the two conditions above imply that
$\lambda$ is a linear functional, \emph{i.e.}, $\lambda$ is in
$V^{*}$, the dual space of $V$.  Thus the number of distinct quadratic
forms is $2^{\dim V}$ for both $Q_{0}$ and $Q_{1}$.  Also since $b$ is
non-degenerate there exists a $t \in V$ such that
$\lambda(v) = b(v,t)$.

\begin{thm}
    \label{thm:unique}
    Let $q$ be a quarf. If $q = q \circ T_{w}$ for some $w\in V$, then 
    $w=0$.
\end{thm}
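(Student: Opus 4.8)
The plan is to feed the hypothesis $q = q\circ T_{w}$ directly into the defining relation \eqref{eq:def-quarf} and read off a statement about the bilinear form $b$. First I would write \eqref{eq:def-quarf} for the pair $(v,w)$, namely $q(v+w) - q(v) - q(w) + q(0) = b(v,w)$, and let this hold for an arbitrary $v \in V$. The hypothesis is precisely that $q(v+w) = q(v)$ for every $v$, so (working over $\bbZ_{2}$) the terms $q(v+w)$ and $q(v)$ cancel, leaving
\begin{equation*}
    b(v,w) = q(0) - q(w),
\end{equation*}
a quantity that does not depend on $v$ at all.

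The crux of the argument is then the observation that $v \mapsto b(v,w)$ is a linear functional on $V$, so the only way it can be independent of $v$ is for that constant value to be zero. Concretely, evaluating the displayed identity at $v=0$ gives $b(0,w)=0$, which pins down $q(0)-q(w)=0$ and hence forces $b(v,w)=0$ for all $v \in V$. Finally I would invoke the nondegeneracy of $b$, which is assumed from the start of the quarf primer: an element $w$ that pairs trivially with every $v$ must vanish, so $w=0$, completing the proof.

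I do not expect any genuine obstacle here; the entire content is the one-line collapse of the linear functional $b(\,\cdot\,,w)$ to a constant once the translation invariance of $q$ is imposed, followed by nondegeneracy. The only point that warrants a moment of care is the bookkeeping of signs in \eqref{eq:def-quarf}, which over $\bbZ_{2}$ is harmless since $q(v)+q(v)=0$.
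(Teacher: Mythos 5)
Your proposal is correct and follows essentially the same route as the paper: substitute the translation invariance into the defining relation \eqref{eq:def-quarf} to conclude that $b(\,\cdot\,,w)$ is constant, then invoke nondegeneracy. Your extra step of evaluating at $v=0$ to see that the constant is zero is a slightly more explicit version of what the paper leaves implicit, but the argument is the same.
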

\begin{proof}
    $q(v) = (q\circ T_{w})(v)= q(v+w) = q(v) + q(w) + q(0) + b(v,w)$
    implying $b(v,w) = q(w) + q(0)$ for all $v\in V$.   The right
    hand side is independent of $v$ and the bilinear form is
    non-degenerate therefore $w=0$.
\end{proof}

\begin{thm}
    \label{thm:quarf}
    Let $q$ and $q'$ be quarfs; then there exists a $w \in V$ such that
    $q'_{w} = q'\circ T_{w} = q + \epsilon$ where $\epsilon = q'(w) + 
    q(0)$.
\end{thm}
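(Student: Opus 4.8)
The plan is to obtain the required $w$ by translating $q'$ until it agrees with $q$ up to an additive constant. The mechanism is already available: two quarfs for the same bilinear form $b$ that take the same value at the origin differ by a linear functional, and by non-degeneracy of $b$ that functional has the form $b(\,\cdot\,,t)$ for a (unique) $t\in V$. So I would read off this $t$ and set $w=t$.

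First I would arrange equal values at the origin. Replacing $q'$ by $\hat{q}' = q' + \bigl(q'(0)+q(0)\bigr)$ gives another quarf (adding a constant preserves the defining identity, as already noted), and $\hat{q}'(0)=q(0)$, so $\hat{q}'$ and $q$ both lie in $Q_0$ or both lie in $Q_1$. The established fact then makes $\lambda := \hat{q}' - q$ a linear functional, and non-degeneracy of $b$ produces $t\in V$ with $\lambda(v)=b(v,t)$. Unwinding the constant, this is the statement $q'(v)=q(v)+b(v,t)+q'(0)+q(0)$ for all $v\in V$. I then take $w=t$.

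The verification is a one-line $\bbZ_{2}$ computation. Writing the quarf identity for $q'$ as $q'(v+t)=q'(v)+q'(t)+q'(0)+b(v,t)$ and substituting the above expression for $q'(v)$, the two copies of $b(v,t)$ cancel over $\bbZ_{2}$, as does the pair $q'(0)$, leaving $q'_{w}(v)=q'(v+t)=q(v)+\bigl(q'(t)+q(0)\bigr)$. Thus $q'_w = q+\epsilon$ with $\epsilon = q'(t)+q(0)=q'(w)+q(0)$, exactly as claimed; evaluating at $v=0$ gives an independent check, since $q'_w(0)=q'(t)$ must equal $q(0)+\epsilon$. (Alternatively one may substitute directly into the quarf identity for $q$ and use $b(t,t)=0$, reaching the same constant in the form $q(t)+q'(0)$.)

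I do not expect a serious obstacle. The one genuine structural input is non-degeneracy of $b$, which both produces the representing vector $t$ and, via the mechanism of Theorem~\ref{thm:unique}, would make it unique. The only place demanding care is the bookkeeping of the additive $\bbZ_{2}$ constants $q'(0)$ and $q(0)$ together with the cancellation of the $b(v,t)$ terms, since a single slip there would corrupt the precise value $\epsilon = q'(w)+q(0)$.
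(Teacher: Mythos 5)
Your proposal is correct and follows essentially the same route as the paper: both arguments normalize the two quarfs to have equal value at the origin, invoke the established fact that such a difference is a linear functional represented as $b(\cdot,t)$ by non-degeneracy, set $w=t$, and verify via the quarf identity that the residual $b(v,t)$ terms cancel over $\bbZ_{2}$, leaving $\epsilon=q'(w)+q(0)$. The only cosmetic difference is that you add the constant $q'(0)+q(0)$ to $q'$ while the paper subtracts $q'(0)$ and $q(0)$ from $q'$ and $q$ respectively.
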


\begin{proof}
    $q'(v) - q'(0)$ and $q(v) - q(0)$ are in $Q_{0}$ so their
    difference is in $V^{*}$ and given by $b(v,t)$ for some $t\in V$.
    Hence
    \begin{align*}
	q'_{w}(v) & = q'(v+w) \,, \\
	 & = q'(v) + q'(w) + q'(0) + b(v,w) \,, \\
	 & = (q'(0) + q(v) -q(0) + b(v,t)) + q'(w) + q'(0) + b(v,w) \,, \\
	 & = q(v) + b(v,t+w) + q'(w) + q(0)\,.
    \end{align*}
    Choose $w = t$ so that $q'_{w} = q + \epsilon$ where $\epsilon =
    q'(w)+q(0)$.
\end{proof}

In general $T_{z}$ does not act on $Q_{0}$ or $Q_{1}$ because $q(0)$
is not necessarily the same as $(q \circ T_{z})(0) = q(z)$.  This can
be easily fixed by defining $\mu_{z} : Q_{0} \to Q_{0}$ and $\nu_{z} :
Q_{1} \to Q_{1}$ by $(\mu_{z}q)(v) = (q\circ T_{z})(v) - q(z)$ and
$(\nu_{z}q)(v) = (q\circ T_{z})(v) - q(z) +1$.  The map $\kappa: q
\mapsto q-1$ gives an isomorphism of $Q_{0}$ with $Q_{1}$.

\begin{thm}
    \label{thm:symp-basis}
    There exists a symplectic basis $(\alpha_{j},\beta_{k})$ for the
    non-degenerate bilinear form $b$ such that $b(\alpha_{j},\beta_{k}) =
    \delta_{jk}$.  
\end{thm}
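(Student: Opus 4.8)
The plan is to prove this by induction on $\dim V$, assembling the symplectic basis one hyperbolic pair at a time. The crucial observation, already recorded just after \eqref{eq:def-quarf}, is that $b$ is not merely symmetric but \emph{alternating}: $b(v,v)=0$ for every $v \in V$. Over $\bbZ_{2}$ this is precisely the condition that lets one put $b$ into symplectic normal form, so the statement is the characteristic-$2$ analogue of the standard construction of a Darboux basis for a non-degenerate alternating form.

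For the inductive step I would assume $\dim V \geq 1$ and pick any nonzero $\alpha_{1} \in V$. Since $b$ is non-degenerate there is some vector pairing nontrivially with $\alpha_{1}$; over $\bbZ_{2}$ the only nonzero value is $1$, so I can choose $\beta_{1}$ with $b(\alpha_{1},\beta_{1})=1$. Because $b(\alpha_{1},\alpha_{1})=0$, the vectors $\alpha_{1},\beta_{1}$ are independent and span a $2$-dimensional subspace $U$ on which $b$ has matrix $\bigl(\begin{smallmatrix} 0 & 1 \\ 1 & 0 \end{smallmatrix}\bigr)$; hence $b|_{U}$ is non-degenerate.

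Next I would split off this hyperbolic plane. Set $W = U^{\perp} = \{w \in V : b(\alpha_{1},w)=b(\beta_{1},w)=0\}$. The explicit element $u = b(\beta_{1},v)\,\alpha_{1} + b(\alpha_{1},v)\,\beta_{1}$ satisfies $v - u \in W$, which gives $V = U + W$, while $U \cap W = 0$ follows from non-degeneracy of $b|_{U}$; thus $V = U \oplus W$. Non-degeneracy of $b$ then descends to $W$: if $w \in W$ pairs trivially with all of $W$, it already pairs trivially with $U$ (being in $U^{\perp}$), hence with all of $V = U \oplus W$, forcing $w=0$. Applying the inductive hypothesis to $(W, b|_{W})$ yields a symplectic basis $(\alpha_{2},\beta_{2},\dotsc,\alpha_{g},\beta_{g})$ of $W$; since these vectors lie in $U^{\perp}$ they are $b$-orthogonal to $\alpha_{1}$ and $\beta_{1}$, so adjoining $(\alpha_{1},\beta_{1})$ produces the desired basis of $V$. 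The base case $\dim V = 0$ is the empty basis, and the construction incidentally shows $\dim V$ is automatically even.

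The linear algebra of the splitting is routine; the one point that genuinely requires care---and the main thing I would watch---is the characteristic-$2$ subtlety. In characteristic $2$ a symmetric form need not be alternating, and it is only the hypothesis $b(v,v)=0$ (rather than mere symmetry) that guarantees the diagonal entries of $b|_{U}$ vanish, so that $U$ is a genuine hyperbolic plane that splits off cleanly. Once that is in hand every step above goes through verbatim over $\bbZ_{2}$.
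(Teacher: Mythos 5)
Your proof is correct and follows essentially the same route as the paper's: induction on $\dim V$, splitting off a hyperbolic pair $(\alpha_{1},\beta_{1})$ at each step. The paper clears the off-diagonal block $S$ by symmetric row and column operations, which is the same computation as your explicit projection $v \mapsto v - b(\beta_{1},v)\alpha_{1} - b(\alpha_{1},v)\beta_{1}$ onto $U^{\perp}$; your version just spells out the direct-sum decomposition, the descent of non-degeneracy to $W$, and the characteristic-$2$ point that $b(v,v)=0$ is what makes $U$ a genuine hyperbolic plane.
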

\begin{proof}
    (By induction) Assume the theorem for a vector space of dimension
    $2(g-1)$.  Chose $\alpha_{1}$ and find a $\beta_{1}$ such that
    $b(\alpha_{1},\beta_{1})=1$.  Complete $\alpha_{1}, \beta_{1}$ to
    a basis for the vector space $V$, $\dim V = 2g$.  With respect to
    this basis the matrix for $b$ is of the form
    \begin{equation*}
	b = \begin{pmatrix}
	    \begin{smallmatrix}
		0 & 1 \\
		1 & 0
	    \end{smallmatrix}
	     & S  \\
	    S^{t} & b'  
	\end{pmatrix}
    \end{equation*}
    Elementary row and column operations (done symmetrically and using
    the upper left block) can be used to define a new basis where
    $S=0$.  The problem is now reduced to finding a symplectic basis
    for $b'$, the $2(g-1)$ dimensional case.
\end{proof}

Choose a fixed symplectic basis $(\alpha_{j},\beta_{k})$ and define
the basic quarf $\qbasic$ by
\begin{equation}
    \qbasic(x\cdot 
    \alpha + y \cdot \beta) = b(x\cdot\alpha, y\cdot\beta) = x\cdot y\,.
    \label{eq:def-qbasic}
\end{equation}
It is easy to verify that $\qbasic$ is a quarf.  
\begin{thm}
    \label{thm:qbasic-zeroes}
    $\qbasic$ has $2^{g-1}(2^{g}
    +1)$ zeroes and takes the value $1$ at $2^{g-1}(2^{g} -1)$ points.
\end{thm}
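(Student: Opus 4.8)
The plan is to reduce the statement to a two-term counting problem. Writing a general element of $V$ as $x\cdot\alpha + y\cdot\beta$ with $x,y\in\bbZ_{2}^{g}$, equation~\eqref{eq:def-qbasic} gives $\qbasic(x\cdot\alpha+y\cdot\beta)=x\cdot y=\sum_{i}x_{i}y_{i}$. Let $N_{0}$ and $N_{1}$ denote the number of pairs $(x,y)$ with $x\cdot y=0$ and $x\cdot y=1$ respectively. Since $V$ has $2^{2g}$ elements we immediately have $N_{0}+N_{1}=2^{2g}$, so it suffices to determine the single extra quantity $N_{0}-N_{1}$; the two claimed values then drop out of solving a $2\times 2$ linear system.

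The key step is to evaluate $N_{0}-N_{1}=\sum_{x,y\in\bbZ_{2}^{g}}(-1)^{x\cdot y}$ by exploiting that the exponent splits across coordinates. I would factorize the sum as $\sum_{x}\prod_{i=1}^{g}\bigl(\sum_{y_{i}\in\bbZ_{2}}(-1)^{x_{i}y_{i}}\bigr)$, and observe that each inner factor equals $1+(-1)^{x_{i}}$, which is $2$ when $x_{i}=0$ and $0$ when $x_{i}=1$. Hence the product over $i$ vanishes unless $x=0$, in which case it equals $2^{g}$, so $N_{0}-N_{1}=2^{g}$. Combining with $N_{0}+N_{1}=2^{2g}$ yields $N_{0}=2^{g-1}(2^{g}+1)$ and $N_{1}=2^{g-1}(2^{g}-1)$, as asserted.

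An equivalent route, more in the spirit of the inductive argument in Theorem~\ref{thm:symp-basis}, is to isolate the first symplectic pair: writing $\qbasic=x_{1}y_{1}+\qbasic'$, where $\qbasic'$ is the basic quarf on the complementary $2(g-1)$-dimensional space, one sees that the three choices $(x_{1},y_{1})\neq(1,1)$ contribute a zero exactly when $\qbasic'=0$, while $(x_{1},y_{1})=(1,1)$ flips the sign and contributes a zero when $\qbasic'=1$. This gives the recursion $N_{0}(g)=3N_{0}(g-1)+N_{1}(g-1)$, which with the base case $g=1$ (three zeroes, one value~$1$) reproduces the same closed form.

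There is no genuine obstacle here; the only point requiring a little care is the coordinate-wise factorization of the character sum (equivalently, the bookkeeping in the recursion), which relies on the diagonal form $x\cdot y$ of $\qbasic$ in a symplectic basis, guaranteed by Theorem~\ref{thm:symp-basis}.
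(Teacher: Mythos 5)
Your proof is correct. Your second route --- isolating one symplectic pair and deriving the recursion $N_{0}(g)=3N_{0}(g-1)+N_{1}(g-1)$ --- is precisely the paper's argument, which inducts on $g$ via $(x\cdot y)_{2g}=(x\cdot y)_{2(g-1)}+x_{g}\cdot y_{g}$ and counts $2^{g-2}(2^{g-1}+1)\cdot 3+2^{g-2}(2^{g-1}-1)\cdot 1$ zeroes. Your first route, evaluating the character sum $N_{0}-N_{1}=\sum_{x,y}(-1)^{x\cdot y}=2^{g}$ by coordinate-wise factorization and then solving the $2\times 2$ linear system with $N_{0}+N_{1}=2^{2g}$, is a genuinely different and arguably cleaner argument: it yields the closed form in one step without induction, and it makes transparent that the invariant distinguishing the two classes of quarfs is the sign of a Gauss-type sum (essentially the Arf invariant), which is the conceptual content lurking behind the count $2^{g-1}(2^{g}\pm 1)$. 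The paper's induction is more elementary bookkeeping; your character sum generalizes more readily (e.g.\ to counting zeroes of an arbitrary quarf $\qbasic\circ T_{w}+\epsilon$, where the same computation gives $\pm 2^{g}$ according to the Arf invariant). Both rely, as you note, on the diagonal form of $\qbasic$ in a symplectic basis guaranteed by Theorem~\ref{thm:symp-basis}.
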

\begin{proof}
    Easily proved by induction on $g$.  Note that $(x\cdot y)_{2g} =
    (x\cdot y)_{2(g-1)} + x_{g}\cdot y_{g}$.  If $(x\cdot y)_{2g}=0$
    then either $(x\cdot y)_{2(g-1)} =0$ and $x_{g}\cdot y_{g}=0$; or
    $(x\cdot y)_{2(g-1)} =1$ and $x_{g}\cdot y_{g}=1$.  Therefore the
    total number of zeroes is $2^{g-2}(2^{g-1}+1) \cdot 3 +
    2^{g-2}(2^{g-1}-1) \cdot 1$ as required.
\end{proof}

In our case the $2g$ dimensional vector space $V =
H^{1}(\Sigma, \bbZ_{2})$.  If $\xi,\eta \in H^{1}(\Sigma, \bbZ_{2})$
let   $\xi \wedge \eta$ denote the cup product.  If $\gen$ is the
generator of $H^{2}(\Sigma,\bbZ_{2}) \approx \bbZ_{2}$ then we write
$\xi \wedge \eta = (\xi \cup \eta) \gen$.  Our symplectic basis
$(\alpha, \beta)$ for $H^{1}(\Sigma, \bbZ_{2})$ is the mod~$2$
reduction of the dual symplectic basis in \eqref{eq:symp-ab}, 
still denoted  by $(\alpha_{j}, \beta_{k})$.  The basic quarf
$\qbasic$ is given by
\begin{equation}
    \qbasic( x_{j}\alpha_{j} + y_{j} \beta_{j}) = x \cdot y =
    (x_{j}\alpha_{j}) \cup (y_{k} \beta_{k})
    \label{eq:def-q}
\end{equation}
The following two corollaries will connect quarfs to spin structures.
\begin{cor}
    \label{cor:every-quarf}
    Let $\qbasic$ be the basic quarf; then every quarf $q$ can be put
    into the form $q = \qbasic\circ T_{w} + q(w)$ for some $w\in V$.
    The quarf $q$ has $2^{g-1}(2^{g}+1)$ zeroes if and only if
    $q(w)=0$, and $2^{g-1}(2^{g}-1)$ zeroes if and only if $q(w)=1$.
\end{cor}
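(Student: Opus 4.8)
The plan is to derive both assertions directly from Theorems~\ref{thm:quarf}, \ref{thm:unique}, and \ref{thm:qbasic-zeroes}, with no genuinely new computation. First I would obtain the decomposition by invoking Theorem~\ref{thm:quarf} with $q' = \qbasic$: it furnishes a $w \in V$ with $\qbasic \circ T_{w} = q + \epsilon$, where $\epsilon = \qbasic(w) + q(0)$. Rearranging over $\bbZ_{2}$ (where $+\epsilon = -\epsilon$) gives $q = \qbasic \circ T_{w} + \epsilon$, so the only remaining point is to identify $\epsilon$ with $q(w)$. I would do this by evaluating the identity $q(v) = \qbasic(v+w) + \epsilon$ at $v = w$: since $2w = 0$ over $\bbZ_{2}$ and $\qbasic(0) = 0$ by its definition in \eqref{eq:def-qbasic}, we get $q(w) = \qbasic(w+w) + \epsilon = \qbasic(0) + \epsilon = \epsilon$. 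Hence $q = \qbasic \circ T_{w} + q(w)$, exactly the stated form.

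For the zero count I would exploit that translation $T_{w}$ is a bijection of $V$, so the map $v \mapsto v+w$ permutes $V$ and preserves the cardinalities of level sets. There are two cases according to $q(w) \in \{0,1\}$. If $q(w) = 0$, then $q = \qbasic \circ T_{w}$, so the zeros of $q$ correspond bijectively (via $v \mapsto v+w$) to the zeros of $\qbasic$, of which there are $2^{g-1}(2^{g}+1)$ by Theorem~\ref{thm:qbasic-zeroes}. If $q(w) = 1$, then $q(v) = 0$ iff $\qbasic(v+w) = 1$, so the zeros of $q$ correspond to the points where $\qbasic$ takes the value $1$, namely $2^{g-1}(2^{g}-1)$ of them, again by Theorem~\ref{thm:qbasic-zeroes}.

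Finally, to upgrade these two implications to the claimed biconditionals, I would note that the number of zeros of $q$ is intrinsic to $q$, while the two possible counts differ by $2^{g} \neq 0$ and are therefore distinct. Consequently the zero count determines $q(w)$ and conversely, so each implication of the previous paragraph is in fact an equivalence; one can also remark, using Theorem~\ref{thm:unique} (which forces $w = w'$ whenever $\qbasic \circ T_{w} = \qbasic \circ T_{w'}$), that $w$ and hence $q(w)$ are independent of the choices, so $q(w)$ is a genuine invariant of $q$. There is no deep obstacle here; the only steps needing care are the bookkeeping identification $\epsilon = q(w)$ and the observation that the two zero-counts are distinct, which is precisely what makes the ``if and only if'' meaningful.
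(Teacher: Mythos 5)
Your proof is correct and follows essentially the same route as the paper: the decomposition comes from Theorem~\ref{thm:quarf} applied with $q'=\qbasic$, and the zero counts come from Theorem~\ref{thm:qbasic-zeroes} together with the fact that translation preserves the number of zeroes. You supply useful details the paper leaves implicit (the evaluation at $v=w$ identifying $\epsilon=q(w)$, and the observation that the two counts are distinct, which yields the biconditional), but the underlying argument is the same.
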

\begin{proof}
    The first part is an immediate consequence of
    Theorem~\ref{thm:quarf}.  The second part follows from
    Theorem~\ref{thm:qbasic-zeroes} and the observation that $\qbasic
    \circ T_{w}$ has the same number of zeroes as $\qbasic$.
\end{proof}
\begin{cor}
    \label{cor:prin-hom-space}
    The map $w \mapsto \qbasic \circ T_{w}$ is a bijection onto the set
    of quarfs with $2^{g-1}(2^{g}+1)$ zeroes.
\end{cor}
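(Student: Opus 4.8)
The plan is to unpack the word ``bijection'' into three checks—that the map lands in the asserted target set, that it is injective, and that it is surjective—each of which follows quickly from results already in the Quarf Primer, so the work is mostly bookkeeping rather than new computation.

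First I would verify that the map is well defined into the claimed codomain. Since $q_{z} = q\circ T_{z}$ is a quarf whenever $q$ is (as established at the start of Section~\ref{sec:quarf-primer}), each $\qbasic\circ T_{w}$ is again a quarf. Moreover $T_{w}$ is a bijection of $V$, so the substitution $u = v+w$ identifies $\{v : (\qbasic\circ T_{w})(v)=0\}$ with $\{u : \qbasic(u)=0\}$; hence $\qbasic\circ T_{w}$ has exactly as many zeroes as $\qbasic$, namely $2^{g-1}(2^{g}+1)$ by Theorem~\ref{thm:qbasic-zeroes}. Thus the map genuinely takes values in the set of quarfs with $2^{g-1}(2^{g}+1)$ zeroes. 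For injectivity I would argue that $\qbasic\circ T_{w_{1}} = \qbasic\circ T_{w_{2}}$ is equivalent, after the same change of variable, to $\qbasic\circ T_{w_{1}-w_{2}} = \qbasic$; Theorem~\ref{thm:unique} then forces $w_{1}-w_{2}=0$, so $w_{1}=w_{2}$.

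Surjectivity is where I would lean directly on Corollary~\ref{cor:every-quarf}: an arbitrary quarf $q$ can be written $q = \qbasic\circ T_{w} + q(w)$, and the second half of that corollary says $q$ has $2^{g-1}(2^{g}+1)$ zeroes precisely when $q(w)=0$. So any quarf in the target set satisfies $q = \qbasic\circ T_{w}$ for the corresponding $w$ and hence lies in the image. I do not expect a genuine obstacle, as the substance is entirely carried by Theorems~\ref{thm:unique} and \ref{thm:qbasic-zeroes} together with Corollary~\ref{cor:every-quarf}. The one point to state carefully—and the conceptual subtlety worth flagging—is that the codomain consists of \emph{all} quarfs with the maximal zero count, including those with $(\qbasic\circ T_{w})(0)=\qbasic(w)=1$, which lie in $Q_{1}$ rather than $Q_{0}$; recognizing this is what makes the cardinality consistent, since injectivity forces the target set to have $|V| = 2^{2g}$ elements exactly as the bijection predicts.
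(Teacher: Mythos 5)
Your proof is correct and follows essentially the same route as the paper: injectivity from Theorem~\ref{thm:unique} (via the translation-cancellation argument) and surjectivity from Corollary~\ref{cor:every-quarf}, with the well-definedness observation that $\qbasic\circ T_{w}$ has the same zero count as $\qbasic$ already appearing in the paper's proof of that corollary. The extra remarks on cardinality and on the codomain meeting both $Q_{0}$ and $Q_{1}$ are accurate but not needed.
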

\begin{proof}
    Injectiveness is a consequence of Theorem~\ref{thm:unique} and
    surjectiveness is a consequence of the previous corollary.
\end{proof}
This corollary can be restated: The set of quarfs with
$2^{g-1}(2^{g}+1)$ zeroes is a principal homogenous space for $V$.

\subsection{Quarfs and Spin Structures}
\label{sec:quarf-spin}

In \cite{Atiyah:spin}, M.F.~Atiyah, gives a map from spin structures
(square roots of the canonical bundle $K$ of $\Sigma$) to quadratic
refinements (quarfs).  Note that such square roots are a principal
homogeneous space of $H^{1}(\Sigma, \bbZ_{2})$ which is isomorphic to
flat line bundles whose square is the trivial bundle.  Choose a square
root $\spin$ of $K$ then the associated quarf $q_{\spin}$ at $w$ is
the mod~$2$ index of $\dbar\otimes I_{\spin \otimes F_{w}}$ where
$F_{w}$ is the square root of the trivial bundle determined by $w \in
H^{1}(\Sigma,\bbZ_{2})$.  Atiyah shows that $q_{\spin}$ is a quarf:
\begin{equation*}
    q_{\spin}(w_{1} + w_{2}) -q_{\spin}(w_{1}) -q_{\spin}(w_{2}) +
    q_{\spin}(0) = w_{1}\cup w_{2} \,.
\end{equation*}
For even spin structures $q_{\spin}(0)$ takes the value $0$ and for
odd spin structures the value $1$.  Atiyah also shows that $q_{\spin}$ has
$2^{g-1}(2^{g}+1)$ zeroes.
\begin{thm}
    \label{rem:quarf-spin}
    There is a unique even spin structure $\spinbasic$ such that 
    $q_{\spinbasic} = \qbasic$.
\end{thm}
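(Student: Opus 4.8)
The plan is to exploit the fact that both the set of spin structures and the set of quarfs with $2^{g-1}(2^{g}+1)$ zeroes are principal homogeneous spaces for $V = H^{1}(\Sigma,\bbZ_{2})$, and that Atiyah's assignment $\spin \mapsto q_{\spin}$ intertwines the two $V$-actions (as already noted in the introduction to this section). An equivariant map between principal homogeneous spaces for the same group is automatically a bijection, so there is exactly one spin structure landing on any prescribed quarf in the target torsor, in particular on $\qbasic$. The parity of that spin structure is then read off from its value at $0$.

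First I would record the two torsor structures. Spin structures (square roots of $K$) form a principal homogeneous space for $V$ under $\spin \mapsto \spin \otimes F_{w_{0}}$, where $F_{w_{0}}$ is the order-two flat bundle attached to $w_{0}$. On the other side, Corollary~\ref{cor:prin-hom-space} together with Theorem~\ref{thm:unique} identifies $V$ with the set of quarfs having $2^{g-1}(2^{g}+1)$ zeroes via $w \mapsto \qbasic \circ T_{w}$, the transported $V$-action being translation $q \mapsto q \circ T_{w_{0}}$; this action is free by Theorem~\ref{thm:unique} and transitive by Corollary~\ref{cor:every-quarf}, and it preserves the zero count because translation merely permutes $V$.

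Next I would check that Atiyah's map takes values in the target torsor and is equivariant. By construction $q_{\spin}(w)$ is the mod~$2$ index of $\dbar$ on $\spin \otimes F_{w}$, i.e.\ the parity of that spin structure, so the zeroes of $q_{\spin}$ are exactly the $w$ for which $\spin \otimes F_{w}$ is even; as $w$ ranges over $V$ this counts the even spin structures, giving $2^{g-1}(2^{g}+1)$ zeroes independently of $\spin$, which is Atiyah's result. Equivariance is the identity $q_{\spin \otimes F_{w_{0}}}(w) = q_{\spin}(w_{0}+w) = (q_{\spin}\circ T_{w_{0}})(w)$, which follows from $F_{w_{0}} \otimes F_{w} = F_{w_{0}+w}$. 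Thus $\spin \mapsto q_{\spin}$ is a $V$-equivariant map between two torsors, hence a bijection, and since $\qbasic$ has $2^{g-1}(2^{g}+1)$ zeroes by Theorem~\ref{thm:qbasic-zeroes} it lies in the target, yielding a \emph{unique} $\spinbasic$ with $q_{\spinbasic} = \qbasic$.

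Finally, to see that $\spinbasic$ is even rather than odd I would evaluate at $0$: $q_{\spinbasic}(0) = \qbasic(0) = 0$, and $q_{\spinbasic}(0)$ is precisely the parity of $\spinbasic$, so $\spinbasic$ is an even spin structure. I expect the only delicate point to be the bookkeeping that makes the Atiyah map land in the correct torsor and commute with translation; once the interpretation of $q_{\spin}(w)$ as the parity of $\spin \otimes F_{w}$ is in hand, both the zero count and equivariance are immediate, and the rest is the standard rigidity of equivariant torsor maps.
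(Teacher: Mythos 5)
Your proof is correct and takes essentially the same route as the paper's: the paper likewise starts from an arbitrary spin structure $\spin'$, invokes Corollary~\ref{cor:prin-hom-space} to produce the unique $w$ with $q_{\spin'} = \qbasic\circ T_{w}$, and sets $\spinbasic = \spin'\otimes F_{w}$, reading off evenness from $q_{\spinbasic}(0)=\qbasic(0)=0$. Your packaging of this as the rigidity of a $V$-equivariant map between torsors simply makes explicit the equivariance identity $q_{\spin\otimes F_{w_{0}}} = q_{\spin}\circ T_{w_{0}}$ and the zero-count bookkeeping that the paper leaves implicit.
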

\begin{proof}
    Let $\spin'$ be any spin structure.
    Corollary~\ref{cor:prin-hom-space} implies that there exists a
    unique $w$ with $q_{\spin'} = \qbasic \circ T_{w}$, and that
    $\qbasic = q_{\spinbasic}$ where $\spinbasic = \spin' \otimes F_{w}$ is an
    even spin structure.
\end{proof}

The element $w \in H^{1}(\Sigma,\bbZ/2\bbZ)$ determines a spin
structure $\spinbasic \otimes F_{w}$.  The theta characteristic $\delta \in
\thalf \bbZ/\bbZ$ associated with this spin structure is $\delta =
\thalf w$.  The holomorphic section of the determinant line bundle of
the $\dbar_{\spinbasic \otimes F_{w}}$ operator is proportional to
$\vartheta[\delta](0;\Omega)$.

\subsection{Quarfs and Modular Transformations}
\label{sec:quarf-symp-action}

In this section we describe how the symplectic group acts on spin
structures.  Fix a symplectic basis for $H^{1}(\Sigma,\bbZ)$ and which
gives the basic quarf on $\Sigma$.  Consider the set of all
spin structures over $\Met(\Sigma)$ and choose a metric $\bm{g} \in
\Met(\Sigma)$.  We have shown that there is a unique spin structure
$\spinbasic$ over $\bm{g}$ with the property that the associated quarf
$q_{\spinbasic}$ is the basic quarf $\qbasic$.  The map $f\in
\Diff(\Sigma)$ acts on $H_{1}(\Sigma,\bbZ)$ as a symplectic
transformation in $\Sp(2g, \bbZ)$ and therefore induces an action on
$H^{1}(\Sigma,\bbZ_{2})$ via $\Sp(2g,\bbZ_{2})$.  Let $\bm{g}'$ be the
transformed metric $f^{*}\bm{g}$.  There is a unique spin structure
$\spinbasic'$ over the $\bm{g}'$ such that the associated quarf
$q_{\spinbasic'}$ is the basic quarf $\qbasic$.  In general\footnote{
$f^{*}\spinbasic=\spinbasic'$ if $f \in \Torellibig(\Sigma)$; see also 
Section~\ref{sec:gam-2-1}.} $\spinbasic' \neq
f^{*}\spinbasic$.  To compare spin structures at different metrics we use
the basic quarf to single out the reference spin structures.  This
argument tells us that $f^{*}\spinbasic$ and $\spinbasic'$ differ by a square
root of the trivial bundle $F_{t}$ characterized by $t \in
H^{1}(\Sigma,\bbZ_{2})$:
\begin{equation}
    f^{*}\spinbasic = \spinbasic' \otimes F_{t}\,. 
    \label{eq:spin-transform}
\end{equation}
The action on square roots of the trivial bundle is $f^{*}F_{w} =
F_{f^{*}w}$ for $w \in H^{1}(\Sigma,\bbZ_{2})$.  Thus 
$f^{*}(\spinbasic \otimes F_{w}) = \spinbasic' \otimes \bigl( F_{t} \otimes
F_{f^{*}w} \bigr)$.  Below we express this equation explicitly when 
$f^{*}$ is represented by a matrix $\Lambda$.

\begin{thm}
    \label{thm:w-transf}
    If $\Lambda = \bigl( \begin{smallmatrix} A & B \\ C & D
    \end{smallmatrix} \bigr) \in \Sp(2g, \bbZ_{2})$, then $\Lambda$
    acts on spin structures and on quarfs such that $q_{\spinbasic' \otimes
    F_{w'}}(z) = q_{\spinbasic\otimes F_{w}}(\Lambda^{-1}z)$ where $w$ and
    $w'$ are related by
    \begin{equation}
	\begin{pmatrix}    
	    u' \\
	    v'
	\end{pmatrix}
	=
	\begin{pmatrix}    
	    D& C \\
	    B & A
	\end{pmatrix}
	\begin{pmatrix}
	    u \\
	    v
	\end{pmatrix}
	+
	\begin{pmatrix}
	    (CD^{t})_{d} \\
	    (AB^{t})_{d}
	\end{pmatrix},
	\label{eq:w-transf}
    \end{equation}
    and $w = \bigl(\begin{smallmatrix} u \\ v \end{smallmatrix}
    \bigr)$ in terms of the symplectic basis.  Also $u'\cdot w' =
    u\cdot w$ and therefore $\Sp(2g,\bbZ)$ maps even (odd) spin
    structure to even (odd) spin structures respectively.
\end{thm}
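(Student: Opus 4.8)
The plan is to collapse all three assertions into a single functional identity for the basic quarf $\qbasic$ and then read each piece off from it. The geometric input is Atiyah's construction together with \eqref{eq:spin-transform}: since $q_{\spin}(z)$ is the mod~$2$ index of $\dbar\otimes I_{\spin\otimes F_{z}}$ and this index is a diffeomorphism invariant, pulling a spin structure back by an $f$ representing $\Lambda$ precomposes its quarf with the induced automorphism of $V=H^{1}(\Sigma,\bbZ_{2})$. Because $\spinbasic$ and $\spinbasic'$ are (Theorem~\ref{rem:quarf-spin} and the discussion around \eqref{eq:spin-transform}) the unique spin structures over $\bm{g}$ and $\bm{g}'$ realizing the \emph{same} basic quarf $\qbasic$, the action on quarfs is $q\mapsto q\circ\Lambda^{-1}$, with $\Lambda$ acting on $V$ through the mod~$2$ reduction of the homology action \eqref{eq:symp-ab}. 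Using $q_{\spin\otimes F_{w}}(z)=q_{\spin}(w+z)$ from the definition of the Atiyah map, the characterizing identity of the theorem becomes
\begin{equation*}
    \qbasic(w'+z)=\qbasic\!\left(w+\Lambda^{-1}z\right)\qquad\text{for all }z\in V.
\end{equation*}
Everything in the statement is contained in this one identity.

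First I would dispatch the parity claim, which is essentially free. Setting $z=0$ gives $\qbasic(w')=\qbasic(w)$, and since the value at $0$ distinguishes even from odd spin structures (Atiyah, recalled before Theorem~\ref{rem:quarf-spin}), $\Sp(2g,\bbZ)$ preserves the parity $\qbasic(w)=u\cdot v$ and hence sends even to even and odd to odd. Equivalently, $\Lambda^{-1}$ is a bijection fixing $0$, so $q\circ\Lambda^{-1}$ has the same number of zeroes and the same value at $0$ as $q$, and Corollary~\ref{cor:every-quarf} ties the zero count to the parity.

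The substance is the explicit formula \eqref{eq:w-transf}. Here I would expand the displayed identity with the cocycle rule \eqref{eq:def-quarf}, $\qbasic(a+b)=\qbasic(a)+\qbasic(b)+b(a,b)$ over $\bbZ_{2}$, and use that $\Lambda$ preserves $b$. The part linear in $z$ forces $w'$ to be the $H^{1}(\Sigma,\bbZ_{2})$-action of $\Lambda$ on $w$ plus a constant shift (by nondegeneracy of $b$), and in the symplectic basis this linear part is the block matrix $\bigl(\begin{smallmatrix}D&C\\B&A\end{smallmatrix}\bigr)$; the $z$-independent part reproduces the parity identity and is automatically consistent because $f^{*}$ preserves parity. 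To pin the shift I would compute $\qbasic(\Lambda\zeta)-\qbasic(\zeta)$, invoking two structural facts: the symplectic relations \eqref{eq:symp-constraints}, which mod~$2$ give $D^{t}A+B^{t}C\equiv I$ and make $A^{t}C$, $B^{t}D$ (equivalently $CD^{t}$, $AB^{t}$) \emph{symmetric}; and the elementary mod~$2$ lemma that for a symmetric integer matrix $M$ one has $\zeta^{t}M\zeta\equiv M_{d}\cdot\zeta$, with $M_{d}$ the diagonal, since off-diagonal entries pair up and $\zeta_{i}^{2}\equiv\zeta_{i}$. The symmetric quadratic pieces then collapse to the diagonal vectors $(CD^{t})_{d}$ and $(AB^{t})_{d}$, the cross term collapses to $\qbasic(\zeta)$ via $D^{t}A+B^{t}C\equiv I$, and assembling these yields \eqref{eq:w-transf}. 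I would close by noting that \eqref{eq:w-transf} is exactly the mod~$2$ reduction of the classical characteristic law \eqref{eq:symp-char} under $\delta=\tfrac12 w$, which cross-checks the signs.

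The main obstacle is bookkeeping rather than ideas. One must fix the convention by which the abstract $\Lambda$ acts on the coordinates of $z\in H^{1}(\Sigma,\bbZ_{2})$ so that it agrees with the homology action \eqref{eq:symp-ab} — this is what converts $\bigl(\begin{smallmatrix}A&B\\C&D\end{smallmatrix}\bigr)$ into the block pattern $\bigl(\begin{smallmatrix}D&C\\B&A\end{smallmatrix}\bigr)$ of \eqref{eq:w-transf} — and then track transposes carefully through the reduction so that the shift lands on $(CD^{t})_{d}$, $(AB^{t})_{d}$ rather than on $(A^{t}C)_{d}$, $(B^{t}D)_{d}$. Keeping the symmetry identities and the diagonal lemma aligned is the only delicate point.
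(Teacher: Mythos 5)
Your proposal is correct and follows essentially the same route as the paper: establish the identity $q_{\spinbasic'\otimes F_{w'}}(z)=q_{\spinbasic\otimes F_{w}}(\Lambda^{-1}z)$ from the uniqueness of the basic-quarf spin structure, read off the parity claim at $z=0$, and extract \eqref{eq:w-transf} by expanding with the quarf cocycle \eqref{eq:def-quarf}, the mod~$2$ symplectic relations \eqref{eq:symp-constraints}, and the observation that $\zeta^{t}M\zeta\equiv M_{d}\cdot\zeta$ for symmetric $M$. The only cosmetic difference is that the paper isolates the shift via $\qbasic(z)=\qbasic(\Lambda^{-1}z)+b(z,\Lambda w-w')$ and relegates the algebraic verification of $u'\cdot v'=u\cdot v$ to identities \eqref{eq:null} and \eqref{eq:magic}, whereas you obtain that parity statement directly from the $z=0$ specialization, as the paper also notes.
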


\begin{proof}
Under $f \in \Diff(\Sigma)$, the spin structure $\spinbasic' \otimes F_{w'}$
goes to a spin structure $\spinbasic\otimes F_{w}$ and the action on
$H^{1}(\Sigma,\bbZ_{2})$ is given by the symplectic transformation
$\Lambda \in \Sp(2g,\bbZ_{2})$.  Under the action of $f$ we have
\begin{equation}
    q_{\spinbasic' \otimes F_{w'}}(z) = q_{\spinbasic\otimes F_{w}}(\Lambda^{-1}
    z)
    \label{eq:quarf-transf-1}
\end{equation}
as a consequence of Corollary~\ref{cor:prin-hom-space}.  Letting
$z=0$ we see that the even (odd) spin structure $\spinbasic \otimes F_{w'}$
maps to an even (odd) spin structure $\spinbasic \otimes F_{w}$.  The
displayed equation above is equivalent to $\qbasic(z+w') =
\qbasic(\Lambda^{-1} z + w)$ which gives relations between $w$, $w'$
and $\Lambda$.  The definition of quarfs gives
\begin{equation*}
    \qbasic(z) + \qbasic(w') + b(z,w') = \qbasic(\Lambda^{-1}z) + 
    \qbasic(w) + b(\Lambda^{-1}z,w)\,;
\end{equation*}
hence
\begin{equation}
    \qbasic(z) = \qbasic(\Lambda^{-1}z) + b(\Lambda^{-1}z,w) - 
    b(z,w')\,.
    \label{eq:quarf-1}
\end{equation}
Replacing the argument $z$ above by $z_{1} + z_{2}$ and using the
defining properties of quarfs gives $b(z_{1},z_{2}) =
b(\Lambda^{-1}z_{1},\Lambda^{-1}z_{2})$.  If $\Lambda^{*}$ is the
adjoint with respect to the bilinear form $b$ then $\Lambda^{*}\Lambda
=I$.  The equation above may be rewritten as
\begin{equation}
    \qbasic(z) = \qbasic(\Lambda^{-1}z) + b(z,\Lambda w -w') \,.
    \label{eq:quarf-2}
\end{equation}
If we write $z = \bigl(\begin{smallmatrix} x \\ y \end{smallmatrix} 
\bigr)$ then 
\begin{align*}
    \qbasic(z) = x\cdot y &= (A^{t}x + C^{t}y) \cdot (B^{t}x + D^{t}y) + 
    b(z,\Lambda w - w') \\
    & = x\cdot y + x \cdot (AB^{t})_{d} + y \cdot (CD^{t})_{d} 
    + b(z,\Lambda w - w')\,.
\end{align*}
Since $w = \bigl(\begin{smallmatrix} u \\ v
\end{smallmatrix} \bigr)$,
\begin{equation*}
    \begin{pmatrix}    
	u' \\
	v'
    \end{pmatrix}
    =
    \begin{pmatrix}    
	D& C \\
	B & A
    \end{pmatrix}
    \begin{pmatrix}
	u \\
	v
    \end{pmatrix}
    +
    \begin{pmatrix}
	(CD^{t})_{d} \\
	(AB^{t})_{d}
    \end{pmatrix}.
\end{equation*}

Showing that $u'\cdot w' = u\cdot w$ (equivalent to the statement that
odd (even) is mapped to odd (even) spin structures) is a simple
algebraic computation requiring identities \eqref{eq:null} and
\eqref{eq:magic} whose proof we leave as an exercise.
\end{proof}
\begin{equation}
    0= (AB^{t})_{d} \cdot (CD^{t})_{d} =
    (A^{t}C)_{d} \cdot (B^{t}D)_{d}\,.
    \label{eq:null}
\end{equation}
\begin{equation}
    \begin{pmatrix}    
	(A^{t}C)_{d}\\
	(B^{t}D)_{d}
    \end{pmatrix}
    =
    \begin{pmatrix}    
	A^{t}& C^{t} \\
	B^{t} & D^{t}
    \end{pmatrix}
    \begin{pmatrix}    
	(DC^{t})_{d}\\
	(BA^{t})_{d}
    \end{pmatrix}
    \label{eq:magic}
\end{equation}

\subsection{The groups $\Gamma_{2} \subset \Gamma_{1,2} \subset \Sp(2g,\bbZ)$}
\label{sec:gam-2-1}

Let $\Sp(2g,\bbZ/2\bbZ)$ be the mod~$2$ reduction of $\Gamma_{1}=
\Sp(2g,\bbZ)$.  We have the exact sequence
\begin{equation*}
    1 \to \Gamma_{2} \to \Sp(2g,\bbZ) \xrightarrow{\bmod 2\;} 
    \Sp(2g,\bbZ/2\bbZ) \to 1
\end{equation*}
where $\Gamma_{2}$ is the normal subgroup of $\Gamma_{1}$ given by
\begin{equation}
    \Gamma_{2} = \left\{ \Lambda \in \Gamma_{1} \;\mid\; \Lambda = 
    I_{2g} \bmod 2 \right\}\,.
    \label{eq:Gamma-2}
\end{equation}
Let $\Gamma_{1,2} = \left\{ \Lambda \in \Gamma_{1} \;\mid\;
\qbasic(\Lambda z) = \qbasic(z) \bmod 2 \right\}$ where $\qbasic$ is
the basic quarf. Then $\Gamma_{2} \subset \Gamma_{1,2} \subset
\Gamma_{1}$, and
\begin{equation}
    \Gamma_{1,2} = \left\{ \Lambda \in \Gamma_{1} \;\mid\; 
    (AB^{t})_{d} = (CD^{t})_{d} = 0  \bmod 2 \right\}\,.
    \label{eq:Gamma-2-1-2}
\end{equation}
By \eqref{eq:magic}, the conditions defining $\Gamma_{1,2}$ imply that
$(A^{t}C)_{d} = (B^{t}D)_{d} = 0 \bmod 2$.  Note that if $f \in
\Diff(\Sigma)$ represents $\Lambda \in \Gamma_{1,2}$, then
$f^{*}\spinbasic=\spinbasic'$, \emph{i.e.}, $F_{t}$ is the trivial
bundle. See equations \eqref{eq:spin-transform} and \eqref{eq:w-transf}. 
It follows that the action of $f$ on a quarf is a linear 
transformation because the translation part of \eqref{eq:w-transf} 
vanishes.

\section{The Spinor Determinant}
\label{sec:spinor-det}

In our invariant the term $(\det'\dbar_{\delta}/\Nb_{\delta}^{2})^{n}$
appears where $\delta$ is an odd spin structure and $\dim M =2n$.  We
show that this term is related to the determinant section of the
preferred spin structure and $\vartheta(0;\Omega)$; see 
\eqref{eq:detp-section}.  A discussion that
applies to even spin structures may be found in
\cite[Section~5]{Alvarez-Gaume:1986es}.

The idea is basically the following.  Let $\Jacgo$ be the jacobian
torus associated with line bundles with first Chern class $\cclass_{1}
= g-1$.  The $\dbar$ operator coupled to those line bundles has index
zero and therefore there is a determinant line bundle with a canonical
section, a Quillen metric, \emph{etc.}.  The curvature of the Quillen
metric is the standard translationally invariant polarization form on
the torus.  This means that the determinant line bundle has a unique
holomorphic section.  The divisor for the determinant line bundle will
be a translate of the $\Theta$ divisor on the jacobian.  To analyze
this problem we choose a fiducial spin structure $\spinbasic$, and
think of $\Jacgo$ as $\{ \spinbasic \otimes F \;\mid\; F \in \Jacz\}$.
Equivalently a flat line bundle is equivalent to a flat connection and
we can study the holomorphic family of operators $D(A)=
\dbar_{\spinbasic} + \Azo$ where $\Azo \in \Hzo$ is a connection
representing the flat line bundle $F$.  The divisor of the section
$\det D(A)$ will be the standard $\Theta$ divisor
\cite{Alvarez-Gaume:1986es}.

How do we define $\det D(A)$?  This is discussed in detail in
\cite{Alvarez-Gaume:1986es} and we provide a brief overview motivated
by the path integral formulation of quantum field theory.  Assume we
have a chiral right moving Weyl spinor and a chiral left moving Weyl
spinor coupled to our flat connection $A$.  We know that this path
integral can be regulated in a gauge invariant way yielding the answer
$\det D(A)^{*}D(A)$.  If we only had a chiral fermion coupled to the
vector potential then the answer should be $\det D(A)$ (ill defined
for the moment).  We expect that since we can put the right and left
moving systems together into the path integral that $\det(D(A)^{*}D(A)
= (\det D(A)^{*})(\det D(A))$.  The work of Quillen
\cite{Quillen:det-line-bundle} shows that there is no holomorphic
factorization and thus $\det D(A)$ cannot be a holomorphic function of
$A \in \Hzo$ if the determinant multiplication property is to be
valid.  In fact in \cite[eq.~(5.13)]{Alvarez-Gaume:1986es} it is shown
that on the cover $\Hzo$ of the jacobian $\Jacz$, the section $\det
D(z)$ is given by the theta function with general characteristic
$\bigl[ \begin{smallmatrix} u \\
v \end{smallmatrix} \bigr]$ where $z= \Omega u + v$:
\begin{equation}
	\det D(z) = \left(\det \dbar(\spinbasic) \right)
	\frac{\vartheta[z](0)}{\vartheta(0)} = \left(\det
	\dbar(\spinbasic) \right) \frac{e^{i\pi u\cdot \Omega \cdot u +
	2\pi i u\cdot v}\; \vartheta(\Omega u + v)}{\vartheta(0)}\;.
    \label{eq:det-spinor}
\end{equation}
In the above $\det D(0) = \det \dbar(\spinbasic)$.  Because of the
exponential factor the last expression above, $\det D(z)$ is not a
holomorphic function of $z=\Omega u + v$.

\subsection{Bosonization Theorem for Odd Spin Structures}
\label{sec:bosonization}

If an odd spin structure has semi-characteristic $\delta = \Omega a +
b$, then we know that $\det D(\delta)=0$.  Consider $\det D(\delta +
z)$.  Note that
\begin{equation}
    \vartheta(\delta + z) = e^{-i\pi a\cdot \Omega \cdot a -2\pi i a 
    \cdot(z+b)}\; \vartheta[\delta](z)\,,
    \label{eq:theta-char}
\end{equation}
and therefore
\begin{equation}
	\left.  \frac{\partial\det D(\delta + z)}{\partial z^{j}}
	\right\rvert_{z=0} = \frac{\det \dbar(\spinbasic)}{\vartheta(0)}
	\; \frac{\partial \vartheta[\delta](0)}{\partial z^{j}}\,.
    \label{eq:det-spin-1}
\end{equation}
In the above we note that since $\delta$ is an odd spin structure $4
a\cdot b = 1 \bmod 2$.

To connect with $\det'\dbar_{\delta}/\Nb_{\delta}^{2}$ consider $(\det
D^{*}D)(\delta + z)$ in perturbation theory near $z=0$.  Let
$\lambda(z)$ be the smallest eigenvalue.  Since $\lambda(0)=0$,
\begin{equation}
    \lim_{z\to 0} \frac{(\det D^{*}D)(\delta + z)}{\lambda(z)}= 
    \det\nolimits'\dbar_{\delta}^{*}\dbar_{\delta}
    \text{ by definition.}
    \label{eq:det-pert}
\end{equation} 
The best way to compute the left hand side above is to change
viewpoint temporarily and think of $D(z+\delta)$ as a translate of
$\dbar_{\delta}$ by a flat connection.  Thus we define $\cD(z) =
\dbar_{\delta} + \Azo$ acting on sections of the line bundle
$\spin_{\delta}$ associated with the odd semi-characteristic
$\delta$.  Let $\phi(z)$ be the eigensection of $\cD^{*}\cD(z)$ with
eigenvalue $\lambda(z)$.  Note that $\phi(0) = h_{\delta}$ which a
the holomorphic section of $\dbar_{\delta}$.  We now do perturbation
theory.   Begin with the equation
\begin{equation}
    \cD^{*}(z)\cD(z) \phi(z) = \lambda(z) \phi(z)
    \label{eq:ev-eq}
\end{equation}
and differentiate:
\begin{align}
   \cD^{*} \frac{\partial \cD}{\partial z^{i}} \phi + \cD^{*} \cD
   \frac{\partial \phi}{\partial z^{i}}& = \frac{\partial
   \lambda}{\partial z^{i}} \phi + \lambda \frac{\partial
   \phi}{\partial z^{i}} \,,
   \label{eq:pert-1} \\
    \frac{\partial \cD^{*}}{\partial \zbar^{j}} \cD \phi +
    \cD^{*} \cD \frac{\partial \phi}{\partial \zbar^{j}}& =
    \frac{\partial \lambda}{\partial \zbar^{j}}\phi + \lambda 
    \frac{\partial \phi}{\partial \zbar^{j}}\,.
    \label{eq:pert-1b}
\end{align}
Evaluating the above at $z=0$ gives
\begin{align}
   \dbar_{\delta}^{*} \frac{\partial \cD(0)}{\partial z^{i}} h_{\delta} + 
   \dbar_{\delta}^{*}\dbar_{\delta}
   \frac{\partial \phi(0)}{\partial z^{i}}& = \frac{\partial
   \lambda(0)}{\partial z^{i}} h_{\delta}  \,,
   \label{eq:pert-1-0} \\
   \dbar_{\delta}^{*}\dbar_{\delta}
   \frac{\partial \phi(0)}{\partial \zbar^{j}}& =
    \frac{\partial \lambda(0)}{\partial \zbar^{j}} h_{\delta} \,.
    \label{eq:pert-1b-0}
\end{align}
Taking the inner product with $h_{\delta}$  gives
\begin{equation}
   0 = \frac{\partial \lambda(0)}{\partial z^{i}} = \frac{\partial
   \lambda(0)}{\partial \zbar^{j}}\,.
    \label{eq:lambda-der-1}
\end{equation}
Insert this into the previous equations to obtain
\begin{align}
       \left.  \frac{\partial}{\partial z^{i}} (\cD \phi)
       \right\rvert_{z=0} = \frac{\partial \cD(0)}{\partial z^{i}}
       h_{\delta} + \dbar_{\delta} \frac{\partial \phi(0)}{\partial
       z^{i}} & \in \ker \dbar_{\delta}^{*} \,, 
       \label{eq:pert-1-1} \\
   \frac{\partial \phi(0)}{\partial \zbar^{j}}& \in \ker 
   \dbar_{\delta}\,.
    \label{eq:pert-1b-1}
\end{align}

To show that $\partial^{2} \lambda(0)/\partial z^{i} \partial
z^{j}=0$, differentiate \eqref{eq:pert-1} one more time
and obtain
\begin{equation*}
   \cD^{*} \frac{\partial \cD}{\partial z^{i}}
   \frac{\partial\phi}{\partial z^{j}} + \cD^{*} \frac{\partial
   \cD}{\partial z^{j}} \frac{\partial \phi}{\partial z^{i}} +\cD^{*}
   \cD \frac{\partial^{2} \phi}{\partial z^{j} \partial z^{i}} =
   \frac{\partial^{2} \lambda}{\partial z^{j}\partial z^{i}} \phi
   +\frac{\partial \lambda}{\partial z^{i}}
   \frac{\partial\phi}{\partial z^{j}} + 
   \frac{\partial\lambda}{\partial z^{j}} \frac{\partial
   \phi}{\partial z^{i}} + \lambda \frac{\partial^{2}
   \phi}{\partial z^{j} \partial z^{i}}\,.
\end{equation*}
Evaluate this expression at $z=0$ and take the inner product with
$h_{\delta}$ to get the desired result.  Note that since the
eigenvalue $\lambda(z)$ is real, $\partial^{2}
\lambda(0)/\partial \zbar^{i} \partial \zbar^{j}=0$,
easily corroborated by differentiating \eqref{eq:pert-1b}.

Finally we consider the term $\partial^{2} \lambda(0)/\partial z^{i}
\partial \zbar^{j}$.  Differentiate \eqref{eq:pert-1} with respect to
$\zbar^{j}$ and obtain
\begin{gather}
    \frac{\partial \cD^{*}}{\partial \zbar^{j}} \frac{\partial
    \cD}{\partial z^{i}} \phi +\cD^{*} \frac{\partial \cD}{\partial
    z^{i}} \frac{\partial\phi}{\partial \zbar^{j}} + \frac{\partial
    \cD^{*}}{\partial \zbar^{j}} \cD \frac{\partial \phi}{\partial
    z^{i}} + \cD^{*} \cD \frac{\partial^{2} \phi}{\partial \zbar^{j}
    \partial z^{i}} 
    \nonumber \\
    = \frac{\partial^{2} \lambda}{\partial \zbar^{j}
    \partial z^{i}} \phi +\frac{\partial \lambda}{\partial z^{i}}
    \frac{\partial\phi}{\partial \zbar^{j}} + 
    \frac{\partial\lambda}{\partial \zbar^{j}} \frac{\partial
    \phi}{\partial z^{i}} + \lambda \frac{\partial^{2}
    \phi}{\partial \zbar^{j} \partial z^{i}}\,. 
    \label{eq:pert-2} 
\end{gather}
Evaluating at $z=0$ and regrouping terms gives
\begin{equation}
    \frac{\partial \cD^{*}(0)}{\partial \zbar^{j}} \left(
    \frac{\partial \cD(0)}{\partial z^{i}} h_{\delta} + \dbar_{\delta}
    \frac{\partial \phi(0)}{\partial z^{i}} \right) +
    \dbar_{\delta}^{*} \frac{\partial \cD(0)}{\partial z^{i}}
    \frac{\partial\phi(0)}{\partial \zbar^{j}} +
    \dbar_{\delta}^{*}\dbar_{\delta} \frac{\partial^{2}
    \phi(0)}{\partial \zbar^{j} \partial z^{i}} = \frac{\partial^{2}
    \lambda(0)}{\partial \zbar^{j} \partial z^{i}} h_{\delta} \,.
    \label{eq:pert2-0}
\end{equation}
Taking the inner product with $h_{\delta}$ gives
\begin{equation}
    \left\langle h_{\delta}, \frac{\partial \cD^{*}(0)}{\partial
    \zbar^{j}} \left( \frac{\partial \cD(0)}{\partial z^{i}}
    h_{\delta} + \dbar_{\delta} \frac{\partial \phi(0)}{\partial
    z^{i}} \right) \right\rangle = \frac{\partial^{2}
    \lambda(0)}{\partial \zbar^{j} \partial z^{i}} \langle h_{\delta},
    h_{\delta} \rangle \,.
    \label{eq:pert2-1}
\end{equation}
Let $\psi_{\delta} \in \ker \dbar_{\delta}^{*}$ then \eqref{eq:pert-1-1} tells
us that there exist constants $r_{i} \in \bbC$ such that
\begin{equation}
     \frac{\partial \cD(0)}{\partial z^{i}}
	h_{\delta} + \dbar_{\delta} \frac{\partial \phi(0)}{\partial
	z^{i}}  = r_{i}\; \psi_{\delta}\,.
    \label{eq:def-ri}
\end{equation}
Taking the inner product with $\psi_{\delta}$ and using 
$\dbar_{\delta}^{*}\psi_{\delta}=0$,
\begin{equation}
    \left\langle \psi_{\delta}, \frac{\partial \cD(0)}{\partial z^{i}} 
    h_{\delta} \right\rangle = r_{i}\; \langle \psi_{\delta}, \psi_{\delta} \rangle\,.
    \label{eq:soln-ri}
\end{equation}
Next we manipulate \eqref{eq:pert2-1}:
\begin{align*}
    \frac{\partial^{2} \lambda(0)}{\partial \zbar^{j} \partial z^{i}}
    \langle h_{\delta}, h_{\delta} \rangle & = \left\langle
    h_{\delta}, \frac{\partial \cD^{*}(0)}{\partial \zbar^{j}} \left(
    \frac{\partial \cD(0)}{\partial z^{i}} h_{\delta} + \dbar_{\delta}
    \frac{\partial \phi(0)}{\partial z^{i}} \right) \right\rangle, \\
    & = \left\langle \frac{\partial \cD(0)}{\partial z^{j}}
    h_{\delta}, \left( \frac{\partial \cD(0)}{\partial z^{i}}
    h_{\delta} + \dbar_{\delta} \frac{\partial \phi(0)}{\partial
    z^{i}} \right) \right\rangle, \\
    & = \left\langle r_{j}\psi_{\delta} - \dbar_{\delta} \frac{\partial 
    \phi(0)}{\partial z^{j}}, r_{i} \psi_{\delta}  \right\rangle, \\
    & = \overline{r_{j}} \; r_{i} \;\langle \psi_{\delta}, \psi_{\delta} \rangle\,,
\end{align*}
where we used $\dbar_{\delta}^{*} \psi_{\delta}=0$. We rewrite this as
\begin{equation}
    \frac{\partial^{2} \lambda(0)}{\partial \zbar^{j} \partial z^{i}}
    = \frac{1}{\langle h_{\delta}, h_{\delta} \rangle \langle
    \psi_{\delta}, \psi_{\delta} \rangle} \; \overline{\left\langle
    \psi_{\delta}, \frac{\partial \cD(0)}{\partial z^{j}} h_{\delta}
    \right\rangle } \; \left\langle \psi_{\delta}, \frac{\partial
    \cD(0)}{\partial z^{i}} h_{\delta} \right\rangle.
    \label{eq:ev-sec-order}
\end{equation}
Note that as required, the answer is independent of how we scale
$h_{\delta}$ and $\psi_{\delta}$ and that the second derivative
factorizes. We now derive a simple formula for 
\eqref{eq:ev-sec-order}.

Observe that $\dbar_{\delta}: \Lhz \to \Lho \approx 
\left(\Lhz\right)^{*}$ and that the hermitian inner product on our 
line bundles gives us a conjugate linear isomorphism between $\Lhz$ 
and $\Lho$. Hence we can identify $h_{\delta}$ with 
$\psi_{\delta}$ and 
\begin{align*}
    \left\langle \psi_{\delta}, \frac{\partial
	\cD(0)}{\partial z^{i}} h_{\delta} \right\rangle &= 
	\frac{1}{2i} \int_{\Sigma} \frac{\partial
	\cD(0)}{\partial z^{i}} \wedge h_{\delta}^{2}\,,\\
	&=  \frac{1}{2i} \int_{\Sigma} 2\pi i \,
	(\Omega-\bar{\Omega})^{-1}_{ij}\bar{\omega}_{j}\wedge 
	\sum_{k}\frac{\partial \vartheta[\delta](0)}{\partial z^{k}} 
	\omega_{k} \\
	& =  \pi \; \frac{\partial
	\vartheta[\delta](0)}{\partial z^{i}}
\end{align*}
Thus  \eqref{eq:ev-sec-order} becomes
\begin{equation}
    \frac{\partial^{2} \lambda(0)}{\partial \zbar^{j} \partial z^{i}}
    = \pi^{2} \; \frac{1}{\langle h_{\delta},
    h_{\delta} \rangle^{2}} \; \overline{\frac{\partial
    \vartheta[\delta](0)}{\partial z^{j}}} \; \frac{\partial
    \vartheta[\delta](0)}{\partial z^{i}}\,.
    \label{eq:ev-sec-order-1}
\end{equation}
\cite[eq.~(5.11)]{Alvarez-Gaume:1986es} states
\begin{equation}
    \det D(u)^{*}D(u) = \frac{\det
    \dbar_{\spinbasic}^{*}\dbar_{\spinbasic}}{\left\lvert\vartheta(0) 
    \right\rvert^{2}}\;
    \left\lvert\vartheta(u)\right\rvert^{2}
    e^{i\pi(u-\bar{u})(\Omega- \bar{\Omega})^{-1}(u-\bar{u})}\,.
    \label{eq:det-DD}
\end{equation}
Letting $u=\delta +z$ we find
\begin{equation}
    \det D(u)^{*}D(u) = \frac{\det
    \dbar_{\spinbasic}^{*}\dbar_{\spinbasic}}{\left\lvert\vartheta(0) 
    \right\rvert^{2}}\;
    \left\lvert e^{-i\pi a\cdot \Omega \cdot a -2\pi i a 
    \cdot(z+b)}\; \vartheta[\delta](z) \right\rvert^{2}
    e^{i\pi(u-\bar{u})\cdot(\Omega- \bar{\Omega})^{-1} \cdot(u-\bar{u})}\,.
    \label{eq:det-DD-1}
\end{equation}
The Taylor series of the above is
\begin{align}
    \det D(u)^{*}D(u) &= \frac{\det
    \dbar_{\spinbasic}^{*}\dbar_{\spinbasic}}{\left\lvert\vartheta(0)
    \right\rvert^{2}}\; \left\lvert e^{-i\pi a\cdot \Omega \cdot a}\;
    \sum_{j} z^{j} \; \frac{\partial \vartheta[\delta](0)}{\partial
    z^{j}} \right\rvert^{2} e^{i\pi a\cdot(\Omega-
    \bar{\Omega})\cdot a} + O(z^{3})\,,
    \nonumber \\
    &= \frac{\det
    \dbar_{\spinbasic}^{*}\dbar_{\spinbasic}}{\left\lvert\vartheta(0)
    \right\rvert^{2}}\; \left\lvert \sum_{j} z^{j} \; \frac{\partial
    \vartheta[\delta](0)}{\partial z^{j}} \right\rvert^{2} +
    O(z^{3})\,.
    \label{eq:det-DD-2}
\end{align}
Since for a normalized spinor $\hat{h}_{\delta}$ we defined
$h_{\delta} = N_{\delta} \hat{h}_{\delta}$ and therefore $\lvert
N_{\delta} \rvert^{2} = \langle h_{\delta},h_{\delta} \rangle$ we have
\begin{equation}
    \pi^{2} \; \frac{\det\nolimits'
    \dbar_{\delta}^{*}\dbar_{\delta}}{\lvert \langle h_{\delta},
    h_{\delta} \rangle \rvert^{2}} = \pi^{2} \; \frac{\det\nolimits'
    \dbar_{\delta}^{*}\dbar_{\delta}}{\lvert N_{\delta} \rvert^{4}} =
    \frac{\det
    \dbar_{\spinbasic}^{*}\dbar_{\spinbasic}}{\left\lvert\vartheta(0)
    \right\rvert^{2}}\,.
    \label{eq:det-prime}
\end{equation}
This formula says that if $\delta$ and $\epsilon$ are odd spin 
structures then
\begin{equation}
    \frac{\det\nolimits' \dbar_{\delta}^{*}\dbar_{\delta}}{\lvert
    \langle h_{\delta}, h_{\delta} \rangle \rvert^{2}}
    =\frac{\det\nolimits' \dbar_{\epsilon}^{*}\dbar_{\epsilon}}{\lvert
    \langle h_{\epsilon}, h_{\epsilon} \rangle \rvert^{2}}\,.
    \label{eq:det-univ}
\end{equation}
    In other words the function $\det'( \dbar_{\delta}^{*}
    \dbar_{\delta})/\lvert \langle h_{\delta}, h_{\delta} \rangle
    \rvert^{2}$ on $\Teichodd(\Sigma)$ descends to a function on
    $\Teich(\Sigma)$.

This leads to a new result about bosonization for odd spin structures.
The standard bosonization result \cite{AlvarezGaume:1987vm} is that for
every \emph{even} spin structure $\beta$
\begin{equation}
    \left( \frac{\vol \Sigma \; \det H}{\detz} 
    \right)^{1/2} = \frac{1}{2}\; \frac{\det \dbar^{*}_{\beta}\dbar_{\beta}}{\lvert 
    \vartheta[\beta](0) \rvert^{2}}\,.
    \label{eq:bosonization-even}
\end{equation}
If we compare this with \eqref{eq:det-prime} we see that
for every \emph{odd} spin structure $\delta$
\begin{equation}
    \left( \frac{\vol \Sigma \; \det H}{\detz} \right)^{1/2} =
    \frac{\pi^{2}}{2} \;     \frac{\det\nolimits^{\prime}
    \dbar^{*}_{\delta}\dbar_{\delta}}{\left\lvert \langle h_{\delta}, 
    h_{\delta} \rangle \right\rvert^{2}}\,.
    \label{eq:bosonization-odd}
\end{equation}

\subsection{Trivializing}
\label{sec:trivializing}

We discuss the trivialization of the determinant line bundle for the
Dirac operator over spin Teichmuller space $\Teichspin(\Sigma)$ and in
the process we clarify what we mean by $\det \dbar$ throughout much of
this article.  $\Teich(\Sigma)$ is a convex space with the topology of
$(\bbR \times \bbR_{+})^{3g-3}$ and therefore line bundles are
trivializable.
\begin{figure}[tbp]
    \centering
    \includegraphics[width=0.3\textwidth]{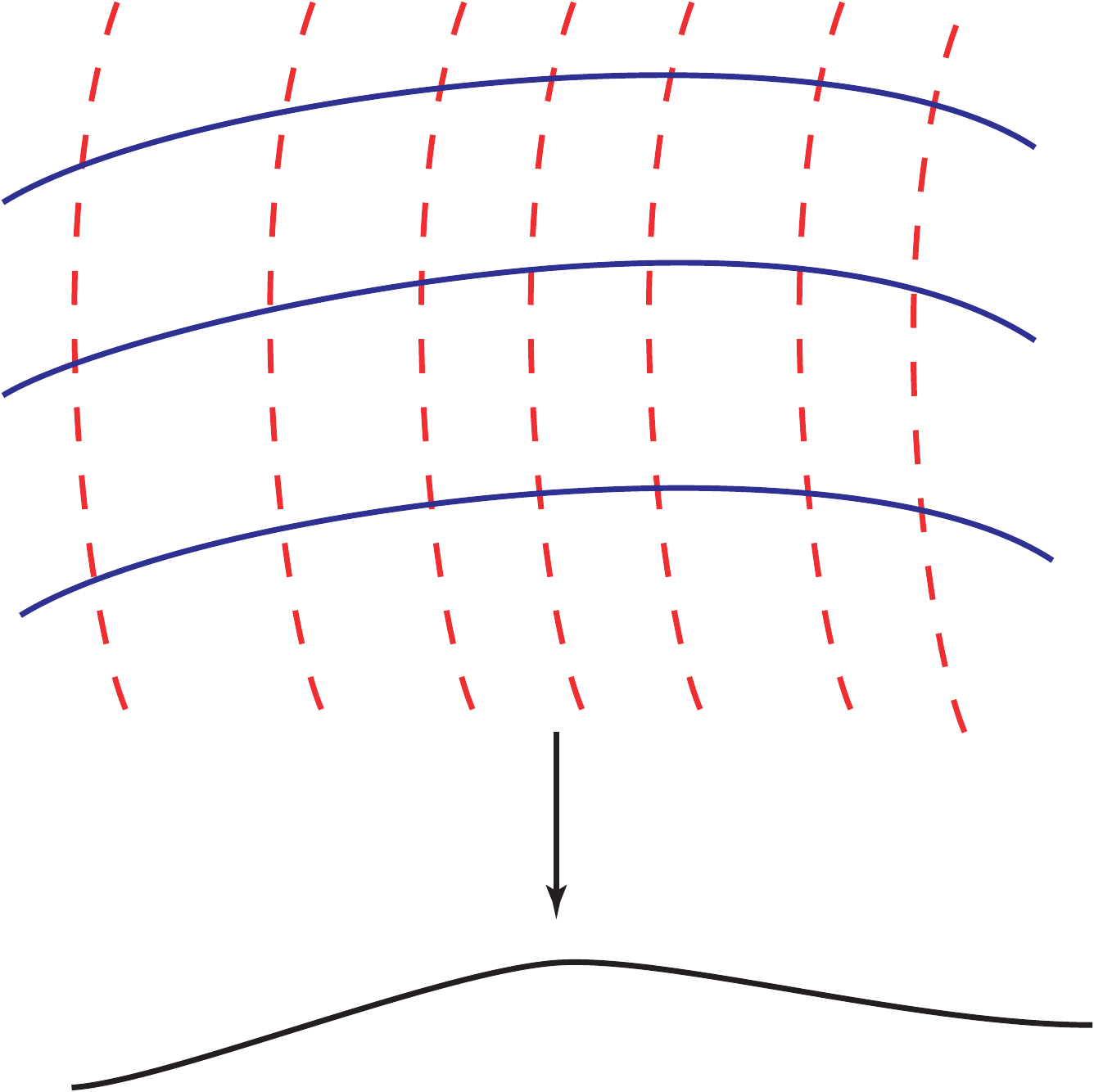}
    \begin{quote}
	\caption[xteich]{\small The above is a schematic figure of
	spin Teichmuller space, $\Teichspin(\Sigma)$, and how it fits
	inside of the bundle $\Teich(\Sigma) \times \Jacz$.  The base
	is $\Teich(\Sigma)$ and the fibers are isomorphic to $\Jacz$.
	The preferred spin structure $\spinbasic$ associated with the basic
	quarf is taken to be the origin of the jacobian and all the
	other spin structures correspond to the half period points of
	the torus.  Spin Teichmuller space is represented by the
	horizontal leaves.}
	\label{fig:teich}
    \end{quote}
\end{figure}
Spin Teichmuller space $\Teichspin(\Sigma)$ and spin moduli space
$\Modspin(\Sigma)$ are respectively finite covers of Teichmuller space
$\Teich(\Sigma)$ and moduli space $\Mod(\Sigma)$ with projection maps 
$\pi$ and $\pi_{1/2}$. They are related by the following commutative 
diagram:
\begin{equation}
    \begin{CD}
	\Teichspin(\Sigma) @>\pi_{1/2}>>   \Modspin(\Sigma) \\
	@V\text{forgetful}VV       @VV\text{forgetful}V  \\
	\Teich(\Sigma)   @>\pi>>     \Mod(\Sigma)
    \end{CD}
    \label{eq:comm-diag}
\end{equation}
The vertical arrows correspond to  forgetful functors that ignore 
the spin structure information.

Consider the conformal field theory of two chiral $(1/2,0)$
fermions\footnote{With an even number of chiral fermions we can frame
the discussion in terms of determinant line bundles and not pfaffian
line bundles.}.  According to Friedan and Shenker
\cite{Friedan:1986ua}, the expectation value of the energy momentum
tensor $\VEV{T_{zz}} (dz)^{2}$ of that conformal field theory gives a
flat hermitian connection $\mathcal{A}$ on a holomorphic line bundle
over spin moduli space $\Modspin(\Sigma)$.  The line bundle in
question $\cL \to \Modspin(\Sigma)$ is the determinant line bundle
associated with the Dirac operator and the ``partition function'' is
the determinant section.  We can pull back the line bundle and
connection to the cover $\Teichspin(\Sigma)$.

Restrict to the leaf $\Teichspin(\Sigma,\qbasic)$ in
$\Teichspin(\Sigma)$ associated with the basic quarf $\qbasic$.  On
this leaf we have the determinant line bundle $\Ltil(\qbasic) \to
\Teichspin(\Sigma,\qbasic)$ of the preferred spin structure
$\spinbasic$.  Next we trivialize this line bundle.  The pulled back
expectation value of the energy momentum tensor gives a flat
holomorphic connection $\widetilde{\mathcal{A}}$ on the determinant
line bundle $\Ltil(\qbasic) \to \Teichspin(\Sigma,\qbasic)$.  Pick a
distinguished point $\spinbasic_{0} \in \Teichspin(\Sigma,\qbasic)$
and a reference point $\hat{\sigma}_{0}$ on the fiber of
$\Ltil_{\spinbasic_{0}}$ over $\spinbasic_{0}$.  Once we have picked
$\hat{\sigma}_{0}$ we can identify the fiber $\Ltil_{\spinbasic_{0}}$
with $\bbC$.  The trivialization is along the discussion in
Section~\ref{sec:trivi-det}.  Let $\sigma$ be a point on the fiber
over $\spin \in \Teichspin(\Sigma,\qbasic)$ and parallel transport
$\sigma$ along any curve from $\spin$ to $\spinbasic_{0}$.  The choice
of curve is irrelevant because $\Teichspin(\Sigma,\qbasic)$ is
contractible and the connection is flat. The trivialization map 
$\phi: \Ltil(\qbasic) \to \Teichspin(\Sigma,\qbasic) \times \bbC$ is 
given by
\begin{equation}
    \sigma(\spin) = (\phi\sigma)(\spin)\; \exp\left( - 
    \int_{\spinbasic_{0}}^{\spin} \widetilde{\mathcal{A}} \right) 
    \hat{\sigma}_{0}\;.
    \label{eq:Teich-triv}
\end{equation}
It follows from the definition of spin Teichmuller space that 
$\Teichspin(\Sigma,\qbasic)$ is isomorphic to $\Teich(\Sigma)$ and we 
implicitly think of the trivialized line bundle as being $\Teich(\Sigma) 
\times \bbC$.

We trivialize $\Ltil \to \Teichspin(\Sigma)$ by embedding
$\Teichspin(\Sigma)$ in $\Teich(\Sigma) \times \Jacz$ as depicted in
Figure~\ref{fig:teich}.  Fix a point $\tau \in \Teich(\Sigma)$ and
consider the family of operators $\dbar(\spinbasic \otimes F)$ were $F
\in \Jacz$ is a flat line bundle and $\spinbasic$ is the spin
structure associated with the basic quarf $\qbasic$.  By going to the
cover $\Hzo$ of $\Jacz$ we can trivialize the family using a parallel
transport method analogous to the one discussed in
Section~\ref{sec:trivi-det}.  We can restrict to the half lattice
points in $\Hzo$ and we obtain that if $\delta = \half w$ where $w \in
H(\Sigma,\bbZ/2\bbZ)$ then
\begin{equation}
    \det \dbar(\spinbasic \otimes F_{w}) = 
    \frac{\vartheta[\delta](0;\Omega)}{\vartheta(0;\Omega)}\; 
    \det\dbar(\spinbasic)\,,
    \label{eq:spin-triv}
\end{equation}
according to \eqref{eq:det-spinor}.  For the moment
$\det\dbar(\spinbasic)$ is a point on the fiber of the determinant
line at $\spinbasic \in \Teichspin(\Sigma,\qbasic)$ over the point
$\tau \in \Teich(\Sigma)$.  Next we use the trivialization of 
$\Ltil(\qbasic) \to \Teichspin(\Sigma,\qbasic)$ previously described 
and this is how we interpret the determinant section.
Since we have the Quillen metric, $\det
\dbar(\spinbasic)$ is known up to a phase.

Our conformal field theory of chiral spinors is associated with the
determinant line bundle $\cL \to \Modspin(\Sigma)$ with holomorphic
partition section $Z$.  This holomorphic section satisfies the
parallel transport equation $\partial_{\Modspin(\Sigma)}Z+
\mathcal{A}Z=0$, and can be pulled back to a section $\widetilde{Z}$
of $\Ltil \to \Teichspin(\Sigma)$ where it satisfies
$\partial_{\Teichspin(\Sigma)}\widetilde{Z} + \widetilde{\mathcal{A}}
\widetilde{Z}=0$.  In the trivialization just described,
$\widetilde{Z}$ becomes a function that we denote by $\det\dbar$.
Note that after restriction to $\Teichspin(\Sigma,\qbasic)$, the
holomorphic determinant section $\det\dbar(\spinbasic)$ satisfies the
parallel transport equation $\partial_{\Teich(\Sigma)}\det\dbar+
\widetilde{\mathcal{A}}\, \det\dbar=0$.

Let $\mathscr{D} \to \Teichspin(\Sigma)$ be the trivialized
determinant line bundle as described in the previous paragraphs.  If
$f \in \Diff(\Sigma)$ represent $\Lambda \in \Sp(2g,\bbZ)$ then
$f^{*}\mathscr{D} = \mathscr{D} \otimes \mathscr{C}$ where
$\mathscr{C} \to \Teichspin(\Sigma)$ is a circle bundle because $f$
acts isometrically.  Consider the section $\det\dbar$ of
$\mathscr{D}$.  Since $f$ acts by isometries,
$f^{*}(\det\dbar)_{f^{*}\mathscr{D}} = u \cdot
(\det\dbar)_{\mathscr{D}}$ where $u$ is a section of $\mathscr{C}$ and
$\lvert u \rvert =1$.  Since the determinant sections vary
holomorphically over $\Teichspin(\Sigma)$, $u$ is a constant section.
The dependence of $u$ on $f$ is only through $\Lambda$ since we have
already taken into account the action of the normal subgroup
$\Diff_{0}(\Sigma)$.

\subsection{Defining $\detp\dbar$}
\label{sec:sec-detp}

Having trivialized the bundle, we can define $\detp\dbar$ by
exploiting the embedding of $\Teichspin(\Sigma)$ in $\Teich(\Sigma)
\times \Jacz$.  We rewrite \eqref{eq:det-spinor} as
\begin{align}
    \det D(\delta + z) &= \frac{\det \dbar(\spinbasic)}{\vartheta(0)} \; 
    e^{i\pi (a +u)\cdot \Omega \cdot (a+u) + 2\pi i (a+u) \cdot(b+v)}\; 
    \vartheta(\delta + z)\,, 
    \nonumber \\
    & = \frac{\det \dbar(\spinbasic)}{\vartheta(0)} \; 
    e^{i\pi (a +u)\cdot \Omega \cdot (a+u) + 2\pi i (a+u) \cdot(b+v)}
    \nonumber \\
    & \quad\times  e^{-i \pi a\cdot \Omega a - 2\pi i u\cdot (z+b)}\;
    \vartheta[\delta](z) \,
    \nonumber \\
    & = \frac{\det \dbar(\spinbasic)}{\vartheta(0)} \; e^{i \pi u\cdot
    \Omega u + 2\pi i u\cdot (v+b)}\; \vartheta[\delta](z) \,.
    \label{eq:det-spinor-1}
\end{align}
From \eqref{eq:ev-sec-order-1} we have
\begin{equation}
    \lambda(z) = \left\lvert 
    \frac{\pi}{N_{\delta}^{2}}\; \sum_{j=i}^{g} z^{j} \frac{\partial 
    \vartheta[\delta](0)}{\partial z^{j}} \right\rvert^{2} + 
    O(z^{3})\,,
    \label{eq:ev-square}
\end{equation}
and we can take an approximate ``holomorphic square root''
\begin{equation}
    \sqrt{\lambda(z)}= e^{i\psi}\;
    \frac{\pi}{N_{\delta}^{2}}\; \sum_{j=i}^{g} z^{j} \frac{\partial 
    \vartheta[\delta](0)}{\partial z^{j}}  + 
    O(z^{2})\,.
    \label{eq:ev-square-root}
\end{equation}
In the above we introduced an arbitrary phase $\psi$.  The subleading
terms will not have holomorphicity properties.  Thus we can define
$\det'\dbar_{\delta}$ by mimicking\footnote{This is what the path
integral suggests as a possible definition.} \eqref{eq:det-pert} and
defining
\begin{equation}
    \det\nolimits' \dbar_{\delta} = \lim_{z\to 0} \frac{D(\delta +
    z)}{\sqrt{\lambda(z)}}.
    \label{eq:def-detp-1}
\end{equation}
We  compute the above with the result
\begin{equation*}
    \frac{\detp \dbar_{\delta}}{N_{\delta}^{2}} =
    e^{-i\psi}\; \frac{1}{\pi}\; \frac{\det
    \dbar(\spinbasic)}{\vartheta(0)} \,.
\end{equation*}
The arbitrary phase in the normalization factor $N_{\delta}$ can be 
chosen so that $e^{i\psi}=1$ and
\begin{equation}
    \frac{\detp \dbar_{\delta}}{N_{\delta}^{2}} =
    \frac{1}{\pi}\; \frac{\det
    \dbar(\spinbasic)}{\vartheta(0)} \,.
    \label{eq:detp-section}
\end{equation}
This agrees with explicit results in genus one.

\section{Geometric Symplectic Action on Determinants}
\label{sec:geom-symp-det}

\subsection{Quillen Isometry}
\label{sec:warmup}

The determinant for the Dirac laplacian is given by 
\begin{equation}
    (\det \dbar^{*}\dbar)(\spinbasic \otimes F_{w}) = (\det \dbar^{*}\dbar)(\spinbasic) \;
    \left\lvert \frac{\vartheta[\delta](0; \Omega)}{\vartheta(0 ; 
    \Omega)} \right\rvert^{2}\,.
    \label{eq:det-spin}
\end{equation}
where $\spinbasic$ is the spin structure described by the basic quarf
$\qbasic$ and $\delta = \tfrac{1}{2} w \bmod \bbZ$, see
\eqref{eq:det-DD-1}.

Let $\Lambda \in \Sp(2g,\bbZ)$ and let $f \in \Diff(\Sigma)$ be a
representative for $\Lambda$. Because of the functorial properties of 
$\dbar$, the eigenvalues are preserved under the action of $f$ and we 
conclude that
\begin{equation}
    (\det \dbar^{*}\dbar)(\spinbasic' \otimes F_{t} \otimes F_{f^{*}w}) =
    (\det \dbar^{*}\dbar)(\spinbasic \otimes F_{w})\,.
    \label{eq:quillen-isom}
\end{equation}
\begin{remark}
    \label{rem:isometry}
    Equation~\eqref{eq:quillen-isom} states that the diffeomorphism
    $f$ acts as an isometry with respect to the Quillen metric on the
    $f$-related determinant line bundles.
\end{remark}
\begin{remark}\label{rem:isom-gen}
    The Quillen isometry is more general: if $F_{w}$ is replaced by an
    arbitrary flat line bundle $F_{\chi}$ corresponding to character
    $\chi$, then \eqref{eq:quillen-isom} is still valid.
\end{remark}

Using the notation that $F_{w'} = F_{f^{*}w} \otimes F_{t}$, $\delta'
= \thalf w'$, \emph{etc.}, a straightforward computation gives
\begin{align*}
    (\det \dbar^{*}\dbar)(\spinbasic' \otimes F_{w'}) &= (\det 
    \dbar^{*}\dbar)(\spinbasic') \;
    \left\lvert \frac{\vartheta[\delta'](0; \Omega')}{\vartheta(0 ; 
    \Omega')} \right\rvert^{2}\,, \\
    & = (\det \dbar^{*}\dbar)(\spinbasic') \; \left\lvert
    \frac{\varepsilon(\Lambda) e^{-i\pi\phi(\delta)} \det(C \Omega +
    D)^{1/2} \; \vartheta[\delta](0; \Omega)}{ \varepsilon(\Lambda)
    \det(C\Omega + D)^{1/2} \; \vartheta(0 ; \Omega)}
    \right\rvert^{2}\,, \\
    & = \lvert\det(C\Omega+D)\rvert\; \lvert \vartheta[\delta](0;
    \Omega)\rvert^{2}\; \frac{(\det \dbar^{*}\dbar)(\spinbasic')}{
    \lvert \vartheta(0;\Omega')\rvert^{2}}\,.
\end{align*}
Hence
\begin{equation}
    \frac{(\det \dbar^{*}\dbar)(\spinbasic)}{\lvert
    \vartheta(0;\Omega')\rvert^{2}} = \lvert\det(C\Omega+D)\rvert\;
    \frac{(\det
    \dbar^{*}\dbar)(\spinbasic)}{\lvert\vartheta(0;\Omega)\rvert^{2}}
    \;.
    \label{eq:det-sq-1}
\end{equation}

\subsection{Determinant Section for Spin Bundles}
\label{sec:det-spin-bundle}

In this section we study in more detail the Quillen isometry. 
Remark~\ref{rem:isom-gen} tells us that in general we expect a 
relationship
\begin{equation*}
    (\det\dbar)(\spinbasic'\otimes F_{t} \otimes f^{*}F_{\chi}) = 
    e^{i\xi(\chi)}\, (\det\dbar)(\spinbasic \otimes F_{\chi}).
\end{equation*}
A difficulty with the above is $\spinbasic \otimes F_{\chi}$ is an odd
spin structure then the determinant sections vanish and it is not
clear that the phase can be defined.  In this section we develop a
limit process that uniquely determines the phase for an odd spin
structure.

Fix a symplectic basis for $H_{1}(\Sigma,\bbZ)$.  Let $F_{\chi}$ be
the flat line bundle specified by character $\chi$.  We use $F_{\chi}$
as a perturbation and assume it is close to the trivial bundle.
Identify the character $\chi$ with a point $\zeta$ near the origin of
$\bbC^{g}$.  The assignment of $\zeta$ is given by the standard
coordinates on $\Hzo$, the cover of $\Jacz$.  We abuse notation and
denote $F_{\chi}$ by $F_{\zeta}$.  Let $\spinbasic \otimes F_{w}$ be
an odd spin structure, $w\in H^{1}(\Sigma,\bbZ/2\bbZ)$.  Trivializing
the determinant line bundle over $\Hzo$ for the family of $\dbar$
operators acting on $\spinbasic \otimes F_{\zeta}$ gives
\begin{equation}
    (\det\dbar)( \spinbasic \otimes F_{w} \otimes F_{\zeta}) = 
    (\det\dbar)(\spinbasic)\; 
    \frac{\vartheta[\delta + \zeta](0; \Omega)}{\vartheta(0 ; 
    \Omega)}\,.
    \label{eq:det-spinor-4}
\end{equation}
using\footnote{The algebraic geometry $\vartheta$-function conventions
use semi-characteristics $\delta = \thalf w \in \thalf \bbZ/\bbZ$.  At
times we abuse notation using $w$ and $\delta$ interchangeably.
Shifting $\delta$, $\zeta$ by a lattice vector may lead to phases in
the formula above.} \eqref{eq:det-spin-1} and \eqref{eq:theta-char}.

Let $f \in \Diff(\Sigma)$ that represents $\Lambda \in 
\Sp(2g,\bbZ)$. Since $f$ acts as an isometry on the determinant line 
bundles, see Section~\ref{sec:warmup}, we obtain
\begin{equation}
    (\det \dbar)(\spinbasic' \otimes F_{t} \otimes F_{f^{*}w} \otimes
    F_{f^{*}z}) = e^{i\xi(w+z;\Lambda)}\; (\det \dbar)(\spinbasic \otimes
    F_{w} \otimes F_{z})\,,
    \label{eq:det-transf-iso}
\end{equation}
where $\xi$ is a phase and $F_{t}$ is the ``translational shift'' 
flat line bundle discussed previously. 

Observe that
\begin{align}
    e^{i\xi(0,\Lambda)}\; 
    \frac{(\det\dbar)(\spinbasic)}{\vartheta(0;\Omega)}
    & = \frac{(\det \dbar)(\spinbasic' \otimes F_{t})}{\vartheta(0;\Omega)} 
    \nonumber \\
     & = \frac{\vartheta[t](0;\Omega')}{\vartheta(0;\Omega)}\; 
     \frac{(\det\dbar)(\spinbasic')}{\vartheta(0;\Omega')}\; 
     \nonumber \\
     &= \varepsilon(\Lambda) e^{-i\pi\phi(0,\Lambda)}
     \det(C\Omega+D)^{1/2}\;
     \frac{(\det\dbar)(\spinbasic')}{\vartheta(0;\Omega')}\,,
     \nonumber \\
     &= \varepsilon(\Lambda) 
     \det(C\Omega+D)^{1/2}\;
     \frac{(\det\dbar)(\spinbasic')}{\vartheta(0;\Omega')}\,,
     \label{eq:det-t-1}
\end{align}
where $\phi(0,\Lambda)=0$ by \eqref{eq:symp-phi}. 

\begin{remark}
    In genus one, fermionic determinants are of the form
    $\vartheta[\delta](\tau)/\eta(\tau)$, therefore $\det/\vartheta =
    1/\eta$ is universal for all even spin structures.  This is a
    special case of $(\det \dbar)(\spinbasic \otimes
    F_{w})/\vartheta[\delta](0) = (\det
    \dbar)(\spinbasic)/\vartheta(0)$.  Under the modular
    transformation $T(\tau)= \tau' = \tau+1$ we have that
    $(\det\dbar)(\spinbasic' \otimes F_{t}) =
    \vartheta_{01}(0;\tau')/\eta(\tau')$ and $(\det\dbar)(\spinbasic)
    = \vartheta_{00}(0;\tau)/\eta(\tau)$.  Note that
    $\vartheta_{01}(0;\tau')/\eta(\tau') = e^{-2\pi i /24}\,
    \vartheta_{00}(0;\tau)/\eta(\tau)$ and so we conclude that
    $e^{i\xi(0,T)}=e^{-2\pi i/24}$ in this example.
\end{remark}

\begin{remark}
    According to \eqref{eq:det-t-1}, $\vartheta(0;\Omega)/
    \det\dbar(\spinbasic)$ is a modular form of weight $1/2$.  It is
    constant along the fibers in the covering $\Teichspin(\Sigma) \to
    \Teich(\Sigma)$ in the sense that for even spin structures we use
    result \eqref{eq:spin-triv} and for odd spin structures we use
    \eqref{eq:detp-section}.  For this reason we can think of
    $\vartheta(0;\Omega)/ \det\dbar(\spinbasic)$ as defined on
    $\Teich(\Sigma)$ as a higher genus generalization of the
    Dedekind $\eta$-function.
\end{remark}

\begin{remark}
    Even in genus one, the computation of the modular transformations
    properties of the Dedekind $\eta$-function using the geometry of
    determinant line bundles is a very technical exercise
    \cite{Atiyah:dedekind}.
\end{remark}

If in \eqref{eq:det-transf-iso}, $\spinbasic \otimes F_{w}$ is an odd
spin structure and $z=0$ then we expect that $e^{i\xi(w;
\Lambda)}$ is not well defined because the determinant sections
vanish.  Below, the flat bundle $F_{z}$ in
\eqref{eq:det-transf-iso} is used to determine $e^{i\xi(w;
\Lambda)}$ via a limit process for $\spinbasic \otimes F_{w}$ an odd
spin structure. First we note that
\begin{align}
    \frac{(\detp \dbar)(\spinbasic \otimes F_{\delta})}{N_{\delta}^{2}} &=
    \lim_{z\to 0} \frac{(\det\dbar)( \spinbasic \otimes F_{\delta} \otimes 
    F_{z})}{N_{\delta}^{2}\; \sqrt{\lambda(z)}}\,,
    \nonumber \\
    & =  \frac{1}{\pi} \;
    \frac{(\det\dbar)(\spinbasic)}{ \vartheta(0 ; \Omega)}\,,
    \label{eq:detp-N}
\end{align}
where we used \eqref{eq:detp-section}. Next, 
using the notation $F_{w'} = F_{t} \otimes F_{f^{*}w}$ and $F_{z'} =
F_{f^{*}z}$ we compute
\begin{align*}
    \frac{(\detp \dbar)(\spinbasic' \otimes F_{w'})}{N_{\delta'}^{2}} &=
     \lim_{z'\to 0}
    \frac{(\det\dbar)( \spinbasic' \otimes F_{w'} \otimes
    F_{z'})}{N_{\delta'}^{2}\; \sqrt{\lambda'(z')}}\,, \\
    & =
     \lim_{z'\to 0}
    \frac{e^{i\xi(\delta,\Lambda)}\; (\det \dbar)(\spinbasic \otimes F_{w} 
    \otimes F_{z})}{\pi\; \sum z'_{j} \; \partial 
    \vartheta[\delta'](0;\Omega')/\partial z'_{j}} \,, \\
    & = 
     \lim_{z\to 0}
    \frac{e^{i\xi(\delta,\Lambda)}\; (\det \dbar)(\spinbasic \otimes F_{w} 
    \otimes F_{z})}{\pi\; \varepsilon(\Lambda) e^{-i\pi 
    \phi(\delta,\Lambda)} \det(C\Omega + D)^{1/2} \sum z_{j} \; \partial 
    \vartheta[\delta](0;\Omega)/\partial z_{j}} \,, \\
    & =
    \frac{e^{i\xi(\delta,\Lambda)}}{\varepsilon(\Lambda)
    e^{-i\pi \phi(\delta,\Lambda)} \det(C\Omega + D)^{1/2}}
    \lim_{z\to 0}
    \frac{(\det \dbar)(\spinbasic \otimes F_{w} 
    \otimes F_{z})}{N_{\delta}^{2}\; \sqrt{\lambda(z)}} \,, \\
    &=
    \frac{e^{i\xi(\delta,\Lambda)}}{ \varepsilon(\Lambda)
    e^{-i\pi \phi(\delta,\Lambda)} \det(C\Omega + D)^{1/2}}   
    \frac{(\detp \dbar)(\spinbasic \otimes F_{w})}{N_{\delta}^{2}} \,.
\end{align*}
Hence
\begin{equation}
    \frac{(\detp \dbar)(\spinbasic' \otimes F_{w'})}{N_{\delta'}^{2}} =
    e^{i\xi(\delta,\Lambda)} \varepsilon(\Lambda)^{-1} e^{i\pi \phi(\delta,\Lambda)}
    \det(C\Omega+D)^{-1/2} \; \frac{(\detp \dbar)(\spinbasic \otimes
    F_{w})}{N_{\delta}^{2}}\;.
    \label{eq:detp-transf}
\end{equation}
Combining the above with \eqref{eq:symp-hdelta} gives the interesting
``isometry'' relation that
\begin{equation}
    (\detp \dbar)(\spinbasic' \otimes F_{w'}) 
    \left(\frac{h'_{\delta'}}{N_{\delta'}}\right)^{2} = 
    e^{i\xi(\delta,\Lambda)}\;
    (\detp \dbar)(\spinbasic \otimes F_{w}) 
	\left(\frac{h_{\delta}}{N_{\delta}}\right)^{2}\;.
    \label{eq:detp-invarianc}
\end{equation}

This result can be used to determine $e^{i\xi(\delta,\Lambda)}$ for an
odd semi-characteristic $\delta$:
\begin{equation}
    e^{i\xi(0,\Lambda)} = e^{i\xi(\delta,\Lambda)}\; e^{i\pi 
    \phi(\delta,\Lambda)}\,.
    \label{eq:xi-relations}
\end{equation}
\begin{proof}
Use \eqref{eq:detp-section} and \eqref{eq:det-t-1} to observe
\begin{align}
    \frac{(\detp \dbar)(\spinbasic' \otimes F_{w'})}{N_{\delta'}^{2}} 
    & =  \frac{1}{\pi} \;
    \frac{(\det\dbar)(\spinbasic')}{ \vartheta(0 ; \Omega')}\,,
    \nonumber \\
    &=e^{i\xi(0,\Lambda)}\; \varepsilon(\Lambda)^{-1}
     \det(C\Omega+D)^{-1/2}\; \frac{1}{\pi} \;
    \frac{(\det\dbar)(\spinbasic)}{\vartheta(0;\Omega)} \,,
    \nonumber \\
     &=e^{i\xi(0,\Lambda)}\; \varepsilon(\Lambda)^{-1}
     \det(C\Omega+D)^{-1/2}\; \frac{(\detp \dbar)(\spinbasic \otimes
     F_{w})}{N_{\delta}^{2}}\,.
     \label{eq:detp-transf-2}
\end{align}
Comparing \eqref{eq:detp-transf-2} and \eqref{eq:detp-transf} yields 
the desired result \eqref{eq:xi-relations}.
\end{proof}

\subsection{Computing $\xi(0,\Lambda)$}
\label{sec:xi}

In this discussion we consider the conformal field theory with two
$(1/2,0)$ chiral fermions.  The ``partition function'' $Z$ of this CFT
is the determinant section of the determinant line bundle over spin
moduli space.  The pullback section $\widetilde{Z}$ has the
property that 
\begin{equation}
    \widetilde{Z}(\spin) = \widetilde{Z}(f\cdot \spin)
    \label{eq:Z-per}
\end{equation}
where $\spin \in \Teichspin(\Sigma)$ and $f \in \Diff(\Sigma)$.

We consider the case where we have a diffeomorphism $f$ that
represents\footnote{We have learned from Igusa~\cite{Igusa:book} and
Mumford~\cite{Mumford:1} that using $\Gamma_{1,2}$ simplifies life.}
$\Lambda \in \Gamma_{1,2}$.  This diffeomorphism maps
$\Teichspin(\Sigma,\qbasic)$ to itself because of the definition of
$\Gamma_{1,2}$.  Let $\spinbasic \in \Teichspin(\Sigma,\qbasic)$ then
using \eqref{eq:Z-per} and trivialization \eqref{eq:Teich-triv} we
conclude that
\begin{equation*}
    \det\dbar(f\cdot\spinbasic) \exp\left( -
    \int_{\spinbasic_{0}}^{f\cdot\spinbasic} \widetilde{\mathcal{A}}
    \right) \hat{\sigma}_{0} = \det\dbar(\spinbasic) \exp\left( -
    \int_{\spinbasic_{0}}^{\spinbasic} \widetilde{\mathcal{A}} \right)
    \hat{\sigma}_{0}\,.
\end{equation*}
We obtain the result
\begin{equation}
	\det\dbar(f\cdot\spinbasic) = \det\dbar(\spinbasic) \exp\left(
	\int_{\spinbasic}^{f\cdot\spinbasic} \widetilde{\mathcal{A}}
	\right)
    \label{eq:xi0-12}
\end{equation}
Because of flatness we have  that
\begin{equation*}
    \exp\left( \int_{\spinbasic}^{f\cdot\spinbasic}
    \widetilde{\mathcal{A}} \right) = \exp\left(
    \int_{\spinbasic_{0}}^{f\cdot\spinbasic_{0}} \widetilde{\mathcal{A}}
    \right),
\end{equation*}
\emph{i.e.}, the parallel transport factor is independent of
basepoint.  Any path from $\spinbasic_{0}$ to $f\cdot\spinbasic_{0}$
projects to a loop in $\Modonetwo(\Sigma)$, where $\Modonetwo(\Sigma)
= \Teich(\Sigma)/\Diff_{1,2}(\Sigma)$ and $\Diff_{1,2}(\Sigma)$ are
the diffeomorphisms representing $\Gamma_{1,2}$.  We see that the
exponential above is just the holonomy with respect to the flat
connection $\mathcal{A}$ pulled back to the pull back line bundle over
$\Modonetwo(\Sigma) \supset \Mod(\Sigma)$.  We know that the flat
hermitian connection gives a homomorphism $\hol:
\pi_{1}(\Modonetwo(\Sigma))\to S^{1}$ and by
relating \eqref{eq:xi0-12} with \eqref{eq:det-transf-iso} we obtain
\begin{thm}\label{thm:hol-12}
    If $\Lambda \in \Gamma_{1,2}$ then
	\begin{equation}
	    e^{i\xi(0,\Lambda)} = \hol(\Lambda)^{-1}\,.
	    \label{eq:xi0-12-final}
	\end{equation}
    The holonomy is independent of the choice of representative $f\in 
    \Diff(\Sigma)$ chosen for $\Lambda$.
\end{thm}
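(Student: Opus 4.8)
The plan is to pin down $e^{i\xi(0,\Lambda)}$ by computing the transformation of $\det\dbar$ across the leaf $\Teichspin(\Sigma,\qbasic)$ in two independent ways---once by the explicit isometry/theta-function relation and once by parallel transport with the Friedan--Shenker flat connection---and then equate the two. First I would specialize the isometry relation \eqref{eq:det-transf-iso} to the base point by setting $w=z=0$. Since $\Lambda\in\Gamma_{1,2}$, the ``translational shift'' bundle $F_{t}$ is trivial (this is the content of the discussion after \eqref{eq:Gamma-2-1-2}, through \eqref{eq:spin-transform} and \eqref{eq:w-transf}), so $f^{*}\spinbasic=\spinbasic'$ and \eqref{eq:det-transf-iso} collapses to the analytic identity
\begin{equation*}
    (\det\dbar)(\spinbasic') = e^{i\xi(0,\Lambda)}\,(\det\dbar)(\spinbasic).
\end{equation*}

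Next I would recompute the same factor geometrically. Because $\Lambda\in\Gamma_{1,2}$, the representative $f$ preserves the leaf $\Teichspin(\Sigma,\qbasic)$, so $f\cdot\spinbasic=\spinbasic'$ stays in that leaf and the flat trivialization \eqref{eq:Teich-triv} applies. Substituting \eqref{eq:Teich-triv} into the $\Diff$-invariance \eqref{eq:Z-per} of the pullback section $\widetilde{Z}$ and cancelling the common reference vector $\hat{\sigma}_{0}$ gives \eqref{eq:xi0-12},
\begin{equation*}
    \det\dbar(\spinbasic') = \det\dbar(\spinbasic)\,\exp\!\left(\int_{\spinbasic}^{\spinbasic'}\widetilde{\mathcal{A}}\right).
\end{equation*}
By flatness the parallel-transport factor is basepoint-independent, so the path may be replaced by one from $\spinbasic_{0}$ to $f\cdot\spinbasic_{0}$; this path projects to a closed loop $\gamma_{\Lambda}$ in $\Modonetwo(\Sigma)=\Teich(\Sigma)/\Diff_{1,2}(\Sigma)$, and the exponential is the parallel transport of $\mathcal{A}$ around $\gamma_{\Lambda}$.

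Comparing the two displays yields $e^{i\xi(0,\Lambda)}=\exp\!\left(\int_{\gamma_{\Lambda}}\widetilde{\mathcal{A}}\right)$. The sign that upgrades this to the \emph{inverse} holonomy comes from the defining parallel-transport equation $\partial_{\Modspin(\Sigma)}Z+\mathcal{A}Z=0$: a covariantly constant section picks up $\exp(-\oint\mathcal{A})$ around a loop, so the holonomy homomorphism is $\hol(\gamma_{\Lambda})=\exp(-\oint_{\gamma_{\Lambda}}\mathcal{A})$, whence $e^{i\xi(0,\Lambda)}=\hol(\Lambda)^{-1}$. For the representative-independence claim I would observe that changing $f$ by an element of $\Diff_{0}(\Sigma)\subset\Diff_{1,2}(\Sigma)$ does not move the point $f\cdot\spinbasic_{0}\in\Teichspin(\Sigma,\qbasic)$, the $\Diff_{0}$-action having already been divided out (cf.\ the remark in Section~\ref{sec:diff0}); thus the homotopy class of $\gamma_{\Lambda}$ in $\Modonetwo(\Sigma)$---and hence its holonomy---depends only on $\Lambda$.

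The main obstacle I anticipate is the bookkeeping of orientations and sign conventions: verifying that the projected path is genuinely a based loop in $\Modonetwo(\Sigma)$ traversed once, and that the flatness argument delivers precisely $\hol(\Lambda)^{-1}$ rather than $\hol(\Lambda)$. A secondary subtlety is the clean identification $f\cdot\spinbasic=\spinbasic'$, which is exactly where membership $\Lambda\in\Gamma_{1,2}$ (triviality of $F_{t}$) is essential; without it the leaf $\Teichspin(\Sigma,\qbasic)$ would not be preserved and the trivialization \eqref{eq:Teich-triv} could not be applied as stated.
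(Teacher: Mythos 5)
Your proposal follows essentially the same route as the paper: specialize the isometry relation \eqref{eq:det-transf-iso} at $w=z=0$ using the triviality of $F_{t}$ for $\Lambda\in\Gamma_{1,2}$, then recompute $\det\dbar(f\cdot\spinbasic)/\det\dbar(\spinbasic)$ via the flat trivialization \eqref{eq:Teich-triv} and the invariance \eqref{eq:Z-per}, use flatness to make the transport factor basepoint-independent, and identify it with the (inverse) holonomy of $\mathcal{A}$ around the projected loop in $\Modonetwo(\Sigma)$. Your explicit tracking of the sign via the parallel-transport equation $\partial Z+\mathcal{A}Z=0$ is a welcome bit of extra care that the paper leaves implicit, but the argument is the same.
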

\begin{remark}\label{rem:torus}
    Because $S^{1}$ is abelian, the set of maps of 
    $\pi_{1}(\Modonetwo(\Sigma) \to S^{1}$ is the same as the set of 
    maps $H_{1}(\Modonetwo(\Sigma),\bbZ) \to S^{1}$, \emph{i.e.}, the 
    character group of $H_{1}(\Modonetwo(\Sigma),\bbZ)$ which equals 
    $H^{1}(\Modonetwo(\Sigma),\bbR)/H^{1}(\Modonetwo(\Sigma),\bbZ)$, 
    a torus $T$.    
    Therefore our map $\hol$ is a homomorphism from $\Gamma_{1,2}$ to
    the torus $T$.  Our theorem states that $e^{i\xi(0,\Lambda)} =
    \hol(\Lambda)^{-1} \in T$.  We will denote $e^{i\xi(0,\Lambda)}$
    by $t_{\Lambda}\in T$.  We can also interpret
    $e^{in\xi(0,\Lambda)}$ by defining $\hol^{n}: \Gamma_{1,2} \to T$
    by $\hol^{n}(\Lambda) = \left( \hol(\Lambda)\right)^{n}$ so that
    $t_{\Lambda}^{n} = e^{in\xi(0,\Lambda)}= \left(
    \hol^{n}(\Lambda)\right)^{-1} \in T$.
\end{remark}

\begin{remark}
    The original proposal relating global anomalies to holonomy is 
    due to Witten \cite{Witten:1985xe}.
\end{remark}

\subsection{Main Results}
\label{sec:results}

As we noted in the introduction, the purpose of this paper is to 
extend the results of BEG, where we had a theory over Teichmuller 
space, to a theory over moduli space. 

We will denote the dependence of $\Zsc$ on a string manifold $M$ and 
a spin structure $\spin$ by $\Zsc(M,\spin)$.

\begin{remark}
    $\Zsc(M,\spin)$ is a function on $\Teichspin(\Sigma)$ and the map 
    $M \mapsto \Zsc(M,\spin)$ is a homomorphism from string cobordism 
    $\MString^{*}$ to functions on $\Teichspin(\Sigma)$, \emph{i.e.}, 
    a genus.
\end{remark}

It turns out it is useful to define a function $\Phi$ by
\begin{equation}
    \Zsc(M,\spin) =\left(
    \frac{(\det\dbar)(\spinbasic)}{\vartheta(0 ;\Omega)} \right)^{2n} 
    \Phi(M,\spin)\,.
    \label{eq:Zsc-2}
\end{equation}
The map $M \mapsto \Phi(M,\spin)$ is also a genus from $\MString^{*}$ 
to $\Teichspin(\Sigma)$.

Formula \eqref{eq:def-Zsc-1} for $\Zsc$ contained $\detp \partial$ and
not $\detp\dbar$, therefore we have to take the complex conjugate.
Using the  notation of \eqref{eq:def-Zsc-1} we find the
transformation law
\begin{equation}
    \left(\frac{\detp \partial_{\delta'}}{\Nb_{\delta'}^{2}}
    \right)_{\bm{g}'} = \varepsilon(\Lambda)
    e^{-i\xi(0,\Lambda)}\;
    \det(C\bar{\Omega} +D)^{-1/2} \; \left(\frac{\detp
    \partial_{\delta}}{\Nb_{\delta}^{2}} \right)_{\bm{g}}.
    \label{eq:detp-transf-1}
\end{equation}

A drawback of Theorem~\ref{thm:transf-integral} is that the integral
is a combination of objects that vary holomorphically and objects that
vary anti-holomorphically with respect to Teichmuller space.  For
example $\Omega$ varies holomorphically while $\hb^{2}_{\delta}$
varies anti-holomorphically.  Next we try to group terms together in
such a way as to try to be as holomorphic as possible.  In particular
we remind the reader that our path integral viewpoint acts as a guide
to our final result.  We use the Belavin-Knizhnik theorem to formally
factorize the determinants and also we invoke \eqref{eq:detp-section}.
With this in mind we make two substitutions in \eqref{eq:def-Zsc-1}:
\begin{align}
    \left( \frac{\vol \Sigma \det H}{\detz} \right)^{1/2} & \longrightarrow 
    \frac{1}{2}\; \frac{(\det\dbar)(\spinbasic)}{\vartheta(0 ;\Omega)} \;
    \frac{(\det\partial)(\spinbasic)}{\overline{\vartheta(0 ;\Omega)}}\;,
    \label{eq:rep-1} \\
    \frac{\detp \partial_{\delta}}{\bar{N}_{\delta}^{2}} & \longrightarrow
    \frac{1}{\pi}\; \frac{(\det
    \partial)(\spinbasic)}{\overline{\vartheta(0;\Omega)}} \,.    
    \label{eq:rep-2}
\end{align}
The expression for $\Phi$ now becomes
\begin{equation}
    \Phi(M,\spin) =  \frac{1}{(4\pi)^{n}}\; \frac{1}{(\det H)^{n}}\; \left(
    \frac{(\det\partial)(\spinbasic)}{\overline{\vartheta(0 ;\Omega)}}
    \right)^{3n} \int_{M} \prod_{r=1}^{n}
    \frac{\zanalog\left(x_{r}\; z\left( \hb_{\delta}^{2} \right);
    \Omega \right)}{\vartheta[\kappa]\left(x_{r}\; z\left(
    \hb_{\delta}^{2} \right); \Omega \right)}\;.
    \label{eq:def-Phi-1}
\end{equation}
Using \eqref{eq:symp-hodge}, \eqref{eq:det-t-1} and 
\eqref{eq:transf-theta} we obtain

\begin{thm}
    \label{thm:transf-Phi}
    Let $M$ be a string manifold  with
    $\dim M=2n$; if $\Lambda \in \Sp(2g,\bbZ)$ is represented by $f
    \in \Diff(\Sigma)$, then under the action of $f$ we have
\begin{subequations}\label{eq:Phi-transf}
    \begin{align}
    \Phi' &= \varepsilon(\Lambda)^{2n}\, e^{-3ni
    \xi(0,\Lambda)}\; e^{i n \pi \phi(\delta,\Lambda)}\;
    \det(C\Omega+D)^{n}\; \Phi\,,
    \label{eq:Phi-transf-a} \\
    &= \varepsilon(\Lambda)^{2n}\, e^{-3ni
    \xi(\delta,\Lambda)}\; e^{-2 i n \pi \phi(\delta,\Lambda)}\;
    \det(C\Omega+D)^{n}\; \Phi\,.
    \label{eq:Phi-transf-b}
\end{align}
\end{subequations}
\end{thm}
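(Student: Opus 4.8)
The plan is to read the transformation of $\Phi$ directly off its closed form \eqref{eq:def-Phi-1}, in which $\Phi$ is a product of three factors whose individual modular behavior is already recorded in the paper: the Hodge factor $(\det H)^{-n}$, the ``$\eta$-type'' factor $\bigl((\det\partial)(\spinbasic)/\overline{\vartheta(0;\Omega)}\bigr)^{3n}$, and the integral $\int_{M}\prod_{r}\zanalog/\vartheta[\kappa]$, the overall constant $1/(4\pi)^{n}$ being inert. First I would compute each factor's transform under $\Lambda$ separately, then multiply, taking care that the several powers of $\det(C\Omega+D)$ and $\det(C\bar{\Omega}+D)$ assemble into the single holomorphic weight claimed.

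The three inputs are as follows. For the Hodge factor, \eqref{eq:symp-hodge} gives $\det H(\Omega')=\lvert\det(C\Omega+D)\rvert^{-2}\det H(\Omega)$, so $(\det H)^{-n}$ picks up $\lvert\det(C\Omega+D)\rvert^{2n}=\det(C\Omega+D)^{n}\det(C\bar{\Omega}+D)^{n}$. For the $\eta$-type factor, the substitution \eqref{eq:rep-2} together with \eqref{eq:detp-section} identifies $(\det\partial)(\spinbasic)/\overline{\vartheta(0;\Omega)}$, up to the constant $\pi$, with $\detp\partial_{\delta}/\Nb_{\delta}^{2}$, whose transform is \eqref{eq:detp-transf-1}; raising to the $3n$th power contributes $\varepsilon(\Lambda)^{3n}\,e^{-3ni\xi(0,\Lambda)}\,\det(C\bar{\Omega}+D)^{-3n/2}$. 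For the integral, Theorem~\ref{thm:transf-integral} supplies $\varepsilon(\Lambda)^{-n}\,e^{+in\pi\phi(\delta,\Lambda)}\,\det(C\bar{\Omega}+D)^{n/2}$.

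Next I would multiply the three transforms. The $\varepsilon$ exponents combine as $3n-n=2n$; the explicit phases combine as $e^{-3ni\xi(0,\Lambda)}e^{+in\pi\phi(\delta,\Lambda)}$; and the determinant factors collect as
\[
\det(C\Omega+D)^{n}\,\det(C\bar{\Omega}+D)^{\,n-3n/2+n/2}=\det(C\Omega+D)^{n},
\]
because the $\det(C\bar{\Omega}+D)$ exponent vanishes. This produces \eqref{eq:Phi-transf-a}. The alternate form \eqref{eq:Phi-transf-b} then follows by rewriting $e^{-3ni\xi(0,\Lambda)}$ via \eqref{eq:xi-relations}, namely $e^{i\xi(0,\Lambda)}=e^{i\xi(\delta,\Lambda)}e^{i\pi\phi(\delta,\Lambda)}$, whereupon the two $\phi$-phases merge into $e^{-2in\pi\phi(\delta,\Lambda)}$.

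The delicate step, as I expect it, is exactly the cancellation of the anti-holomorphic weight: each factor separately carries $\det(C\bar{\Omega}+D)$ dependence (and the Hodge factor carries the mixed $\lvert\det(C\Omega+D)\rvert^{2}$ dependence), and only after combining all three do the $\det(C\bar{\Omega}+D)$ powers cancel to leave a purely holomorphic $\det(C\Omega+D)^{n}$. This is the quantitative content of the ``as holomorphic as possible'' reorganization effected by the Belavin--Knizhnik substitutions \eqref{eq:rep-1}--\eqref{eq:rep-2}; the main thing to get right is the bookkeeping of the half-integer exponents $-3n/2$ and $+n/2$ against the $+n$ coming from $\lvert\det(C\Omega+D)\rvert^{2}$, together with the consistent use of $\overline{\det(C\Omega+D)}=\det(C\bar{\Omega}+D)$, which is valid since $C$ and $D$ are integral.
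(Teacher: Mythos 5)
Your proposal is correct and follows essentially the same route as the paper, which obtains the theorem by combining \eqref{eq:symp-hodge}, \eqref{eq:det-t-1} and \eqref{eq:transf-theta}; your use of \eqref{eq:detp-transf-1} for the middle factor is equivalent to the paper's use of the conjugate of \eqref{eq:det-t-1} via the identification \eqref{eq:detp-section}. The bookkeeping of the $\det(C\bar{\Omega}+D)$ exponents $n-3n/2+n/2=0$ and the passage from \eqref{eq:Phi-transf-a} to \eqref{eq:Phi-transf-b} via \eqref{eq:xi-relations} are both carried out correctly.
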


\begin{cor}
    \label{thm:transf-Zsc}
    Let $M$ be a string manifold  with
    $\dim M=2n$; if $\Lambda \in \Sp(2g,\bbZ)$ is represented by $f
    \in \Diff(\Sigma)$, then under the action of $f$ we have
\begin{subequations}\label{eq:Zsc-transf}
    \begin{align}
    \Zsc' &= e^{-in\xi(\delta,\Lambda)}\; \Zsc\,,
    \label{eq:Zsc-transf-a} \\
    &= e^{-in \xi(0,\Lambda)}\; e^{ i n \pi \phi(\delta,\Lambda)}\;
    \Zsc\,.
    \label{eq:Zsc-transf-b}
\end{align}
\end{subequations}
\end{cor}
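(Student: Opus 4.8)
The plan is to read the transformation law of $\Zsc$ straight off its definition \eqref{eq:Zsc-2} by multiplying the transformation of the prefactor $\bigl((\det\dbar)(\spinbasic)/\vartheta(0;\Omega)\bigr)^{2n}$ by the transformation of $\Phi$ recorded in Theorem~\ref{thm:transf-Phi}. The reason for having split off $\Phi$ in the first place is that all of the modular weight $\det(C\Omega+D)$ and the phase $\varepsilon(\Lambda)$ are carried by the prefactor, so that these should cancel against the corresponding factors in $\Phi'$ and leave $\Zsc$ transforming by a pure phase.

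First I would transform the prefactor. Solving \eqref{eq:det-t-1} for the primed quantity and raising to the $2n$-th power gives
\begin{equation*}
    \left(\frac{(\det\dbar)(\spinbasic')}{\vartheta(0;\Omega')}\right)^{2n} = e^{2ni\xi(0,\Lambda)}\,\varepsilon(\Lambda)^{-2n}\,\det(C\Omega+D)^{-n}\,\left(\frac{(\det\dbar)(\spinbasic)}{\vartheta(0;\Omega)}\right)^{2n}\,.
\end{equation*}
Multiplying this by the first form \eqref{eq:Phi-transf-a} of $\Phi'$, the factors $\varepsilon(\Lambda)^{\mp 2n}$ cancel, the weights $\det(C\Omega+D)^{\mp n}$ cancel, and the exponentials combine as $e^{2ni\xi(0,\Lambda)}e^{-3ni\xi(0,\Lambda)}=e^{-ni\xi(0,\Lambda)}$, so that only the phase $e^{in\pi\phi(\delta,\Lambda)}$ survives. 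This is exactly \eqref{eq:Zsc-transf-b}.

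To pass to the equivalent form \eqref{eq:Zsc-transf-a} I would substitute the relation \eqref{eq:xi-relations}, that is $e^{-ni\xi(0,\Lambda)}=e^{-ni\xi(\delta,\Lambda)}e^{-ni\pi\phi(\delta,\Lambda)}$; the factor $e^{-ni\pi\phi(\delta,\Lambda)}$ it produces cancels the $e^{+in\pi\phi(\delta,\Lambda)}$ already present, leaving $\Zsc'=e^{-in\xi(\delta,\Lambda)}\Zsc$. Alternatively one can start from the second form \eqref{eq:Phi-transf-b} of $\Phi'$ and run the same cancellation. The whole argument is bookkeeping, so the only place to take real care is the matching of the modular weights: the prefactor contributes $\det(C\Omega+D)^{-n}$ while $\Phi'$ contributes $\det(C\Omega+D)^{+n}$, and these must cancel exactly so that $\Zsc$ transforms with zero weight and only a phase. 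One must likewise apply \eqref{eq:det-t-1} in the correct direction and keep the primed/unprimed conventions ($\Omega\mapsto\Omega'$, $\spinbasic\mapsto\spinbasic'$) consistent throughout. Once those factors are tracked the two displayed forms follow at once, and since $\xi(\delta,\Lambda)$ depends on $f$ only through $\Lambda$, this is precisely the statement that $\Zsc$ is a modular function with multiplier.
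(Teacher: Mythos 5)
Your proposal is correct and follows essentially the same route the paper intends: the corollary is meant to be read off from the definition \eqref{eq:Zsc-2} by combining the transformation of the prefactor obtained from \eqref{eq:det-t-1} with Theorem~\ref{thm:transf-Phi}, after which the $\varepsilon(\Lambda)^{\pm 2n}$ and $\det(C\Omega+D)^{\pm n}$ factors cancel and \eqref{eq:xi-relations} converts between the two displayed forms. Your bookkeeping of the exponents ($e^{2ni\xi(0,\Lambda)}e^{-3ni\xi(0,\Lambda)}=e^{-ni\xi(0,\Lambda)}$) matches the paper's result exactly.
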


If $\Lambda \in \Gamma_{1,2}$ then for a semi-characteristic
$\left[\begin{smallmatrix} a \\ b \end{smallmatrix}\right]$ we have
that $\phi(a,b,\Lambda) = \frac{1}{2} \bmod \bbZ$.  If $\dim M = 2n =
4k$ then $e^{-2 i n \pi \phi(\delta,\Lambda)} =1$, and the spin
structure $\spinbasic$ maps onto the preferred spin structure $\spinbasic'$.
According to Igusa~\cite[p.~181]{Igusa:book}, $\varepsilon(\Lambda)^{2}$ is a
character of $\Gamma_{1,2}$.  Note that $\varepsilon(\Lambda)^{2n} =
\varepsilon(\Lambda)^{4k} = \pm 1$.

\begin{thm}
    \label{thm:transf-Phi-G12}
    Let $M$ be a string manifold  with
    $\dim M=2n=4k$; if $\Lambda \in \Gamma_{1,2} \subset \Sp(2g,\bbZ)$
    is represented by $f \in \Diff(\Sigma)$, then under the action of
    $f$ we have
    \begin{subequations}\label{eq:Phi-transf-G12}
	\begin{align}
	\Phi' &= \varepsilon(\Lambda)^{2n}\, e^{-3ni
	\xi(0,\Lambda)}\; e^{i n \pi \phi(\delta,\Lambda)}\;
	\det(C\Omega+D)^{n}\; \Phi\,,
	\label{eq:Phi-transf-a-G12} \\
	&= \varepsilon(\Lambda)^{2n}\, e^{-3ni
	\xi(\delta,\Lambda)}\;
	\det(C\Omega+D)^{n}\; \Phi\,.
	\label{eq:Phi-transf-b-G12}
    \end{align}
    \end{subequations}
    Note that $\varepsilon(\Lambda)^{2n} = \pm 1$ and $e^{i n \pi
    \phi(\delta,\Lambda)} = \pm 1$.
\end{thm}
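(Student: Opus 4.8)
The plan is to obtain this as a near-immediate specialization of Theorem~\ref{thm:transf-Phi}, whose two transformation laws \eqref{eq:Phi-transf-a} and \eqref{eq:Phi-transf-b} hold for \emph{every} $\Lambda \in \Sp(2g,\bbZ)$ and hence, since $\Gamma_{1,2} \subset \Sp(2g,\bbZ)$, for every $\Lambda \in \Gamma_{1,2}$. Indeed \eqref{eq:Phi-transf-a-G12} is \eqref{eq:Phi-transf-a} verbatim, so no work is required for the first line; the entire content of the theorem lies in showing that, under the two standing hypotheses $\Lambda \in \Gamma_{1,2}$ and $n = 2k$, the second line collapses to \eqref{eq:Phi-transf-b-G12}. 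Geometrically, the reason $\Gamma_{1,2}$ is the clean case is that here $F_{t}$ is trivial and $\spinbasic$ maps onto $\spinbasic'$, as recorded in Section~\ref{sec:gam-2-1}.

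Comparing \eqref{eq:Phi-transf-b-G12} with \eqref{eq:Phi-transf-b}, the sole discrepancy is the factor $e^{-2in\pi\phi(\delta,\Lambda)}$ present in the latter and absent in the former, so I would first reduce the theorem to the single assertion $e^{-2in\pi\phi(\delta,\Lambda)} = 1$. For this I invoke the fact, stated just before the theorem, that $\phi(\delta,\Lambda) \equiv \tfrac12 \pmod{\bbZ}$ for $\Lambda \in \Gamma_{1,2}$ with $\delta$ the odd semi-characteristic appearing in $\Zsc$. Writing $\phi(\delta,\Lambda) = m + \tfrac12$ with $m \in \bbZ$ and using $n = 2k$ gives $e^{-2in\pi\phi(\delta,\Lambda)} = e^{-2in\pi m}\,e^{-in\pi} = (-1)^{n} = (-1)^{2k} = 1$, which kills the offending factor and turns \eqref{eq:Phi-transf-b} into \eqref{eq:Phi-transf-b-G12}.

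The one genuinely computational input is the congruence $\phi(\delta,\Lambda) \equiv \tfrac12 \pmod{\bbZ}$, which I cite as given; were a self-contained derivation wanted, I would extract it from the definition \eqref{eq:symp-phi} together with the defining congruences \eqref{eq:Gamma-2-1-2} of $\Gamma_{1,2}$, namely $(AB^{t})_{d} \equiv (CD^{t})_{d} \equiv 0$ and hence, via \eqref{eq:magic}, $(A^{t}C)_{d} \equiv (B^{t}D)_{d} \equiv 0 \pmod 2$. Modulo $\bbZ$ the bracketed term $(aD^{t} - bC^{t})(AB^{t})_{d}$ drops out, being an even integer vector dotted with a half-integer vector; the surviving quadratic terms $aD^{t}Ba + bC^{t}Ab - 2aB^{t}Cb$ are then reduced using the symmetry of $D^{t}B$ and $C^{t}A$ from \eqref{eq:symp-constraints-0} and the evenness of their diagonals, with the answer pinned to $\tfrac12$ by the oddness invariant $4\,a\cdot b \equiv 1 \pmod 2$ of $\delta$. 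I expect the main obstacle to be precisely this half-integer parity bookkeeping: keeping straight which of the several bilinear contributions are integral and which supply the surviving $\tfrac12$.

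Finally I would verify the two parenthetical $\pm1$ assertions. Since $\varepsilon \colon \Sp(2g,\bbZ) \to \bbZ_{8}$, we have $\varepsilon(\Lambda)^{4} \in \{\pm1\}$, whence $\varepsilon(\Lambda)^{2n} = (\varepsilon(\Lambda)^{4})^{k} = \pm1$; and from $\phi(\delta,\Lambda) = m + \tfrac12$ with $n = 2k$ one gets $e^{in\pi\phi(\delta,\Lambda)} = e^{in\pi m}\,e^{in\pi/2} = e^{ik\pi} = (-1)^{k} = \pm1$. This settles both \eqref{eq:Phi-transf-a-G12} and \eqref{eq:Phi-transf-b-G12} and completes the specialization.
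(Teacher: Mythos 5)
Your proposal is correct and follows the same route the paper takes: Theorem~\ref{thm:transf-Phi-G12} is obtained by specializing Theorem~\ref{thm:transf-Phi} to $\Lambda\in\Gamma_{1,2}$ and $n=2k$, using the facts recorded in the paragraph preceding the theorem (in particular $\phi(\delta,\Lambda)\equiv\tfrac12 \pmod{\bbZ}$, which kills the factor $e^{-2in\pi\phi(\delta,\Lambda)}$, and $\varepsilon(\Lambda)\in\bbZ_{8}$ for the $\pm1$ assertions). Your optional sketch of deriving the congruence on $\phi$ from \eqref{eq:symp-phi} and \eqref{eq:Gamma-2-1-2} goes beyond what the paper writes out, but it is not needed since the paper asserts that congruence directly.
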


If $\dim M = 2n = 8l$ then $e^{in\pi \phi(\delta,\Lambda)}=1$,
$e^{in\pi \xi(\delta,\Lambda)} = e^{in\pi \xi(0,\Lambda)}$, and 
$\varepsilon(\Lambda)^{2n}=1$; and we find
\begin{thm}
    \label{thm:transf-Phi-G12-8}
    Let $M$ be a string manifold  with
    $\dim M=2n=8l$; if $\Lambda \in \Gamma_{1,2} \subset \Sp(2g,\bbZ)$
    is represented by $f \in \Diff(\Sigma)$, then under the action of
    $f$ we have
    \begin{align}
	\Phi' &= e^{-3n i \xi(0,\Lambda)}\; \det(C\Omega+D)^{n}\;
	\Phi\,,
	\label{eq:Phi-transf-b-G12-8} \\
	\Zsc' & = e^{-in\xi(0,\Lambda)}\; \Zsc \,.
	\label{eq:Zsc-transf-b-G12-8}
    \end{align}
    The transformation laws above do not depend on the choice of spin 
    structure $\delta$.
\end{thm}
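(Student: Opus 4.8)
The plan is to obtain Theorem~\ref{thm:transf-Phi-G12-8} as a specialization of the $\Gamma_{1,2}$ transformation laws already in hand, namely \eqref{eq:Phi-transf-a-G12} of Theorem~\ref{thm:transf-Phi-G12} and \eqref{eq:Zsc-transf-b} of Corollary~\ref{thm:transf-Zsc}, to the dimension constraint $2n = 8l$. Both of those formulas carry two $\delta$-dependent prefactors, $\varepsilon(\Lambda)^{2n}$ and $e^{in\pi\phi(\delta,\Lambda)}$, and the whole point is that when $n$ is divisible by $4$ these two factors each collapse to $1$. First I would record $n = 4l$ and then treat the two factors in turn.

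For the phase $e^{in\pi\phi(\delta,\Lambda)}$ I would invoke the half-integrality $\phi(\delta,\Lambda) \equiv \tfrac12 \bmod \bbZ$, valid for every semi-characteristic $\delta$ whenever $\Lambda \in \Gamma_{1,2}$; with $n = 4l$ this gives $e^{in\pi\phi(\delta,\Lambda)} = e^{2\pi i l} = 1$. For $\varepsilon(\Lambda)^{2n}$ I would use that $\varepsilon$ takes values in $\bbZ_{8}$, so $\varepsilon(\Lambda)^{2n} = \varepsilon(\Lambda)^{8l} = (\varepsilon(\Lambda)^{8})^{l} = 1$. I expect the genuinely delicate ingredient to be the half-integrality statement itself: it is the single arithmetic fact that simultaneously kills the $\delta$-dependence and fixes the weight, and verifying it directly would mean expanding \eqref{eq:symp-phi} and reducing modulo $2$ using the $\Gamma_{1,2}$ congruences $(AB^{t})_{d} = (CD^{t})_{d} = 0 \bmod 2$ from \eqref{eq:Gamma-2-1-2}, the companion identities \eqref{eq:magic}, and the oddness condition $4\, a\cdot b \equiv 1 \bmod 2$ on $\delta$. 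Since that fact is already established above, here it enters only as a quoted input.

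With both prefactors equal to $1$, substituting into \eqref{eq:Phi-transf-a-G12} leaves $\Phi' = e^{-3ni\xi(0,\Lambda)}\,\det(C\Omega+D)^{n}\,\Phi$, which is \eqref{eq:Phi-transf-b-G12-8}; substituting $e^{in\pi\phi(\delta,\Lambda)} = 1$ into \eqref{eq:Zsc-transf-b} leaves $\Zsc' = e^{-in\xi(0,\Lambda)}\,\Zsc$, which is \eqref{eq:Zsc-transf-b-G12-8} (equivalently, raising \eqref{eq:xi-relations} to the $n$-th power and using $e^{in\pi\phi}=1$ shows $e^{in\xi(\delta,\Lambda)} = e^{in\xi(0,\Lambda)}$, so \eqref{eq:Zsc-transf-a} may be used instead). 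Finally, to see that the laws are independent of the choice of spin structure I would note that the surviving right-hand sides involve only $\det(C\Omega+D)$ and $\xi(0,\Lambda)$; the former is manifestly $\delta$-free, while by Theorem~\ref{thm:hol-12} the latter equals $\hol(\Lambda)^{-1}$, a holonomy attached to the trivial bundle that depends on neither $\delta$ nor the representative $f$. Thus no trace of $\delta$ survives, completing the argument.
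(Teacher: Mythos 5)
Your proposal is correct and follows essentially the same route as the paper: the paper also obtains Theorem~\ref{thm:transf-Phi-G12-8} by specializing the $\Gamma_{1,2}$ results, noting that for $2n=8l$ one has $e^{in\pi\phi(\delta,\Lambda)}=1$ (from $\phi(\delta,\Lambda)=\tfrac12 \bmod \bbZ$), $\varepsilon(\Lambda)^{2n}=\varepsilon(\Lambda)^{8l}=1$ (since $\varepsilon$ is valued in eighth roots of unity), and hence $e^{in\xi(\delta,\Lambda)}=e^{in\xi(0,\Lambda)}$ via \eqref{eq:xi-relations}. Your additional remarks on the $\delta$-independence of the surviving factors via Theorem~\ref{thm:hol-12} are consistent with the paper's reasoning.
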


\begin{remark}
    We interpret \eqref{eq:Zsc-transf-b-G12-8} as saying that $\Zsc$ 
    is a modular function with multiplier $t_{\Lambda}^{-n}$; see 
    Remark~\ref{rem:torus} for notation.
\end{remark}
\begin{remark}
    We interpret \eqref{eq:Phi-transf-b-G12-8} as saying that $\Phi$ 
    is a modular form with weight $n$ and with multiplier 
    $t_{\Lambda}^{-3n}$.
\end{remark}

\section*{Acknowledgments}
\label{sec:acknowledgments}%
\addcontentsline{toc}{section}{Acknowledgments}%

We wish to thank M.J. Hopkins for introducing us to quarfs and their
relation to spin structures, for explaining the various cobordism
theories, and for encouragement in completing this project.  The work
of OA was supported in part by the National Science Foundation under
grants PHY-0554821 and PHY-0854366.  The work of IMS was supported by
two DARPA grants through the Air Force Office of Scientific Research
(AFOSR): grant numbers FA9550-07-1-0555 and HR0011-10-1-0054.

\appendix

\section{Isomorphism between $\Lhz$ and $\Lho$}
\label{sec:conventions}

On $\Sigma$ we have local complex coordinates $z = x+ iy$.  In
isothermal coordinates the metric is $ds^{2} = g_{z\zbar} (dz \otimes
d\zbar + d\zbar \otimes dz)$.  The volume element is $\sqrt{\det g}\;
dx \wedge dy$.  We note that $d\zbar \wedge dz = 2i\, dx \wedge dy$.
If $\omega_{i}$ are the standard abelian differentials then
$\int_{\Sigma} \bar{\omega}_{i} \wedge \omega_{j} = (\Omega -
\bar{\Omega})_{ij}$.

If $K$ is the canonical bundle of $\Sigma$ and if $\xi (dz)^{n}$ and 
$\eta (dz)^{n}$ are section of $K^{n}$ then the inner product is 
defined by
\begin{equation}
    \langle \xi (dz)^{n} , \eta (dz)^{n} \rangle = \int_{\Sigma}
    \left( g^{\zbar z}\right)^{n} \; \bar{\xi} \eta \; \sqrt{\det g}\; dx
    \wedge dy\,.
    \label{eq:def-ip-Kn}
\end{equation}

The explicit conjugate linear isomorphism between $\Lho$ and $\Lhz$ is
constructed as follows.  Let $h$ be a section of $\Lhz$ then the
corresponding section $\psi$ of $\Lho$ is given by
\begin{equation}
    \psi =   (g_{z\zbar}\; dz \otimes d\zbar)^{1/2} \; \bar{h}\,.
    \label{eq:def-iso-conj}
\end{equation}

\section{Identifying a Holomorphic Cross Section of $\cL$ with a $\vartheta$ Function}
\label{sec:app-kappa}

Our purpose is to identify a holomorphic cross section (unique up to
scale) of a determinant line bundle with a theta function.  We also
set the notation for the algebraic geometry we need and provide an
introduction for non-experts.

In this appendix we make the identification using facts about Riemann 
surfaces well known to algebraic geometers.

Let $J_r(\Sigma)$ denote the set of holomorphic line bundles over 
$\Sigma$ with $\cclass_1$ equal to $r$. Of particular interest to us 
is $J_0(\Sigma)$ (our $J(\Sigma)$), the set of flat line bundles, 
which we can identify with $\widetilde{\pi_1(\Sigma)} = \{ 
\chi:\pi_1(\Sigma) \to S^1\} \simeq \widetilde{H_1(\Sigma,\bbZ)}
\simeq H^1(\Sigma,\bbR)/H^1(\Sigma,\bbZ)$. The last isomorphism can 
be described in terms of (real) closed $1$-forms $\omega$: let 
$\chi_\omega(\gamma) = \exp\left(2\pi i \int_\gamma \omega\right)$ 
for $\gamma$ a closed path starting at $P_0\in\Sigma$. Clearly 
$\chi_\omega$ is a homomorphism $\pi_1(\Sigma)\to S^1$ and depends 
only on the cohomology class of $\omega$. It is easy to see that 
$\omega \mapsto \chi_\omega$ induces an isomorphism 
$H^1(\Sigma,\bbR)/H^1(\Sigma,\bbZ) \to \widetilde{\pi_1(\Sigma)}$. 
Here we take $\pi_1(\Sigma)$ as the closed (piecewise smooth) paths 
starting at $P_0$ with equivalence relationship given by homotopy.

When the real surface $\Sigma$ has a complex structure, then
$H^1(\Sigma,\bbC) = H^1(\Sigma,\bbR)\otimes\bbC \simeq \Hoz \oplus
\Hzo$.  Taking the real part induces an isomorphism of $\Hzo$ (or
$\Hoz$) with $H^1(\Sigma,\bbR)$.  Let $\real: \Hzo \to
H^1(\Sigma,\bbR)$ be this isomorphism, and let $\widetilde{\Hzo} =
(\real)^{-1} H^1(\Sigma,\bbZ)$ so that
$H^1(\Sigma,\bbR)/H^1(\Sigma,\bbZ) \simeq \Hzo/\widetilde{\Hzo}$. 
Since $\Hzo$ is a complex vector space, $\Hzo/\widetilde{\Hzo}$ is a
complex torus $J(\Sigma)$, the jacobian of $\Sigma$.  Our chain of
arguments demonstrates that the jacobian $J_0(\Sigma)$ is isomorphic
to $\Honehat$, the character group of $H_{1}(\Sigma,\bbZ)$. 
Specifically, if $\mu \in \Hzo$, let
\begin{equation}
	\chi_\mu(\gamma) = e^{2\pi i \int_\gamma(\mu 
	+\bar{\mu})/2}\;,
	\label{chi-mu}
\end{equation}
with $\gamma$ a loop with basepoint $P_0$, then $\mu\to \chi_\mu$
induces the isomorphism of $\Hzo/L_\Omega = J(\Sigma)$ with
$\Honehat$.  $L_{\Omega} =\widetilde{\Hzo}$ will be described
differently in the paragraph below.  The covering space of
$J(\Sigma)=J_0(\Sigma)$ is $\Hzo$.

We chose a standard basis $(\bar{\omega}_{1},\ldots,\bar{\omega}_{g})$
of $\Hzo$ obtained from a choice of symplectic basis
$(\ba_{j},\bb_{k})$ of $H_{1}(\Sigma,\bbR)$ so that
$\int_{\ba_{i}}\omega_{j}=\delta_{ij}$.  We remind the reader that the
Riemann period matrix $\Omega_{ij}= \int_{\bb_{i}}\omega_{j}$ with
imaginary part $(\Omega_{ij}-\bar{\Omega}_{ij})/2i$ that is positive
definite and in fact equal to $\langle \omega_{i},\omega_{j}\rangle$.
Then $\omega_i = \sum\alpha_i + \Omega_{ij} \beta_j$ where $\alpha$
and $\beta$ are the harmonic representatives dual to the $\ba$ and
$\bb$ cycles.  We write a point in $\Hzo$ as $(u_j + i
v_j)\bar{\omega}_j$ where $u$ and $v$ are real.  One can easily verify
that $\widetilde{\Hzo}$ is represented by $2\sum(m + \Omega
n)_j(\Omega-\bar{\Omega})_{jk}^{-1}\bar{\omega}_k$ where $m \in\bbZ^g$
and $n\in\bbZ^g${}\footnote{The standard basis $\{\omega_{j}\}$ of
$\Hoz$ identifies $\Hoz$ with $\bbC^{g}$. Algebraic geometers
\cite[p.~143]{Mumford:1} identify $H_{1}(\Sigma,\bbC)$ with $\bbC^{g}$
using the Abel map.  As a consequence the lattice $L_{\Omega} \subset
\bbC^{g}$ is the dual torus to the algebraic geometers' Jacobian
torus.}.

Multiplication $m_L$ by a line bundle $L \in J_r(\Sigma)$ gives an 
isomorphism $m_L : J_0(\Sigma) \to J_r(\Sigma)$. In particular a spin 
structure $\sqrt{K}$, where $K$ is the canonical bundle of $\Sigma$, 
gives $m_{\sqrt{K}}: J_0(\Sigma) \to J_{g-1}(\Sigma)$ with $g$ the 
genus of $\Sigma$. Similarly for $P_0\in\Sigma$ let $\Lpz$ be the 
line bundle with divisor $P_0$ so that $\Lpz \in J_1(\Sigma)$. Then 
$m_{\Lpz^r}: J_0(\Sigma) \to J_r(\Sigma)$ is an isomorphism. The 
complex structure on $J_r$ is chosen such that $m_L$ is holomorphic.

One can construct a Poincar\'{e} line bundle $Q_r$ over 
$J_r(\Sigma)\times\Sigma$ whose restriction to each fiber $\{L\}\times 
\Sigma$ is the line $L\in J_r(\Sigma)$.  The holomorphic line bundle 
$Q_r$ over $J_r(\Sigma)\times\Sigma$ is determined only up to a line 
bundle on $J_r(\Sigma)$ pulled up to $J_r(\Sigma)\times\Sigma$. A 
choice of point $P_0\in\Sigma$ determines $Q_r$ by stipulating that 
$Q_r|_{J_r(\Sigma)\times\{P_0\}} \simeq 1$ on $J_r(\Sigma)$.
In \BEG{Appendix~B} we construct such a $Q_0$ explicitly. 
We can use $(m_{L_{P_{0}}^{g-1}})^{*} Q_{g-1}$ instead.

Let $\dbar\otimes I_{Q_r}$ be the family of $\dbar$ operators 
parametrized by $Q_r$.  Suppose $\cM$ is a holomorphic line bundle on 
$J_r(\Sigma)$ which pulled up to $J_r(\Sigma)\times\Sigma$ is $\cMt$.  
Suppose we have modified our choice of Poincar\'{e} line bundle $Q_r$ 
by $Q_r\otimes\cMt$.  One can show the determinant line bundle of the 
family $\dbar\otimes I_{Q_r\otimes\cMt}$, $\DET(\dbar\otimes 
I_{Q_r\otimes\cMt})$ is isomorphic to $\DET(\dbar\otimes I_{Q_r})
\otimes \cM^{r+1-g}$.  In particular, when $r=g-1$, $\DET(\dbar\otimes 
I_{Q_{g-1}})$ is independent of choice of $Q_{g-1}$.

The choice of $r=g-1$ is special because the index of the operator
$\dbar\otimes I_L$, $L\in Q_{g-1}$, is zero.  Generically the operator
$\dbar\otimes I_L$ is invertible.  Let $\cV = \{ L\in J_{g-1}(\Sigma)
\;|\; \dbar\otimes I_L \quad\mbox{is not invertible}\}$.  $\cV$ is a
variety in $J_{g-1}(\Sigma)$, in fact the divisor of the line bundle
$\DET(\dbar\otimes I_{Q_{g-1}})$.  Of course, $\cV$ is also $\{ L\in
J_{g-1}(\Sigma) \;|\; L \quad\mbox{has a nonzero holomorphic
section}\}$. Another description of $\dbar \otimes I_{Q_{g-1}}$ is 
obtained by choosing a spin structure, a $\sqrt{K}$, which we denote 
by $\Lhz$. $m_{\sqrt{K}}$ maps $\Jacz$ to $J_{g-1}(\Sigma)$ and the 
family becomes a family over $\Jacz$, namely $\dbar: \Lhz \otimes 
F_{\chi} \to \Lho \otimes F_{\chi}$ with $F_{\chi} \in \Jacz$.

We use $m_{L^{g-1}_{P_0}}$ to compare $\DET(\dbar\otimes I_{Q_{g-1}})$ 
with $\DET(\dbar\otimes I_{Q_0})$, the latter our line bundle $\cL$ 
over $J_0(\Sigma)$. The Grothendieck-Riemann-Roch theorem implies that
$$
	\left(m_{L^{g-1}_{P_0}}\right)^*\left(\DET(\dbar \otimes 
	I_{Q_{g-1}})\right)
$$
is isomorphic to $\cL$.  Although it is well known that
$H^0(\cL,\bbC)$ has complex dimension one, i.e., the holomorphic
sections of $\cL$ form a one dimensional subspace \BEG{Appendix~C}.
We now want to identify a properly normalized holomorphic section of
$\cL$ with a $\vartheta$-function.

First identify $\widetilde{J_0(\Sigma)}$, the universal cover of
$J_0(\Sigma)$, with $\bbC^g$ using the basis $\{\bar{\omega}_{j}\}$,
where we have defined the Riemann theta function $\vartheta$ and its
divisor.  Let $\cL_\vartheta$ be the holomorphic line bundle over
$J_0(\Sigma)$ whose divisor pulls up to the divisor of $\vartheta$.
We learn from Riemann surface theory that there exists a spin
structure $\sqrt{K}\in J_{g-1}(\Sigma)$ such that $\cL_\vartheta =
m^*_{\sqrt{K}} \DET(\dbar\otimes I_{Q_{g-1}})$.  Put another way,
$m_{\sqrt{K}}(\mbox{divisor of $\cL_\vartheta$}) = \cV$.  The spin
structure is the one determined by the choice of symplectic basis of
cycles in $H_{1}(\Sigma,\bbR)$. In fact, the spin structure 
$\sqrt{K}$ has quarf the basic quarf $\qbasic$ described in 
Section~\ref{sec:quarf-spin}. See also \cite[pp.~162]{Mumford:1}.

Putting these two facts together gives $\cL \simeq
\left(m^{-1}_{\sqrt{K}} m_{L^{g-1}_{P_0}}\right)^* \cL_\vartheta
\simeq \left( m_{K^{-1/2} L^{g-1}_{P_0}}\right) \cL_\vartheta$.  The
flat line bundle $K^{-1/2} L^{g-1}_{P_0}$ lies in $J_0(\Sigma)$ and is
in fact $\kappa$ where $\kappa$ is the Riemann constant, see
\cite[p.~166]{Mumford:1} or \cite[p.  338]{Griffiths:Harris}.  Hence
$\cL$ is the translate of $\cL_\vartheta$ by $\kappa$.  As a result,
our $\vartheta\left[\begin{smallmatrix} a \\ b
\end{smallmatrix}\right]$ is the translate of $\vartheta$ by
$\kappa$; the characteristic $\left[\begin{smallmatrix} a \\ b
\end{smallmatrix}\right]$ equals $\kappa$.

\begin{lem}
    \label{thm:theta-kap-0}
    $\vartheta[\kappa](0)=0$
\end{lem}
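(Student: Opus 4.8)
The plan is to reduce the statement to the Riemann vanishing theorem in exactly the form assembled in \ref{sec:app-kappa}, and then to exhibit $\kappa$ as the image of an obviously effective divisor. First I would peel off the characteristic. Writing $\kappa = \Omega a + b$, relation \eqref{eq:theta-char} evaluated at $z=0$ gives
\[
    \vartheta[\kappa](0) = e^{\,i\pi\, a\cdot\Omega\cdot a + 2\pi i\, a\cdot b}\;\vartheta(\kappa)\,.
\]
Since the exponential prefactor never vanishes, it suffices to prove $\vartheta(\kappa)=0$, i.e.\ that the point $\kappa$ lies on the zero divisor of the ordinary Riemann theta function.

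Next I would reinterpret the zero set of $\vartheta$ in terms of line bundles, using the identifications already made in \ref{sec:app-kappa}. There we have $\cL_\vartheta = m^{*}_{\sqrt{K}}\DET(\dbar\otimes I_{Q_{g-1}})$, and the divisor of $\DET(\dbar\otimes I_{Q_{g-1}})$ over $\Jacgo$ is precisely $\cV$, the locus of degree-$(g-1)$ line bundles carrying a nonzero holomorphic section. Pulling the divisor of $\cL_\vartheta$ down to $\Jacz$ identifies the zero set of $\vartheta$ with $m^{-1}_{\sqrt{K}}(\cV)$; that is, a point $u\in\Jacz$ is a zero of $\vartheta$ if and only if the line bundle $\sqrt{K}\otimes F_u$ is effective.

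Now I would specialize to $u=\kappa$. The defining property of the Riemann constant recorded in \ref{sec:app-kappa} is that the flat bundle $F_\kappa$ is $K^{-1/2}\Lpz^{g-1}$, so that
\[
    m_{\sqrt{K}}(\kappa) = \sqrt{K}\otimes F_\kappa = \sqrt{K}\otimes K^{-1/2}\Lpz^{g-1} = \Lpz^{g-1},
\]
the line bundle $\mathcal{O}\big((g-1)P_0\big)$ whose divisor is $(g-1)P_0$. This divisor is effective, hence $\Lpz^{g-1}$ has a nonzero holomorphic section and lies in $\cV$. By the previous paragraph $\kappa$ is therefore a zero of $\vartheta$, and the reduction of the first step yields $\vartheta[\kappa](0)=0$.

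The only genuine content here is the identification of the zero divisor of $\vartheta$ with $\cV$, which is the Riemann vanishing theorem; but this is supplied wholesale by \ref{sec:app-kappa}, so what remains is bookkeeping of the two isomorphisms $m_{\sqrt{K}}$ and $m_{\Lpz^{g-1}}$ together with the trivial remark that $(g-1)P_0$ is effective. The one point demanding care is the convention fixing where the Riemann constant sits: the sign and base-point ($P_0$) conventions are exactly what guarantee that $m_{\sqrt{K}}(\kappa)$ lands on $\mathcal{O}\big((g-1)P_0\big)$ rather than on some arbitrary, possibly nonspecial, bundle of degree $g-1$. A nonspecial target would not conclude the argument, whereas $(g-1)P_0$ is manifestly effective, so I would make sure the conventions of \ref{sec:app-kappa} are the ones yielding this image before declaring victory.
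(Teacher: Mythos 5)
Your proof is correct and follows essentially the same route as the paper's: both reduce to the Riemann vanishing theorem as packaged in \ref{sec:app-kappa} and both ultimately rest on the effectivity of the divisor $(g-1)P_{0}$. The only cosmetic difference is that the paper phrases Riemann's theorem as $\Theta = W_{g-1}-\kappa$, observes that the origin lies in $W_{g-1}$ to get $-\kappa\in\Theta$, and then invokes the symmetry of $\Theta$, whereas you evaluate $m_{\sqrt{K}}(\kappa)=\sqrt{K}\otimes K^{-1/2}\Lpz^{g-1}=\Lpz^{g-1}\in\cV$ directly and so never need the symmetry step.
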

\begin{proof}
    This is a consequence of Riemann's Theorem \cite[Corollary 3.6 on
    p.~160]{Mumford:1} or \cite[p.~338]{Griffiths:Harris}.  In
    Mumford's conventions\footnote{Mumford's vector of Riemann
    constants is the negative of Griffiths and Harris' vector of
    Riemann constants.} we have that $\Theta = W_{g-1} - \kappa$.  The
    divisor $W_{g-1}$ is the image of $\Sym^{g-1}(\Sigma)$ under the
    Abel map, \emph{i.e.}, the image in $\Jacz$ of line bundles $L$ of
    Chern class $g-1$ \emph{that have a holomorphic
    section}\footnote{The image under the map $L \to L \otimes
    (L_{P_{0}}^{g-1})^{-1}$}.  Note that the origin $O \in \Jacz$ is
    in $W_{g-1}$ (it is the image of $(P_{0},\dotsc,P_{0})$).
    Riemann's Theorem then tells us that $-\kappa \in \Theta$ but
    $\Theta$ is symmetric therefore $\kappa \in \Theta$.  We have that
    $\vartheta[\pm\kappa](0) \propto \vartheta(\pm\kappa) = 
    \vartheta(\kappa)=0$.
\end{proof}
  
Note that $\vartheta[\kappa](0)=0$ if $\kappa$ is an odd
characteristic.  Mumford \cite[p.~3.82]{Mumford:2}, and Farkas and
Kra~\cite{Farkas:Kra} discuss that in hyperelliptic surfaces, $\kappa$
will be even or odd depending on the genus.  If $\kappa$ is even then
$\vartheta[\kappa](\cdot)$ has a double zero at the origin and this
has implications on the number of zero modes of the Dirac operator.
All genus $2$ surfaces are hyperelliptic and $\kappa$ is an  odd
characteristic.

\subsection{Definition of $\mathfrak{z}_{\kappa}$}
\label{sec:def-z}

The technicalities in constructing the determinant line
bundle are related to the jump in the dimensionality of $\ker D$ at
the origin $O \in\Jacz$: 
\begin{thm}[\BEG{Theorem~B.1}]\label{thm:BEG-B-1}
    Let $\chi \in J(\Sigma)$ be a character and let $\Azo$ be the 
    associated flat connection.  If $\chi = 1$ then $\dim \ker D_0 = 
    g$.  If $\chi \neq 1$ then $\dim \ker D_A = g-1$.
\end{thm}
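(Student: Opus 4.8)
The plan is to recognize $\ker D_{A}$ as a space of holomorphic sections and then apply Riemann--Roch together with Serre duality. First I would note that the Hodge star $i*:\Loo \to \Lzz$ is an isomorphism, so $\ker D_{A} = \ker(\dbar + \Azo)$ inside $\Loz$. A $(1,0)$-form $\omega$ lies in this kernel exactly when it is holomorphic for the holomorphic structure that the flat connection $\Azo$ puts on the trivial bundle, i.e.\ when $\omega$ is a holomorphic section of $K \otimes F_{\chi}$, with $K$ the canonical bundle and $F_{\chi}$ the flat line bundle attached to $\chi$. Thus $\dim \ker D_{A} = h^{0}(K \otimes F_{\chi})$, and the theorem reduces to computing this number. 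It is worth recording that, since $\deg(K \otimes F_{\chi}) = 2g-2$, the analytic index of $D$ equals $(2g-2) - g + 1 = g-1$ for \emph{every} $\chi$; the content of the statement is to locate the jump of $\ker D$ (compensated by a jump of $\coker D$) at the origin of $J(\Sigma)$.

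For the trivial character the answer is immediate and needs no Riemann--Roch: when $\chi = 1$ the bundle $F_{\chi}$ is trivial, so $K \otimes F_{\chi} = K$ and $\ker D_{0} = H^{0}(\Sigma,K)$ is the space of abelian differentials $\omega_{1},\dots,\omega_{g}$, which is $g$-dimensional.

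For $\chi \neq 1$ I would run Riemann--Roch on the degree $2g-2$ bundle $K \otimes F_{\chi}$, giving $h^{0}(K \otimes F_{\chi}) - h^{1}(K \otimes F_{\chi}) = g-1$, and then use Serre duality to rewrite $h^{1}(K \otimes F_{\chi}) = h^{0}\!\left(K \otimes (K \otimes F_{\chi})^{-1}\right) = h^{0}(F_{\chi}^{-1})$. This leaves $\dim \ker D_{A} = g-1 + h^{0}(F_{\chi}^{-1})$, so everything comes down to the vanishing $h^{0}(F_{\chi}^{-1}) = 0$.

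That vanishing is the one step requiring a genuine argument, and it is exactly where the hypothesis $\chi \neq 1$ enters. Here $F_{\chi}^{-1} = F_{\chi^{-1}}$ is again a degree-zero line bundle: any nonzero holomorphic section of a degree-zero bundle has no poles, hence its zero count equals the degree $0$, so it is nowhere vanishing and therefore trivializes the bundle, forcing $\chi^{-1} = 1$ and contradicting $\chi \neq 1$. Hence $h^{0}(F_{\chi}^{-1}) = 0$ and $\dim \ker D_{A} = g-1$. As a consistency check, the trivial case gives $h^{0}(F_{1}^{-1}) = h^{0}(\mathcal{O}) = 1$ and hence $g$, matching the direct count. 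I expect this vanishing to be the only delicate point; the remaining steps are bookkeeping with Riemann--Roch and Serre duality.
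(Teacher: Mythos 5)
Your proof is correct. The paper itself only quotes this statement from BEG (Theorem~B.1) without reproducing the argument, but your route --- identifying $\ker D_{A}$ with $H^{0}(\Sigma, K\otimes F_{\chi})$, applying Riemann--Roch, and using Serre duality together with the vanishing $h^{0}(F_{\chi}^{-1})=0$ for $\chi\neq 1$ (a nonzero section of a degree-zero bundle is nowhere vanishing, hence trivializing) --- is the standard and complete argument, and the convention ambiguity between $F_{\chi}$ and $F_{\chi^{-1}}$ is immaterial to the dimension count.
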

We can exploit this theorem to give a global construction of the
determinant line bundle over $\Jacz$.  Note that $\Index D_{A} =g-1$
thus: (1) if $\chi \neq 1$ then $D_{A}$ is surjective, (2) if $\chi=1$
then the image of $D_{0}$ is $\onep$.  This motivates us to define a
modified family of operators (parametrized by flat connections) by
$\Dtil_{A}: \Loz \to \onep$ where we have ``killed the constants'' in
the range.  Hence, these operators have numerical index $g$ and
$\dim\coker\Dtil_{A}=0$.  Consequently $\ker \Dtil_{A} = g$ and the
kernel of $\Dtil$ is an honest holomorphic vector bundle over $\Jacz$.
The top wedge power of this vector bundle is the determinant line
bundle $\DET \Dtil \to \Jacz$ and it is isomorphic to $\DET D \to
\Jacz$. If we fix the metric then this construction is holomorphic 
because the $\onep$ does not vary as we move over $\Jacz$.

Our formula for $\Zsc$ requires a specific trivialization of the
determinant line bundle $\DET D \to \Jacz$.  Here we briefly review
the detailed construction given in \BEG{Section~7}.

In a small neighborhood $U$ of the origin we introduce the operator
$\cDt: \Loz \to \Lzz$ defined\footnote{Note that the image of $\cDt$ 
is contained in $\onep$ by construction.} by
\begin{equation}
    \cDt\phi = D\phi -i\int_{\Sigma} \Azo \wedge \phi
    \label{eq:def-cDt}
\end{equation}
that has the property that $\dim\ker \cDt =g$ on $U$ and we construct
$\DET \cDt \to U$.  Note that both $\cDt$ and $D$ are operators with
index equal to $g-1$.  Let $V = \Jacz - \{O\}$ then $\dim \ker D =
g-1$ on $V$, and we construct $\DET D \to V$.  On $U \cap V$ we have
that $\ker D \subset \ker \cDt$ and an exact sequence of vector
bundles
\begin{equation*}
    0 \to \ker D \to \ker \cDt \to \mathcal{K} \to 0\,.
\end{equation*}
The holomorphic line bundle $\mathcal{K}$ is used to patch $\DET D$
with $\DET \cDt$ on $U \cap V$, and gives a construction of the
holomorphic determinant line bundle $\DET D \to \Jacz$.

In \BEG{Sections~7~\&~8} we discussed a convenient trivialization of $\DET
D$ over $U$ by trivializing $\DET \cDt$ over $U$.  Note that $\ker \cDt
\rvert_{O} = \Hoz$ and as a consequence of the Hodge theorem we can
choose a basis $(\varpi_{1},\dotsc, \varpi_{g})$ for $\ker\cDt$ on $U$
such that $\varpi_{j} = \omega_{j} + \Upsilon_{j}$ where 
$\Upsilon_{j} \in \Hoz^{\perp} \subset \Loz$. In fact \BEG{eq.~(8.1)} 
shows that if we subject a standard basis $(\omega_{1},\dots,\omega_{g})$ of 
$\Hoz$ to a symplectic change of basis, see eq.~\eqref{eq:symp-omega}, 
then $(\varpi_{1},\dotsc, \varpi_{g})$ transforms the same way.

We know \BEG{Appendix~C} that $\DET D \to \Jacz$ has a
unique holomorphic section $\theta_{\kappa}$ up to scale.
Lemma~\ref{thm:theta-kap-0} tells us that this section vanishes at the
origin.  In the $(\varpi_{1},\dotsc, \varpi_{g})$ trivialization we
write this section as
\begin{equation}
    \theta_{\kappa} = \frac{\zanalog}{\varpi_{1}\wedge \dotsb \wedge 
    \varpi_{g}}
    \label{eq:def-z}
\end{equation}
as explained in \BEG{eq.~(8.10)}.  In the trivialization given by
lifting the line bundle to $\Hzo$, the universal cover of $\Jacz$, the
section is given by the theta functions $\vartheta[\kappa](\cdot)$.
Therefore the ratio $\zanalog(z)/\theta[\kappa](z)$ is just the
holomorphic transition function that takes you from one trivialization
of the line bundle to the other.  In \BEG{eq.~(9.5)} we give a formula
for this transition function:
\begin{equation}
    e^{-\pi i \sum z (\Omega - \bar{\Omega})^{-1} z}\;
    \frac{\zanalog(z)}{\vartheta[\kappa](z)} = (\varpi_{1}\wedge\dotsc
    \wedge \varpi_{g}) \otimes e^{-\int_{0}^{A}(\nutil-\rho)}\;
    (\omega_{1}\wedge \dotsb \wedge \omega_{g})^{-1}\,.
    \label{eq:z-theta}
\end{equation}
Note that the right hand side is holomorphic and non-vanishing as
required for a transition function.  Therefore the ratio on the left
hand side is holomorphic in $U$.  
\begin{remark}\label{rem:z-tranf}
An important consequence of the equation above is that the right hand
side of the ratio is invariant under the symplectic action.
\end{remark}

\section{Genus 1 Modular Transformation Examples}
\label{sec:genus-1}

We derive the string genus using \eqref{eq:def-Zsc-1} and specialize
to genus $1$ using the discussion in \BEG{Remark 3}.  The presentation
here is more complete than \BEG{Remark 3} because we are checking our
modularity transformation results for general genus by applying them
to genus one.  There is a single odd spin structure, the theta
function is the odd theta function, and consequently
\begin{equation*}
    \lim_{z\to 0} \frac{\vartheta(z;\tau)}{z} = \vartheta'(0;\tau) = 
    -2\pi \eta(\tau)^{3}\,.
\end{equation*}
We also have $\omega= dz$, $\mathfrak{z} = - \vartheta'(0;\tau) z$,
and we omit the $\delta$ label on the antiholomorphic spinor since
there is only a single odd spin structure $\hb^{2}=
\overline{\vartheta'(0;\tau)}\; d\zbar$.  The modular transformation
$\tau' = -1/\tau$ gives $z'=z/\tau$.  Some examples:
\begin{equation}
    (\hb')^{2} = \overline{\vartheta'(0;\tau')} d\zbar' \propto
    \overline{\tau^{3/2}\vartheta'(0;\tau)}\; d\zbar/\bar{\tau}
    = \bar{\tau}^{1/2}\;\overline{\vartheta'(0;\tau)}\; d\zbar= 
    \bar{\tau}^{1/2}\; \hb^{2}\,,
    \label{eq:h2-g-1}
\end{equation}
in agreement with the complex conjugate of \eqref{eq:symp-hdelta}.
Next we observe that
\begin{equation*}
    z'(\hb') \tau = \left(\tau'_{2} \;\overline{\vartheta'(0;\tau')}
    \right) \tau \propto \tau (\tau_{2}/\tau\bar{\tau})\;
    \overline{\tau^{3/2}\vartheta(0;\tau)} = \bar{\tau}^{1/2} \left(
    \tau_{2}\; \overline{\vartheta(0;\tau)} \right) = \bar{\tau}^{1/2}
    z(\hb)
\end{equation*}
that concurs with \eqref{eq:symp-hp-h}. The main object of interest is
\begin{equation}
    \int_{M} \prod_{r=1}^{n} \frac{\mathfrak{z} \left(x_{r} z(\hb)
    \right)}{\vartheta\left(x_{r} z(\hb); \tau \right)} \propto
    \int_{M} \prod_{r=1}^{n}
    \frac{\vartheta'(0;\tau)}{\vartheta\left(x_{r} z(\hb); \tau
    \right) / \left(x_{r} z(\hb)\right)}.
    \label{eq:last-nice}
\end{equation}
The last expression is reminiscent of the integral in 
\eqref{eq:def-Zsc-1} in the sense that as $x\to 0$  the 
denominator approaches the numerator. The modular 
transformation properties are:
\begin{align*}
    \int_{M} \prod_{r=1}^{n}
    \frac{\vartheta'(0;\tau')}{\vartheta\left(x_{r} z'(\hb'); \tau'
    \right) / \left(x_{r} z'(\hb')\right)} & \propto \int_{M}
    \prod_{r=1}^{n} \frac{\tau^{3/2}\; \vartheta'(0;\tau)}{\left[
    \tau^{1/2} \vartheta\left(x_{r} \bar{\tau}^{1/2}\, z(\hb); \tau
    \right) \right] / \left(x_{r} \bar{\tau}^{1/2}\, z(\hb)/ \tau
    \right)}\\
     & \propto \int_{M} \prod_{r=1}^{n} \frac{\vartheta'(0;\tau)}{
     \vartheta\left(x_{r} \bar{\tau}^{1/2}\, z(\hb); \tau \right) /
     \left(x_{r} \bar{\tau}^{1/2}\, z(\hb) \right)} \\
     & \propto \bar{\tau}^{n/2} \int_{M} \prod_{r=1}^{n}
     \frac{\vartheta'(0;\tau)}{ \vartheta\left(x_{r} z(\hb); \tau
     \right) / \left(x_{r} z(\hb) \right)}
\end{align*}
in agreement with \eqref{eq:transf-theta}. The other 
terms are
\begin{align}
    \frac{\vol \Sigma \det H}{\det_{\onep} \Delta_{0}} &\propto
    \frac{\tau_{2}^{2}}{\det_{\onep} \Delta_{0}} \propto \left[
    \eta(\tau) \overline{\eta(\tau)}\right]^{-2} 
    \label{eq:det0}\\
    \frac{\detp \partial_{1/2}}{N^{2}} &\propto \frac{\tau_{2}
    \overline{\eta(\tau)}^{2}}{\tau_{2} \overline{\vartheta'(0;
    \tau)}} 
    \propto \frac{1}{\overline{\eta(\tau)}}\,.
    \label{eq:det12}
\end{align}
Abusing notation,  $\det H' \propto \tau_{2}' \propto (\det 
H)/(\tau \bar{\tau})$, 
\begin{equation*}
    \left(\frac{\vol \Sigma \det H}{\det_{\onep} \Delta_{0}}
    \right)_{\tau'} = (\tau \bar{\tau})^{-1} \left(\frac{\vol \Sigma
    \det H}{\det_{\onep} \Delta_{0}} \right)_{\tau}\,,
\end{equation*}
and
\begin{equation}
    \left(\frac{\vol \Sigma}{\det_{\onep} \Delta_{0}} \right)_{\tau'}
    = \left(\frac{\vol \Sigma}{\det_{\onep} \Delta_{0}}
    \right)_{\tau} = \frac{1}{\tau_{2}} \; \left[
    \eta(\tau) \overline{\eta(\tau)}\right]^{-2}\,.
    \label{eq:det0-mod}
\end{equation}
Note that there is a tradeoff between ``holomorphic factorization''
in \eqref{eq:det0} and ``modular invariance'' in \eqref{eq:det0-mod}.
Finally we observe that
\begin{equation}
    \left( \frac{\detp \partial_{1/2}}{N^{2}} \right)_{\tau'} =
    \bar{\tau}^{-1/2}\; 
    \left( \frac{\detp \partial_{1/2}}{N^{2}} \right)_{\tau}\,,
    \label{eq:mod-det-12}
\end{equation}
and
\begin{align*}
    \int_{M} \prod_{r=1}^{n} \frac{\mathfrak{z} \left(x_{r} z(\hb)
    \right)}{\vartheta\left(x_{r} z(\hb); \tau \right)} & \propto
    \int_{M} \prod_{r=1}^{n} \frac{\vartheta'(0;\tau) \left(x_{r}
    \tau_{2} \overline{\vartheta'(0;\tau)} \right)
    }{\vartheta\left(x_{r} \left(\tau_{2}
    \overline{\vartheta'(0;\tau)} \right); \tau \right)} \\
     & \propto \tau_{2}^{n} \, \vartheta'(0;\tau)^{n}
     \overline{\vartheta'(0;\tau)^{n}}\; \int_{M} \prod_{r=1}^{n} \frac{
     x_{r} }{\vartheta\left(x_{r} ; \tau \right)} \\
     & \propto \tau_{2}^{n} \, \eta(\tau)^{3n}
     \overline{\eta(\tau)^{3n}}\; \int_{M} \prod_{r=1}^{n} \frac{
     x_{r} }{\vartheta\left(x_{r} ; \tau \right)}\,.
\end{align*}
The modular transformation properties become
\begin{align*}
	 (\tau'_{2})^{n} \, \eta(\tau')^{3n}
	 \overline{\eta(\tau')^{3n}}\; \int_{M} \prod_{r=1}^{n} \frac{
	 x_{r} }{\vartheta\left(x_{r} ; \tau' \right)} &\propto
	 \left(\frac{\tau_{2}}{\tau \bar{\tau}}\right)^{n} \, 
	 (\tau\bar{\tau})^{3n/2}\eta(\tau)^{3n}
	 \overline{\eta(\tau)^{3n}}\; \int_{M} \prod_{r=1}^{n} \frac{
	 x_{r} }{\vartheta\left(x_{r} ; \tau' \right)} \\
      &\propto
	 (\tau\bar{\tau})^{n/2} \, \tau_{2}^{n} \, 
	 \eta(\tau)^{3n}\; 
	 \overline{\eta(\tau)^{3n}}\; \int_{M} \prod_{r=1}^{n} \frac{
	 x_{r} }{\vartheta\left((x_{r}\tau)/\tau ; -1/\tau \right)} \\
     &\propto
	(\tau\bar{\tau})^{n/2} \, \tau_{2}^{n} \, 
	\eta(\tau)^{3n}\; 
	\overline{\eta(\tau)^{3n}}\; \int_{M} \prod_{r=1}^{n} \frac{
	x_{r} }{\tau^{1/2}\vartheta\left(x_{r}\tau; \tau \right)} 
	\\
      &\propto (\tau\bar{\tau})^{n/2} \, \tau_{2}^{n} \,
      \eta(\tau)^{3n}\; \overline{\eta(\tau)^{3n}}\; \int_{M}
      \prod_{r=1}^{n} \frac{ x_{r}\tau
      }{\tau^{3/2}\vartheta\left(x_{r}\tau; \tau \right)} \\
      &\propto (\tau\bar{\tau})^{n/2} \, \tau_{2}^{n} \,
      \eta(\tau)^{3n}\; \overline{\eta(\tau)^{3n}}\;
      \frac{\tau^{n}}{\tau^{3n/2}} \int_{M} \prod_{r=1}^{n} \frac{
      x_{r} }{\vartheta\left(x_{r}; \tau \right)} \\
      &\propto \bar{\tau}^{n/2} \left( \tau_{2}^{n} \,
      \eta(\tau)^{3n}\; \overline{\eta(\tau)^{3n}}\;
      \int_{M} \prod_{r=1}^{n} \frac{
      x_{r} }{\vartheta\left(x_{r}; \tau \right)} \right),
\end{align*}
in agreement with the abstract result.  In fact
\begin{equation}
    \int_{M} \prod_{r=1}^{n} \frac{\mathfrak{z} \left(x_{r} z(\hb)
    \right)}{\vartheta\left(x_{r} z(\hb); \tau \right)} \propto
    \tau_{2}^{n} \; \overline{\vartheta'(0; \tau)^{n}}\;
    \int_{M} \prod_{r=1}^{n} \frac{ \vartheta'(0;\tau)
    }{\vartheta\left(x_{r}; \tau \right)/x_{r}}
    \label{eq:script-z}
\end{equation}
which is exactly \eqref{eq:last-nice} if we group a factor of
$\tau_{2}\, \overline{\vartheta'(0;\tau)}$ with each $x_{r}$.

The string genus is the product of three terms
\begin{equation*}
    \left(\frac{\vol \Sigma}{\det_{\onep} \Delta_{0}}
    \right)^{n} \times \left( \frac{\detp
    \partial_{1/2}}{N^{2}}\right)^{n} \times \int_{M}
    \prod_{r=1}^{n} \frac{\mathfrak{z} \left(x_{r} z(\hb)
    \right)}{\vartheta\left(x_{r} z(\hb); \tau \right)}\;.
\end{equation*}
Each term has different modularity and holomorphicity properties;
together they conspire to give something holomorphic which changes by
a phase under modular transformations.  There is no clear way of
subgrouping the terms pairwise; there is a tradeoff between
holomorphicity \emph{vs} modularity depending on the grouping.

Note that the string genus is 
\begin{equation}
    \frac{1}{\eta(\tau)^{2n}}\; \int_{M} \prod_{r=1}^{n} \frac{
    \vartheta'(0;\tau) }{\vartheta\left(x_{r}; \tau \right)/x_{r}}\,.
    \label{eq:string-genus-1}
\end{equation}
This should be compared with \eqref{eq:def-Zsc-1}. The fact that in 
$g=1$ the theta function is a function of one variable allowed us to 
scale out a variety of terms, namely all the non-holomorphic ones.

We can rewrite \eqref{eq:string-genus-1} as $\int_{M} 
\hat{s}(M,\tau)$ where $\hat{s}(M,\tau) = 
\hat{a}(M,\tau)/\eta(\tau)^{\dim M/2}$ and
\begin{equation}
    \hat{a}(M,\tau) = \prod_{j=1}^{\dim M/2} 
    \frac{ix_{j}/2\pi}{\sigma(ix_{j}/2\pi,\tau)}\;
    \label{eq:def-ahat}
\end{equation}
and $\sigma$ is the Weierstrass sigma function. $\hat{s}(M,\tau)$ is 
the string genus because $\int_{M} \hat{a}(M,\tau)$ is a modular form 
of weight $\dim M/2$, giving a homomorphism from string cobordism to 
the ring of modular forms.

The string genus is a generalization of Hirzebruch's genuses
\cite[Chapters 1, 2 \& 3]{Hirzebruch:book}  which are
homomorphisms from a cobordism theory ($\MSO^{*}$, $\MU^{*}$,
$\MSpin^{*}$) to the ring of integers using the appropriate power
series $Q(z)$ respectively given by ($L$, \text{Todd}, $\hat{A}$).
The homomorphism being $\int_{M} \prod_{j} Q(x_{j})$.

In the genus $1$ case, the analog of $Q(z)$ is the function of one
variable $(iz/2\pi)/\sigma(iz/2\pi,\tau)$.  When the genus $g>1$, see
Section~\ref{sec:geom-symp-det}, we get a genus from a subring of the
string cobordism ring, $\MString^{*}$, to a subring of the functions
on $\Teichspin(\Sigma)$.

\newpage

\section*{Nomenclature}

\begin{description}
    \item [{$\Met(\Sigma)$}] The space of metrics on
    $\Sigma$ with curvature $-1$

    \item [{$\Teich(\Sigma)$}] The Teichmuller space of
    $\Sigma$

  \item [{$\Teichodd(\Sigma)$}] The odd spin Teichmuller
  space of $\Sigma$

  \item [{$\qbasic$}] The basic quarf

  \item [{$\spin$}] A spin structure

  \item [{$\spinbasic$}] The spin structure associated to
  the basic quarf $\qbasic$
\end{description}


\section*{References}

\def\cprime{$'$}

\end{document}